\newcommand{\E}{{\bf E}}
\newcommand{\SN}{\smallskip \noindent}
\newcommand{\remove}[1]{}
\providecommand{\expect}[1]{{\sf E}\bigl[ #1 \bigr]}
\providecommand{\prob}[1]{{\sf Pr}\bigl[ #1 \bigr]}
\title{A New Approximation Technique for Resource-Allocation Problems}
\author{Barna Saha\thanks{University of Massachusetts Amherst, Amherst, MA 01002.
         Supported in part by NSF CAREER Award CCF-1652303, NSF Award NSF CCF-0728839, and a Google Faculty Award. ({\tt barna@cs.umass.edu})}
        \and Aravind Srinivasan\thanks{Dept.\ of Computer Science and Institute for Advanced Computer Studies,
University of Maryland, College Park, MD 20742.
Supported in part by NSF CCF-1749864, NSF ITR Award CNS-0426683, NSF Awards CNS-0626636, CNS-1010789 and CCF-1422569, and by research awards from Adobe, Inc.  ({\tt srin@cs.umd.edu})}
}
\begin{document}

\maketitle

\begin{abstract}
We develop a rounding method based on random walks in polytopes, which
leads to improved approximation algorithms and integrality gaps for
several assignment problems that arise in resource allocation
and scheduling. In particular, it generalizes the work of Shmoys \& Tardos
on the generalized assignment problem to the setting where some jobs can be
dropped. New concentration bounds for random bipartite matching are developed as well.   \makeatletter{\renewcommand*{\@makefnmark}{}
\footnotetext{A preliminary version of this paper appeared in the conference proceedings of \emph{Innovations in Computer Science} (ICS) 2010.}\makeatother}
\end{abstract}

\begin{keywords}Scheduling; rounding; randomized algorithms; approximation algorithms; integrality gap \end{keywords}

\section{Introduction}
\label{sec:intro}
The ``relax-and-round'' paradigm is a well-known approach in
combinatorial optimization. Given an instance of an optimization problem,
we \emph{enlarge} the set of feasible solutions $I$ to some
set $I' \supset I$ -- often a linear-programming (LP) relaxation
of the problem; we then map an (efficiently computed, optimal or
near-optimal)
solution $x^* \in I'$ to some ``nearby'' $x \in I$ and prove that
$x$ is near-optimal in $I$. This second ``rounding'' step is often
a crucial ingredient, and many general techniques have been developed
for it. In this work, we present a new rounding methodology
which leads to several improved approximation algorithms in scheduling,
as well as new concentration-of-measure results.

We start with background on (randomized) rounding and a fundamental
scheduling problem, before describing our contribution.

\subsection{Dependent rounding, iterative rounding, and scheduling}
Recall that in randomized rounding, we use randomization to map
$x^* = (x_1^*, x_2^*, \ldots, x_n^*) \in [0,1]^n$ back to some
$x = (x_1, x_2, \ldots, x_n) \in \{0,1\}^n$ \cite{raghavan-thompson}.
Typically, we choose a value $\alpha$ that is problem-specific,
and -- \emph{independently} for each $i$ -- define $x_i$ to be $1$ with
probability $\alpha x_i^*$, and to be $0$ with the complementary
probability of $1 - \alpha x_i^*$. Independence can, however, lead to
noticeable deviations from the mean for random variables that are
\emph{required} to be very close to (or even be equal to) their mean.
A fruitful idea developed in
\cite{srin:level-sets,gkps:dep-round,kmps:unified-sched} is to carefully
introduce \emph{dependencies} into the rounding process: in particular, some
sums of random variables are held fixed with probability one, while still
retaining randomness in the individual variables and guaranteeing certain
types of negative-correlation properties among them. See \cite{AS} for a
related deterministic approach that precedes these works. These
dependent-rounding approaches lead to numerous
improved approximation algorithms in scheduling, packet-routing and in several other problems 
in combinatorial optimization \cite{AS,srin:level-sets,gkps:dep-round,kmps:unified-sched, DBLP:conf/focs/ChekuriVZ10, chekuri:soda11}.

Iterative-relaxation methods based on Jain's seminal work on iterative rounding \cite{jain:comb} have been another key area of active research. In this framework, the rounding starts by computing a basic feasible solution of an LP relaxation; once the constraint matrix has a unique solution, some constraints are dropped to relax the polytope and the LP is re-solved. 
(Jain's work was in the realm of iterative \emph{approximation} algorithms; such rank-based arguments have been used in
other combinatorial-optimization contexts earlier; see, e.g., \cite{klrtvv}.) This method has been successfully used in network design, leading to fundamental works on minimum bounded degree steiner survivable network design  \cite{fleischer:01,singh:stoc07,conf/stoc/LauNSS07} and other combinatorial optimization problems
\cite{DBLP:journals/mp/BansalKKNP13,Zenklusen:2012}. See \cite{lau-ravi-singh:book} for a
comprehensive coverage of this general approach. 

We generalize the methods of dependent rounding and iterative relaxation, via a type of random walk
toward a vertex of the underlying polytope that we outline next. 
We then
present several applications in scheduling and bipartite matching through
problem-specific specializations of this approach, as well as new
concentration bounds.

The rounding approaches of \cite{klrtvv,AS,srin:level-sets,gkps:dep-round}
are generalized to linear systems as follows
in \cite{kmps:unified-sched}. Suppose we have an $n$-dimensional constraint system
$Ax \leq b$ with the additional constraints that $x \in [0,1]^n$. This will
often be an LP-relaxation, which we
aim to round to some $y \in \{0,1\}^n$ such that certain constraints in
``$Ay \leq b$'' hold with probability one, while the rest are violated ``a
little'' (with high probability). Given some feasible
$x \in [0,1]^n$, the rounding approach of \cite{kmps:unified-sched} is as
follows. First,
we assume without loss of generality that $x \in (0,1)^n$: those $x_j$ that
get rounded to $0$ or $1$ at some point, are held fixed from then on. Next,
we ``judiciously'' drop some of the constraints in ``$Ax \leq b$'' until the
number of constraints becomes smaller than $n$, thus making the system
linearly-dependent -- leading to the efficient computation of an
$r \in \Re^n$ that is in the nullspace of this reduced system. We then compute
positive scalars $\alpha$ and $\beta$ such that $x_1 : = x + \alpha r$ and
$x_2 : = x - \beta r$ both lie in $[0,1]^n$, and both
have at least one component lying in $\{0,1\}$; we then update $x$ to a
random $Y$ as: $Y := x_1$ with probability $\beta/(\alpha + \beta)$, and
$Y := x_2$ with the complementary probability $\alpha/(\alpha + \beta)$. Thus
we have rounded at least one further component of $x$, and also have the
useful property that for all $j$, $\E[Y_j] = x_j$. Different ways of
conducting the ``judicious'' reduction lead to a variety of
improved scheduling algorithms in \cite{kmps:unified-sched}.
The setting of \cite{srin:level-sets,gkps:dep-round} on
bipartite $b$-matchings can be interpreted in this framework.

\subsection{Our contributions}
\label{sec:contribs}
We further generalize the above-sketched approach of
\cite{kmps:unified-sched}. Suppose we are given a polytope $\mathcal{P}$ in
$n$ dimensions, and a \emph{non-vertex} point $x$ belonging to $\mathcal{P}$.
An appropriate basic-feasible solution will of course lead us to a vertex
of $\mathcal{P}$, but we approach (not necessarily reach) a
vertex of $\mathcal{P}$ by a random walk as follows.
Let $\mathcal{C}$ denote the set of constraints defining $\mathcal{P}$
which are satisfied \emph{tightly} (i.e., with equality) by $x$. Then,
note that there is a non-empty linear subspace $S$ of $\Re^n$ such that for
any nonzero $r \in S$, we can travel up to some strictly-positive distance
$f(r)$ along $r$ starting
from $x$, while staying in $\mathcal{P}$ and
\emph{continuing} to satisfy all
constraints in $\mathcal{C}$ tightly.
Our broad approach to conduct a
random move $Y := x + R$ by choosing an appropriately random $R$ from $S$,
such that the property ``$\E[Y_j] = x_j$'' of the previous paragraph still
holds. In particular, let $\textbf{RandMove}(x,\mathcal{P})$ -- or
simply $\textbf{RandMove}(x)$ if $\mathcal{P}$ is understood --
be as follows. Choose a nonzero $r \in S$ arbitrarily, and set
$Y := x + f(r) r$ with probability $\frac{f(-r)}{f(r) + f(-r)}$, and
$Y := x - f(-r) r$ with the complementary probability of $\frac{f(r)}{f(r) + f(-r)}$.
Note that if we repeat $\textbf{RandMove}$, we obtain a random walk
that finally
leads us to a vertex of $\mathcal{P}$; the high-level idea is to intersperse
this walk with the idea of ``judiciously dropping some constraints'' from
the previous paragraph, as well as combining certain constraints together
into one.
Three major differences from \cite{kmps:unified-sched} are: 
\begin{description}
\item[(a)] the
care given to the tight constraints $\mathcal{C}$ (\cite{kmps:unified-sched} counts the total number of constraints, and
does not exploit tightness); 
\item[(b)] the choice of which
constraint to drop being based on $\mathcal{C}$ (in \cite{kmps:unified-sched}, it is solely based on $\mathcal{P}$); and 
\item[(c)] modifying some constraints on the fly (as in steps \textbf{(D2)} and \textbf{(Modified D2)}) in 
Section~\ref{sec:sched-cap}) and
combining some
constraints into one (as in steps 2 and 3 of Algorithm \textbf{Sched-Outlier} in Section~\ref{sec:sched-outlier}).
\end{description}
The process can also be thought of as a \emph{randomized iterative relaxation} method where randomized steps are taken on an iteratively relaxed polytope. Randomization also enables concentration bounds to be employed for further analysis.


As discussed starting with Section~\ref{sec:app-capacity}, this recipe appears fruitful
in a number
of directions in scheduling, and as a new rounding technique in general.

To motivate many of our applications, we now recall a fundamental
scheduling model that has spurred many advances and applications in
combinatorial optimization, including linear-, quadratic- \&
convex-programming relaxations and new rounding
approaches \cite{LST,shmoys-tardos:gap,skutella:wjCj,azar-epstein:lp,ebenlendr:soda08,kmps:unified-sched,Gupta,bansal:stoc06,bateni:stoc09,julia:focs09}.
This model, \emph{scheduling with unrelated parallel
machines} (UPM) -- and its relatives -- play a key role in this work.
Herein, we are given a set $J$ of $n$ jobs, a set $M$ of
$m$ machines, and non-negative values $p_{i,j}$ ($i \in M, ~
j \in J$): each job $j$ has to be assigned to some machine, and
assigning it to machine $i$ will impose a processing time of $p_{i,j}$ on
machine $i$. (The word ``unrelated'' arises from the fact that there may be
no pattern among the given numbers $p_{i,j}$.) Variants such as the type of
objective function(s) to be optimized in such an assignment, whether there is
an additional ``cost-function'', whether a few
jobs can be dropped, and situations where there are release dates for,
and precedence constraints among, the jobs, lead to a rich spectrum of
problems and techniques. We now briefly discuss two such highly-impactful
results \cite{LST,shmoys-tardos:gap}. The primary UPM objective in these works
is to minimize the \emph{makespan} -- the maximum total load on any machine.
It is shown in \cite{LST} that this problem can be approximated to within a
factor of $2$; furthermore, even the natural ``restricted assignment" special case
cannot be approximated better
than $1.5$ unless $P = NP$ \cite{LST}. Despite much effort, these bounds
have not been improved; it has been shown that the \emph{value} of the objective function for the
special case of restricted assignment can be approximated to within $33/17 + \epsilon$ \cite{svensson:santa}.
The work of \cite{shmoys-tardos:gap} builds on
the upper-bound of \cite{LST} to consider the \emph{generalized assignment
problem} (GAP) where we incur a cost $c_{i,j}$ if we schedule job
$j$ on machine $i$; a simultaneous $(2,1)$--approximation for the
(makespan, total cost)-pair is developed in \cite{shmoys-tardos:gap}, leading
to numerous applications (see, e.g.,
\cite{DBLP:conf/spaa/AndreevMMS03,DBLP:journals/siamcomp/ChekuriK05}).

\subsubsection{Capacity constraints on machines}
\label{sec:app-capacity}
Handling ``hard capacities'' -- those that cannot be violated -- is
generally tricky in various settings, including facility-location and other
covering problems \cite{chuzhoy-naor:hard-cap,ghkks:hard-cap,PTW01}.
Motivated by problems in crew-scheduling
\cite{maximilian:trans85,Rushmeier:95} and by the fact that servers have a
limit on how many jobs can be assigned to them, the natural question of
scheduling with a hard capacity-constraint of ``at most $b_i$ jobs to be
scheduled on each machine $i$'' has been studied in
\cite{tsai:siam92,Zhang:01,yang:dis03,gerhard:dis05,zhang:faw09}.
The work of \cite{zhang:faw09} has shown that this problem can
be approximated to within a factor of $3$ in the special case where the
machines are \emph{identical} (job $j$ has processing time $p_j$ on any
machine). In \S~\ref{sec:sched-cap}, we use our
random-walk approach to generalize this to the setting of GAP and
obtain the GAP bounds of \cite{shmoys-tardos:gap} -- i.e.,
approximation ratios of $2$ and $1$ for the makespan and cost
respectively, while satisfying the capacity constraints: the improvements
are in the more-general scheduling
model, handling the cost constraint, and in the approximation ratio.\footnote{As described in \S~\ref{sec:sched-cap}, a referee has presented a much simpler proof of this result. We present this as well as our original proof, in the hope that perhaps the original proof has aspects that could be useful elsewhere.}
We anticipate that such a capacity-sensitive generalization of
\cite{shmoys-tardos:gap} would lead to improved approximation algorithms
for several applications
of GAP, and present one such in Section \ref{section:spaa}. However, as pointed out next in Section~\ref{sec:intro-tail},
the referee has pointed out that Theorem~\ref{thm:sched-cap} -- this capacity-sensitive generalization -- follows from the work of \cite{shmoys-tardos:gap}. 

\subsubsection{Random matchings with
sharp tail bounds}
\label{sec:intro-tail}
We obtain two types of concentration results for random matchings, as follows.

First, Theorem~\ref{thm:matching} generalizes capacitated problems (as described in the previous
application) to random bipartite
$b$-matchings with target degree bounds and sharp tail bounds for
given linear functions; see \cite{ekmsw:soda04} for applications to models
for complex networks. (Recall that given a vector $b$, a
$b$-matching is a subgraph in which every vertex $v$ has degree
at most $b(v)$.)
Given a \emph{fractional} $b$-matching $x$ in a bipartite graph $G = (J,M,E)$
of $N$ vertices, Theorem~\ref{thm:matching} shows that if there is one
linear objective function $f_i$ with bounded coefficients
associated with each $i \in M$, then we can construct (random) $b$-matchings $X$ with  
all the $|f_i(X) - f_i(x)|$ bounded independent of $N$. There has been much related work on such problems. 
For instance, given a collection of $k$ linear functions $\{f_i\}$ of $x$,
many works have considered the problem of constructing
$b$-matchings $X$ such that $f_i(X)$ is ``close'' to $f_i(x)$
simultaneously for each $i$ \cite{arora:newrounding,esa09:ravi,papad:focs00,gkps:dep-round}. The works \cite{esa09:ravi,papad:focs00} focus on the case of
constant $k$; those of \cite{arora:newrounding,gkps:dep-round} consider
general $k$, and require the usual additive ``discrepancy'' term
of $\Omega(\log N + \sqrt{f_i(x) \log N})$ in $|f_i(X) - f_i(x)|$ for most/all $i$;
in a few cases, $o(N)$ vertices will have to remain unmatched also. The work of
\cite{chekuri:soda11} considers such problems in the more-general context of matroid intersection, and
achieves the additive discrepancy term
of $O(\log N + \sqrt{f_i(x) \log N})$ in $|f_i(X) - f_i(x)|$ for all $i$. 

\smallskip \noindent \textbf{Shorter proofs derivable from earlier work.} 
It has been pointed out to us by the referee mentioned above that Theorem~\ref{thm:matching} is actually 
derivable from the work of
\cite{shmoys-tardos:gap}. This is the case when all the $r_j$ equal $1$: the only modification to be made to the algorithm underlying Theorem 2.1 of
\cite{shmoys-tardos:gap} is to write the fractional solution $x$ (actually, the related vector $x'$ in the terminology of \cite{shmoys-tardos:gap}) as
a convex combination of $b$-matchings, and choose a random $b$-matching by setting the probability of
each matching to be its coefficient in this convex combination. For the general case where the $r_j$'s are arbitrary positive integers, a little more work, as pointed out by the referee, yields 
Theorem~\ref{thm:matching} in its full generality. 
We have included the referee's elegant proof in  
\S~\ref{sec:sched-cap}. 
The referee has also pointed out that two of our consequences of Theorem~\ref{thm:matching} -- 
Theorem~\ref{thm:sched-cap} and Theorem~\ref{thm:santa2} -- follow from the work of
\cite{shmoys-tardos:gap}. We have kept these three theorems in this paper since
they give a unified approach to our applications; we thank the referee for their shorter proofs and connections to  \cite{shmoys-tardos:gap}. 
It has also been pointed out to us by Mohit Singh that our \emph{upper} bounds on
all the values $f_i(X) - f_i(x)$ (note that we have these quantities without the absolute value here), can also be obtained by the iterative-rounding
methodology as developed in \cite{lau-ravi-singh:book}; we obtain upper-bounds on the 
values $|f_i(X) - f_i(x)|$. We thank Mohit Singh for his input as well. 

Our second contribution to random matchings is a  new concentration-of-measure result; we start with informal
background first and then give some of the technical background. 
The main construction of \cite{gkps:dep-round}, which is a probabilistic analog of that of
\cite{AS}, is as follows. Suppose we have a bipartite graph $G = (U,V,E)$ with a non-negative weight $x_e$ on each edge $e$; let $s_i$ denote the sum of the weights of the edges incident on vertex $i$. Then, an efficient randomized algorithm to round each $x_e$ to a random variable $X_e \in
\{\lfloor x_e \rfloor, \lceil x_e \rceil\}$ is developed in \cite{gkps:dep-round}. Letting $S_i$ be the random variable denoting the sum of the $X_e$ over all the edges $e$ incident on $i$, this rounding algorithm has the following three properties: 
\textbf{(P1)} $\expect{X_e} = x_e$ for all $e$; \textbf{(P2)} With probability one,
$S_i \in \{\lfloor s_i \rfloor, \lceil s_i \rceil\}$ for all $i$, and \textbf{(P3)} for each vertex $i$, the random variables $((X_e - \lfloor x_e \rfloor):~e \textrm{ incident on $i$})$ are ``negatively correlated" in a natural sense, formalized by
Definition~\ref{defn:neg-correl}. These properties lead to a variety of applications in approximation algorithms \cite{AS,gkps:dep-round}. Section~\ref{sec:max-min} extends these to new concentration-of-measure bounds.
To get a feel for these, suppose  $s_i \leq 1$ for all $i$; then, we are constructing a random matching. Our bound 
in Section~\ref{sec:max-min} shows that for any subset of vertices $W$ such that $W \subseteq U$ or $W \subseteq V$, the number of vertices in $W$ that get matched, is sharply concentrated around its mean $\sum_{i \in W} s_i$. We anticipate that such bounds will be useful elsewhere as well. More formally, our contribution relies on negative correlation:

\begin{definition}[Negative Correlation for Indicator Random Variables]
\label{defn:neg-correl}
A collection of indicator random variables $\{Z_{i}\}, i \in [1,N]$ is said to be negatively correlated if for any $t$, any
$1 \leq i_1 < i_2 < \cdots < i_t \leq N$, and any $b \in \{0,1\}$, $\prob{\bigwedge_{j=1}^{t} (Z_{i_j}=b)} \leq \prod_{j=1}^{t} \prob{Z_{i_j}=b}$.
\end{definition}

One of the key benefits of negative correlation is that such a ``self-correcting'' property leads to strong concentration:

\begin{theorem}
\label{thm:chbound-negcorrel}
{\bf (The Chernoff-Hoeffding bound under negative correlation \cite{ps:edge-col}):} Suppose $X=\sum_{i} X_{i}$ where $X_{i}$ are negatively correlated random
variables taking values in $\{0,1\}$. Then:
\begin{description}
\item[(i)] if $\expect{X} \geq \mu$ and $\delta \in [0,1]$, then $\prob{ X \leq \mu(1-\delta)} \leq e^{-\mu\delta^2/2}$;
\item[(ii)] if $\expect{X} \leq \mu$ and $\delta \geq 0$, then $\prob{ X \geq \mu(1+\delta)} \leq e^{-\mu[(\delta+1)\ln{(\delta+1)}-\delta]}$.
\end{description}
\end{theorem}

A natural question one can ask is whether the negative-correlation property (P3) of \cite{gkps:dep-round} does not just hold ``locally'' (at a vertex $i$), but across the graph as well. Unfortunately, it is easy to show that such a property fails to hold: in fact, by a large margin.\footnote{Suppose, e.g., that $U = \{u_1, u_2\}$ and
$V = \{v_1, v_2\}$ and that we have the complete bipartite graph on $(U,V)$, with $x_e = 1/2$ for each of the four edges
$e$. Then, the only solution here is to select $\{(u_1, v_1), (u_2, v_2)\}$ with probability $1/2$, and
$\{(u_1, v_2), (u_2, v_1)\}$ with the remaining probability of $1/2$. Thus, the edges $(u_1, v_1)$ and $(u_2, v_2)$ are
perfectly positively correlated.} However, we are able to show in Theorem~\ref{theorem:neg1} that such negative correlation holds if we only consider any collection of vertices on the ``same side'' of $G$: all in $U$ or all in $V$. 
(Theorem~\ref{theorem:neg1} is stated in the context of matchings, wherein $s_i \leq 1$ for all $i$, but its proof directly generalizes to arbitrary $s_i$.) The resultant concentration inequalities that follow from Theorem~\ref{thm:chbound-negcorrel} are crucially needed in Section~\ref{sec:max-min}. 

\subsubsection{Scheduling with outliers: makespan and fairness}
Note that the $(2,1)$ bicriteria approximation that we obtain for GAP as described in Section~\ref{sec:app-capacity},
generalizes the results of \cite{shmoys-tardos:gap}. We now present such a
generalization in another direction: that of ``outliers'' in scheduling
\cite{Gupta}. For instance, suppose that
in the ``processing times $p_{i,j}$ and
costs $c_{i,j}$'' setting of GAP,
we also have a profit $\pi_j$ for choosing to schedule each job $j$.
Given a ``hard'' target profit $\Pi$, target makespan $T$ and total cost
$C$, the LP-rounding method of \cite{Gupta} either proves that these
targets are not simultaneously achievable, or constructs a schedule
with values $(\Pi, 3T, C(1 + \epsilon))$ for any constant $\epsilon > 0$.
Using our rounding approach, we improve
this to $(\Pi, (2 + \epsilon)T, C(1 + \epsilon))$ in
\S~\ref{sec:sched-outlier}.
(The factors of $\epsilon$ in the cost are required due to the hardness of
knapsack \cite{Gupta}.)
Also, fairness is a fundamental
issue in dealing with outliers:
e.g., in repeated runs of such algorithms, we may not desire some jobs $j$ being excluded as an outlier in almost all of the repetitions just so that the global objective function remains high. 
Theorem~\ref{thm:sched-fairness} accommodates
fairness in the form of
scheduling-probabilities for the jobs that can be part of the
input.

\subsubsection{Max-Min Fair Allocation}
This problem is the
max-min version of UPM, where we aim to maximize the minimum
``load'' (viewed as utility) on the machines; it has received a good
deal of attention \cite{bansal:stoc06,asadpour:stoc07,feige:soda08,asadpour-feige-saberi,bateni:stoc09,julia:focs09}. We obtain a new algorithm for 
max-min fair allocation to near-optimally determine the integrality
gap of a well-studied ``configuration LP" relaxation via bipartite dependent rounding and its generalization \cite{gkps:dep-round}. (Also, the results of  \cite{shmoys-tardos:gap} imply a generalization of a result of \cite{dani:05} on max-min fairness
to the setting of
equitable partitioning of the jobs; see Theorem~\ref{thm:santa2}.) Improved approximation factors are now known due to Chakrabarty, Chuzhoy and Khanna \cite{Chuzhoy09} and Bateni, Charikar and Guruswamy \cite{bateni:stoc09}, via an approach that avoids the configuration LP.

\subsubsection{Overlay Networks for Streaming}
As an additional application, Section~\ref{section:spaa} improves upon some of the rounding techniques of
\cite{DBLP:conf/spaa/AndreevMMS03} in the design of overlay  networks for streaming.


\section{Random Matchings with Linear Constraints, and GAP with Capacity Constraints}
\label{sec:sched-cap}

We develop an efficient scheme to generate random
subgraphs of bipartite graphs that
satisfy hard degree-constraints and near-optimally satisfy
a collection of linear constraints; this is captured by Theorem~\ref{thm:matching}. As mentioned in the introduction, a referee has pointed out that there is a much-shorter proof for Theorem~\ref{thm:matching} that is motivated by an approach of \cite{shmoys-tardos:gap}. We give this short proof due to the referee below, and also keep our original argument for completeness. We start by defining the
input for the algorithm that is guaranteed by Theorem~\ref{thm:matching}:

\begin{definition}
\label{defn:bip-degree-linear}
(Matchings with Structured Linear Constraints (MSLC))
The input to the MSLC problem consists of the following:
\begin{itemize}
\item a bipartite graph $G = (J, M, E)$   with ``jobs'' $J$ and
``machines'' $M$; let $\mathcal{F}$ be the collection of
edge-indexed vectors $y$ (with $y_{i,j}$ denoting $y_e$ where $e =
(i,j) \in E$). 
\item an integer \emph{requirement}
$r_j$ for each $j \in J$ and an integer \emph{capacity} $b_i$ for each
$i \in M$.
\item for each $i \in M$, a linear objective function
$f_i: \mathcal{F} \rightarrow \Re$ given by $f_i(y) = \sum_{j:~(i,j) \in E}
p_{i,j} y_{i,j}$ such that $0 \leq p_{i,j} \leq \ell_i$ for each $j$, where $\ell_i$ is some given positive
value associated with $i$. 
\item a global cost constraint $\sum_{i,j}c_{i,j}y_{i,j} \leq C$, and
\item a vector $x \in \mathcal{F}$ with $x_e \in [0,1]$ for each $e$ which satisfies the given constraints; i.e., we have
(i) $\sum_i x_{i,j} \geq r_j$ for each $j$, (ii) $\sum_j x_{i,j} \leq b_i$ for each $i$, and
(iii)  $\sum_{i,j} c_{i,j} x_{i,j} \leq C$. 
\end{itemize}
\end{definition}

\begin{theorem}
\label{thm:matching}
(Rounding MSLC instances)
Suppose we are given an instance of MSLC with parameters as in Definition~\ref{defn:bip-degree-linear}.
Then, we can efficiently construct a random subgraph of $G$ given by a
binary vector $X \in \mathcal{F}$, such that:
(a) with probability one, each $j \in J$ has degree at least $r_j$,
each $i \in M$ has degree at most $b_i$, and
$|f_i(X) - f_i(x)| < \ell_i~\forall i$; as well as
(b) for all $e \in E$, $\expect{X_e} = x_e$ which implies
$\expect{\sum_{i,j}c_{e}X_{e}}=\sum_{e}c_{e}x_{e}= C$.
\end{theorem}

\smallskip \noindent \textbf{An elegant and short proof of Theorem~\ref{thm:matching} due to a referee.} The referee's short proof is as follows. Given a vector $x$ as in MSLC and for each machine $i$, define a permutation $\pi_i(1), \pi_i(2), \ldots, \pi_i(n)$ of the jobs (where $n = |J|$) such that 
\[ p_{i, \pi_i(1)} \geq  p_{i, \pi_i(2)} \geq  \cdots \geq p_{i, \pi_i(n)}. \]
We write an alternative system of inequalities for which $x$ is clearly a feasible solution:
\begin{eqnarray}
\sum_i x_{i,j} & \geq & r_j ~~\forall j; \label{eqn:alt-job} \\
\sum_{k = 1}^{\ell} x_{i,\pi_i(k)} & \leq & \left \lceil \sum_{k = 1}^{\ell} x_{i,\pi_i(k)} \right \rceil ~~\forall i ~\forall \ell \leq n;  \label{eqn:alt-machine-upper} \\
\sum_{k = 1}^{\ell} x_{i,\pi_i(k)} & \geq & \left \lfloor \sum_{k = 1}^{\ell} x_{i,\pi_i(k)} \right \rfloor ~~~\forall i ~ \forall \ell \leq n;  \label{eqn:alt-machine-lower} \\
x_{i,j} & \in & [0,1] ~~\forall (i,j). \label{eqn:alt-0-1-bounded}
\end{eqnarray}
The claim is that the polytope induced by (\ref{eqn:alt-job}), (\ref{eqn:alt-machine-upper}), 
(\ref{eqn:alt-machine-lower}), and (\ref{eqn:alt-0-1-bounded}) is integral; the proof follows the usual method for showing the integrality of the matroid-intersection polytope, and is as follows. Consider the tight constraints at any extreme point $z$, and view the constraints as only on those those $z_{i,j}$ that are yet-unrounded (i.e., lie in $(0,1)$); in other words, we view the $z_{i,j} \in \{0,1\}$ as fixed, and not as variables any more. 
Suppose there are $v$ yet-unrounded $z_{i,j}$'s, where $v \not= 0$ for a contradiction. The tight constraints corresponding to (\ref{eqn:alt-0-1-bounded}) constitute a partition matroid and are easily-seen to be at most $v/2$ in number; similarly, the tight constraints corresponding to (\ref{eqn:alt-machine-upper}) and 
(\ref{eqn:alt-machine-lower}) yield a laminar system and are also at most $v/2$ in number. Further, if both of these ``$v/2$'' bounds are tight, then these two systems of tight constraints span the constraint ``$\sum_i \sum_j z_{i,j} = \sum_i \sum_j x_{i,j}$''. Thus, the tight constraints do not span all of a $v$-dimensional space, a contradiction. Hence our polytope is integral. 
Now, as usual, all we need is to decompose $x$ as a convex combination of vertices of this integral polytope, and pick a random vertex of the polytope: for each vertex, its probability equals its coefficient in the convex combination. The fact that (\ref{eqn:alt-machine-upper}) easily helps show that for the obtained random binary vector $X$, $f_i(X) - f_i(x) < \ell_i$ for any $i$; similarly, (\ref{eqn:alt-machine-lower}) implies that $f_i(X) - f_i(x) > -\ell_i$. 

This completes the description of the referee's elegant and short proof; we now return
 to our approach. 

\smallskip  
We  first prove an important special case of Theorem~\ref{thm:matching}:
GAP with individual capacity constraints on each machine. This special case --
handled by Theorem~\ref{thm:sched-cap} --
captures much of the essence of
Theorem~\ref{thm:matching}; the full proof of Theorem~\ref{thm:matching}
follows after Theorem~\ref{thm:sched-cap}. This is a special case in the following senses:
\begin{itemize}
\item $r_j = 1$ for each $j$;
\item we only require that the rounding $X$ is such that for each $i$, $f_i(X)$ is not ``much more" than $f_i(x)$: i.e., the $f_i(X)$  values are allowed to be much smaller than the corresponding $f_i(x)$ values; and
\item for some $T$, all the $\ell_i$ and $f_i(x)$ are upper-bounded by $T$.
\end{itemize}
In words, this is the UPM problem with hard capacities $b_i$ on the machines as discussed in Section~\ref{sec:app-capacity}. 
(As mentioned at the beginning of our description of Algorithm \textbf{Sched-Cap}, 
we guess the optimum makespan $T$ by binary search -- and if the processing time $p_{i,j}$ is strictly larger than $T$, we set the corresponding decision
variable $x_{i,j}$ to $0$.)   
Note, as pointed out in Section~\ref{sec:intro-tail}, that Theorem~\ref{thm:sched-cap} can also be derived from
the work of \cite{shmoys-tardos:gap}.

Our main contribution here is an efficient algorithm \textbf{Sched-Cap} that
has the following guarantee, generalizing the GAP bounds of
\cite{shmoys-tardos:gap}:
\begin{theorem}
\label{thm:sched-cap}
There is an efficient algorithm \textbf{Sched-Cap} that returns a schedule maintaining all the capacity constraints, of cost at most $C$ and makespan
at most $2T$, where $T$ is the
optimal makespan with cost $C$ that satisfies the capacity constraints.
\end{theorem}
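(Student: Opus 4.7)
The plan is to cast \textbf{Sched-Cap} as a specialization of Theorem~\ref{thm:matching} applied to a bipartite ``job--slot'' graph obtained via the Shmoys--Tardos decomposition. First I would guess the optimal makespan $T$ (by binary search over the polynomially many candidate values $p_{i,j}$) and solve the filtered LP on variables $x_{i,j}$, with $x_{i,j}=0$ whenever $p_{i,j} > T$ and constraints $\sum_i x_{i,j} = 1$ for each job $j$, $\sum_j p_{i,j} x_{i,j} \leq T$ and $\sum_j x_{i,j} \leq b_i$ for each machine $i$, and $\sum_{i,j} c_{i,j} x_{i,j} \leq C$. If the LP is infeasible, reject this guess; otherwise let $x^*$ denote an optimal fractional solution.

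Second, I would perform the Shmoys--Tardos slot decomposition on $x^*$: for each machine $i$, sort jobs in decreasing order of $p_{i,j}$ and split the mass $x^*_{i,\cdot}$ into $k_i := \lceil \sum_j x^*_{i,j} \rceil$ consecutive slots of fractional mass exactly one (the last possibly short). Because each $b_i$ is an integer and $\sum_j x^*_{i,j} \leq b_i$, we have $k_i \leq b_i$, so forcing at most one job per slot will automatically respect the hard capacity. Padding with zero-cost, zero-processing-time dummies to equalize sides yields a bipartite graph $G' = (J, S, E')$ on which $x^*$ induces a fractional perfect matching $y^*$ of cost at most $C$; by the sorting, in each slot the maximum processing time among jobs fractionally assigned to it is at most the minimum processing time in the previous slot on the same machine -- the key structural invariant for the makespan bound.

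Third, the core rounding step is to apply \textbf{RandMove} to the polytope $\mathcal{P}$ of fractional perfect matchings of $G'$ that additionally satisfy $\sum c_{i,j} y_{i,j} \leq C$, with the perfect-matching equalities always in the tight set $\mathcal{C}$. Each invocation picks a nonzero direction $r$ in the nullspace of $\mathcal{C}$ -- equivalently, a signed alternating cycle or path in the current fractional support -- and takes the largest step along $\pm r$ that keeps $y$ in $\mathcal{P}$; the step sizes $\alpha,\beta$ automatically truncate at the cost facet, so this inequality is never violated, and once it becomes tight it joins $\mathcal{C}$ and all subsequent moves stay on its face. The walk terminates at a vertex of $\mathcal{P}$, which by the (near-)integrality of the bipartite matching polytope corresponds to an integer perfect matching $Y$; reading $Y$ back as an assignment $X$, each machine $i$ receives at most $k_i \leq b_i$ jobs, and the Shmoys--Tardos invariant yields per-machine load at most $T + \max_{j: (i,j) \in E} p_{i,j} \leq 2T$ by the filter.

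The main obstacle is that the cost inequality is a single global constraint rather than a per-machine one, so the generic guarantee of Theorem~\ref{thm:matching} only preserves it in expectation; I must argue that the walk actually terminates at an integer matching while staying on the cost facet. My plan for this is to treat cost as a member of $\mathcal{C}$ as soon as it tightens and, at that moment, invoke bipartite integrality on the remaining (fractional-support) subgraph: any alternating cycle/path $r$ in the residual matching polytope either has $\sum c_{i,j} r_{i,j} = 0$ (so both orientations preserve tight cost and the walk continues to round edges), or strictly nonzero cost-gradient (in which case the $\pm r$ move still rounds at least one edge while preserving cost $\leq C$ by the same step-size truncation). This combined use of the ``care for tight constraints'' and ``clubbing'' features highlighted in the introduction drives the walk to an integer perfect matching of cost at most $C$, completing the proof with the stated $(2T, C, b_i)$ guarantees.
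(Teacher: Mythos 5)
Your proposal is correct, but it proves the theorem by a genuinely different route from the paper. You reduce the problem to the Shmoys--Tardos job--slot decomposition and observe that, because each capacity $b_i$ is an integer and $\sum_j x^*_{i,j}\le b_i$, the number of slots $k_i=\lceil\sum_j x^*_{i,j}\rceil$ is at most $b_i$, so any integral matching in the slot graph automatically respects the hard capacities; the sorted-slot invariant then gives the $2T$ makespan exactly as in Shmoys--Tardos. This is a clean and more elementary argument, and in fact your random walk on the (matching polytope $\cap$ cost halfspace) is heavier machinery than you need: a minimum-cost integral matching saturating the jobs already has cost at most the fractional cost $\le C$ \emph{with probability one}, which is stronger than the expectation-plus-derandomization route. (Your facet discussion is also slightly inconsistent --- you first insist all moves stay on the tight cost face, then allow cost-decreasing moves off it --- but either the min-cost-matching shortcut or the standard ``one fractional cycle, round it in the cost-nonincreasing direction'' argument closes this without difficulty.) The paper, by contrast, never forms slots: it runs \textbf{RandMove} directly on the original polytope with the (Assign), (Load), (Capacity), (Cost) constraints, drops load/capacity constraints for machines of floating degree $1$, $2$, $3$ according to rules ({\bf D1})--({\bf D3}), and proves via the counting argument of Lemma~\ref{lemma:vertex} (variables vs.\ tight constraints, plus the linear dependence between tight assignment and tight capacity constraints) that the walk never gets stuck; the makespan and cost bounds are Lemmas~\ref{lem:sched-cap-span} and~\ref{lem:sched-cap-cost}. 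What the paper's approach buys is generality --- it preserves all marginals $\expect{X_{i,j}}=x^*_{i,j}$ on the \emph{original} edge set and extends to the degree lower bounds and per-machine linear objectives of Theorem~\ref{thm:matching} and to the outlier setting --- whereas your reduction buys simplicity and a deterministic cost guarantee for this particular special case.
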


\paragraph{{\bf Algorithm \textbf{Sched-Cap}}}
Algorithm \textbf{Sched-Cap} proceeds as follows. First we guess the optimum makespan $T$ by binary search as in \cite{LST}.
If $p_{i,j} >T$, $x_{i,j}$ is set to $0$. The solution to the following
integer program gives the optimum schedule:
\begin{eqnarray*}
&&\sum_{i,j} c_{i,j}x_{i,j} \leq C ~~~~~~~~~~~~~~~~~~~~ (\text{Cost})\\
&&\sum_{i,j} x_{i,j} =1 ~\forall j ~~~~~~~~~~~~~~~~~~~ (\text{Assign}) \\
&& \sum_{j} p_{i,j}x_{i,j} \leq T ~\forall i  ~~~~~~~~~~~~~~~~~ (\text{Load})\\
&& \sum_{j} x_{i,j} \leq b_{i} ~\forall i   ~~~~~~~~~~~~~~~~~ (\text{Capacity})\\
&& x_{i,j} \in \{0,1\} ~\forall i,j  ~~~~~~~~~~~~~~~~~ (\text{Binary})          \\
&& x_{i,j}=0~~\text{ if } p_{i,j} >T ~~~~~~~~~~~~~ (\text{Filtering})  
\end{eqnarray*}

We relax the constraint ``$x_{i,j} \in \{0,1\} ~\forall (i, j)$'' to
``$x_{i,j} \in [0,1] ~\forall (i,j)$'' to obtain an LP relaxation
\textbf{LP-Cap}. We solve the LP to obtain an optimal LP solution $x^{*}$;
we next show how \textbf{Sched-Cap} rounds $x^{*}$ using algorithm \textbf{RandMove} of Section~\ref{sec:contribs}  to obtain a good integral solution. 

\smallskip \noindent \textbf{Remark: dropping the cost constraint.} Although we mention the constraint (Cost) above
for completeness, we will \emph{drop} this constraint from now on. This is because -- as shown in the next
paragraph -- our final rounded vector $X$ will satisfy $\expect{X_{i,j}} =x^{*}_{i,j}$, and hence (Cost) will be
satisfied in expectation; all our other guarantees are with probability one. The entire process as we demonstrate at the end can be derandomized and hence the cost upper bound of $C$ is obeyed.

\smallskip
Note that $x^{*}_{i,j} \in [0,1]$ denotes the ``fraction'' of job $j$
assigned to machine $i$. Initialize $X=x^{*}$. The algorithm is composed of
several iterations. The random value of the assignment-vector $X$ at the end
of iteration $h$ of the overall algorithm is denoted by $X^{h}$. 
Each iteration $h$ conducts a randomized update using algorithm \textbf{RandMove} on the polytope
of a linear system constructed from a \emph{subset} of the constraints of
\textbf{LP-Cap}. Therefore, by induction on $h$, we will have for all $(i,j,h)$
that $\expect{X_{i,j}^{h}}=x^{*}_{i,j}$. 

Let $J$ and $M$ denote the set of jobs and machines, respectively.
Suppose we are at the beginning of some iteration $(h+1)$ of the overall algorithm: we are currently looking at the values $X_{i,j}^{h}$. We will maintain four invariants.

\SN
\textbf{Invariants across iterations:}

\begin{description}
\item[(I1)] Once a variable $x_{i,j}$ gets assigned to $0$ or $1$, it is
never changed;
\item[(I2)] The constraints (Assign) always hold; and
\item[(I3)] Once a constraint in (Capacity) becomes tight, it remains tight, and
\item[(I4)] Once a constraint is dropped in some iteration, it is never
reinstated.
\end{description}
Iteration $(h+1)$ of \textbf{Sched-Cap} consists of three main steps:

\SN 1.
We first \emph{hard-wire} all $X_{i,j}^{h} \in \{0,1\}$; thus, the variables 
$X_{i,j}^{h} \in (0,1)$ yield the current vector $Y$ of ``floating''
(to-be-rounded) variables; let $\mathcal{S} \equiv (A_h Y = u_h)$ denote
the current linear system that represents \textbf{LP-Cap}. ($A_h$ is some
matrix and $u_h$ is a vector; we avoid using ``$\mathcal{S}_h$'' to
simplify notation.) In particular, the ``capacity''
of machine $i$ in $\mathcal{S}$ is its residual capacity $b_i'$,
i.e., $b_i$ minus the number of jobs that have been permanently assigned to
$i$ thus far. Recall that the cost constraint is \emph{not} included in the constraint matrix
$A_h Y = u_h$; we continue to maintain $A_h Y = u_h$ exactly. 

\SN 2. Let $Y \in \Re^v$ for some $v$; note that $Y \in (0,1)^v$.
Let $M_k$ denote the set of all machines
$i$ for which exactly $k$ of the values $Y_{i,j}$ are positive.
We will now drop some of the constraints in $\mathcal{S}$:
\begin{description}
\item[(D1)] for each $i \in M_1$, we drop its load and capacity constraints from
$\mathcal{S}$;
\item[(D2)] for each $i \in M_2$, we drop its load constraint and rewrite its capacity
constraint as $x_{i,j_1}+x_{i,j_2} \leq \lceil X^{h}_{i,j_1}+X^{h}_{i,j_2}\rceil$, where $j_1,j_2$
are the two jobs fractionally assigned to $i$.
\item[(D3)] for each $i \in M_3$ for which
\emph{both} its load and capacity
constraints are tight in $\mathcal{S}$, we drop its load constraint from
$\mathcal{S}$.
\end{description}

\SN 3. Let $\mathcal{P}$ denote the polytope defined by this reduced system of
constraints. A key claim that is proven in Lemma~\ref{lemma:vertex} below
is that $Y$ is \emph{not} a vertex of $\mathcal{P}$. We now invoke
$\textbf{RandMove}(Y,\mathcal{P})$; this is allowable if
$Y$ is indeed not a vertex of $\mathcal{P}$.

\SN The above three steps complete iteration $(h+1)$.

\SN
\paragraph{{\bf Analysis}}

\SN It is not hard
to verify that the invariants ({\bf I1})-({\bf I4}) hold true (though the fact
that we drop the all-important capacity constraint for machines
$i \in M_1$ may look bothersome, a moment's reflection shows that such
a machine cannot have a tight capacity-constraint since its sole
relevant job $j$ has value $Y_{i,j} \in (0,1)$).
Since we
make at least one further constraint tight via
$\textbf{RandMove}$ in each iteration, invariant ({\bf I4}) shows that we
terminate, and that the number of iterations is at most the initial number
of constraints. Let us next present Lemma~\ref{lemma:vertex}, a key lemma:

\begin{lemma}
\label{lemma:vertex}
In no iteration is $Y$ a vertex of the current polytope $\mathcal{P}$.
\end{lemma}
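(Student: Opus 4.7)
The plan is to show that the rank of the tight constraints at $Y$ in the reduced polytope $\mathcal{P}$ is strictly below $v$, the dimension of the floating-variable space; this is exactly the condition that $Y$ is not a vertex, so that $\textbf{RandMove}(Y,\mathcal{P})$ can legitimately be invoked. Since $Y\in(0,1)^v$, the box inequalities $0\le y_{i,j}\le 1$ are slack, so I would restrict attention to the (Assign), (Cost), and the surviving (Load)/(Capacity) rows.

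The first step is a careful accounting of tight rows. All $|J'|$ (Assign) equalities are tight by invariant ({\bf I2}); (Cost) contributes at most one tight row; and the drops ({\bf D1})--({\bf D3}) cap the tight (Load)/(Capacity) contribution per machine at $0$, $1$, $1$, $2$ for $M_1$, $M_2$, $M_3$, $M_{\geq 4}$ respectively. Crucially, ({\bf D3}) is precisely what prevents an $M_3$ machine from simultaneously contributing a tight (Load) and a tight (Capacity). Hence $T\le |J'|+1+|M_2|+|M_3|+2|M_{\geq 4}|$.

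The second step is to lower-bound $v$ in two ways. First, $v=|M_1|+2|M_2|+3|M_3|+\sum_{k\geq 4}k|M_k|$; second, $v\ge 2|J'|$, because any $j\in J'$ with only one floating neighbor would be forced by its (Assign) equation to be integral, contradicting $Y_{i,j}\in(0,1)$. Plugging both bounds into the estimate of Step~1 gives $T-v\le 1-\tfrac12(|M_1|+|M_3|)$. Consequently $T<v$, and therefore $Y$ is not a vertex, whenever $|M_1|+|M_3|\ge 3$.

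The main obstacle is the boundary regime $|M_1|+|M_3|\le 2$, where the counting alone is too coarse and one must exhibit a nontrivial element of the kernel of $A_{\mathrm{tight}}$ by hand. The natural tool is the bipartite support graph $G'=(J',M',E')$. A cycle in $G'$ that passes only through $M_1\cup M_2$ machines (equivalently, through $M_2$ alone, since $M_1$ vertices have degree one and cannot belong to any cycle) yields an alternating $\pm 1$ vector whose support is disjoint from every variable attached to $M_3\cup M_{\geq 4}$, and therefore lies in the kernel of every tight row coming from those machines as well as of (Assign) and of the modified $M_2$ capacities. If no such $M_2$-cycle exists, the $M_2$-subgraph is a forest, and I would feed the resulting inequality $|M_2|\le |J'|-1$ back into Step~1 to recover $T<v$; any residual obstruction from a tight (Cost) row can be killed by taking a suitable linear combination of two independent cycle directions, or by redirecting weight along an $M_{\geq 4}$ edge with slack. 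Verifying that one of these three options is always available in the small-$|M_1|+|M_3|$ regime is the delicate technical heart of the lemma.
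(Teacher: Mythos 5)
Your counting step reproduces the first half of the paper's argument (tight rows at most $n' + m_2 + m_3 + 2\sum_{k\ge4}m_k$ versus $v \ge n' + \sum_k (k/2)m_k$), but your proof has a genuine gap exactly where the paper's has its punchline. After the counting forces the degenerate configuration --- which in the paper's version (no (Cost) row in $\mathcal{P}$) is: $m_1=m_3=0$, $m_k=0$ for $k\ge5$, every job with exactly two floating edges, every surviving machine in $M_2\cup M_4$ with a \emph{tight} capacity constraint, and $t=v$ --- the paper closes in one line: the tight constraints are then precisely the (Assign) equalities and the tight (Capacity) equalities, and these are linearly dependent because the sum of the job-side rows equals the sum of the machine-side rows (both equal the total fractional mass $\sum_{i,j}Y_{i,j}$). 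So $\mathrm{rank} < t = v$ and $Y$ is not a vertex. You never invoke this dependency; instead you launch a case analysis on cycles in the $M_2$-subgraph, forests, and ``two independent cycle directions,'' and you explicitly concede that verifying one of your three options is always available ``is the delicate technical heart of the lemma.'' That is the missing proof, not a routine verification: for instance, the configuration in which the floating graph is a single cycle alternating between degree-2 jobs and tight-capacity $M_2$ machines is not excluded by any of your counting, and your cycle-based kernel vector is exactly the one the paper's linear-dependence observation produces for free --- but you have not shown such a cycle, or any substitute kernel vector, exists in general (e.g.\ when $M_4$ machines are present and the $M_2$-subgraph is a forest).

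A secondary point: you count a possibly-tight (Cost) row in $\mathcal{P}$, which is what leaves you the $+1$ slack and the unhandled regime $m_1+m_3\le 2$. In the paper the cost constraint is \emph{not} part of the polytope on which \textbf{RandMove} acts; cost is controlled only in expectation via $\expect{X^h_{i,j}}=x^*_{i,j}$ and then derandomized by conditional expectation (Lemma~\ref{lem:sched-cap-cost}). If you insist on keeping (Cost) as a hard row of $\mathcal{P}$, the lemma as stated may actually fail (a tight cost row can genuinely pin down an extra degree of freedom), so this is not merely a cosmetic difference: you should either drop the cost row from the polytope, as the paper does, or prove the additional structural claims you sketch.
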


\begin{proof}
Suppose that in a particular iteration, $Y$ \textbf{is} a vertex of
$\mathcal{P}$. Fix the notation $v$, $M_k$ etc.\ w.r.t.\ this iteration;
let $m_k = |M_k|$, and let $n'$ denote the remaining number of jobs that
are yet to be assigned permanently to a machine.
Let us lower- and upper-bound the number of variables $v$.
On the one hand, we have
\begin{equation}
\label{eqn:v-defn}
v = \sum_{k \geq 1} k \cdot m_k,
\end{equation}
by definition of the sets $M_k$; since each remaining job $j$ contributes at
least two variables (coordinates for $Y$), we also have
\begin{equation}
\label{eqn:lb-v}
v \geq 2n'. 
\end{equation}
On the other hand, since $Y$ has been assumed to be a vertex of
$\mathcal{P}$, the number $t$ of constraints in $\mathcal{P}$ that
are satisfied \emph{tightly} by $Y$, must be at least $v$. How large
can $t$ be? Each current job contributes one (Assign) constraint
to $t$; by our ``dropping constraints'' steps ({\bf D1}), ({\bf D2}) and ({\bf D3}) above,
the number of tight constraints (``load'' and/or ``capacity'')
contributed by the machines is at most $m_2 + m_3 + \sum_{k \geq 4} 2 m_k$.
Thus we have
\[ v \leq  t \leq n' + m_2 + m_3 + \sum_{k \geq 4} 2 m_k, \]
i.e., 
\begin{equation}
\label{eqn:ub-v}
2n' \geq 2v - 2 m_2 - 2 m_3 - \sum_{k \geq 4} 4 m_k. 
\end{equation}

Eliminating the term $2n'$ between (\ref{eqn:lb-v}) and (\ref{eqn:ub-v}) and then using the definition of
$v$ from (\ref{eqn:v-defn}), we get
\[ m_1 + 2 m_2 + 3 m_3 + 4 m_4 + \cdots \leq 2 m_2 + 2 m_3 + \sum_{k \geq 4} 4 m_k. \]
This is possible only if:
(i) $m_1 = m_3 = 0$ and $m_5 = m_6 = \cdots = 0$;
(ii) the capacity constraints
are tight for all machines in $M_2 \cup M_4$ -- i.e., for all machines;
and
(iii) $t = v$. However, in such a situation, the $t$ constraints in
$\mathcal{P}$
constitute the \emph{tight} assignment constraints for the jobs and
the \emph{tight} capacity constraints for the machines, and are hence
linearly dependent (since the total assignment ``emanating from'' the
jobs must equal the total assignment ``arriving into'' the machines).
Thus we reach a contradiction, and hence $Y$ is not a vertex of
$\mathcal{P}$.
\end{proof}

We next show that the final makespan is at most $2T$ with probability one:

\begin{lemma}
\label{lem:sched-cap-span}
Let $X$ denote the final rounded vector. Algorithm \textbf{Sched-Cap} returns
a schedule, where with probability one: (i) all capacity-constraints
on the machines are satisfied, and (ii) for all $i$,
$\sum_{j \in J} X_{i.j}p_{i,j} < \sum_{j}x_{i,j}^{*}p_{i,j} +
\text{max}_{j \in J:~x^{*}_{i,j}\in (0,1)}p_{i,j}$.
\end{lemma}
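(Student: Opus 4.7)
The two claims decouple, with (ii) carrying the substantive work.

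For (i), a short check on rules (\textbf{D1}), (\textbf{D2}), (\textbf{D3}) suffices. Capacity constraints that remain in $\mathcal{P}$ are preserved automatically since \textbf{RandMove} stays in $\mathcal{P}$. For $i \in M_1$ whose capacity is dropped by (\textbf{D1}), the capacity cannot have been tight at that iteration---a tight integer capacity with a single fractional coordinate $Y_{i,j} \in (0,1)$ would force $Y_{i,j}$ to be an integer---so the residual capacity $b_i'$ is at least $1$, and rounding the single variable to at most $1$ preserves it. For $i \in M_2$, the rewritten constraint $x_{i,j_1}+x_{i,j_2} \leq \lceil X^{h}_{i,j_1}+X^{h}_{i,j_2}\rceil$ is an integer upper bound at most $b_i'$ and is itself a constraint of $\mathcal{P}$, hence preserved in all later iterations. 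Under (\textbf{D3}), capacity is not touched.

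For (ii), invariant (\textbf{I1}) fixes every coordinate with $x^*_{i,j} \in \{0,1\}$, so writing $J_i^{\mathrm{frac}} := \{j : x^*_{i,j} \in (0,1)\}$ I reduce the claim to showing
\begin{equation*}
\sum_{j \in J_i^{\mathrm{frac}}} (X_{i,j} - x^*_{i,j})\, p_{i,j} \;<\; \max_{j \in J_i^{\mathrm{frac}}} p_{i,j}.
\end{equation*}
The structural point is that while machine $i$'s load constraint remains in $\mathcal{S}$, the nullspace directions used by \textbf{RandMove} preserve it as an equality at its initial value, so $L_i^h := \sum_j p_{i,j} X_{i,j}^h$ stays equal to $L_i^0 := \sum_j p_{i,j} x^*_{i,j}$ up through that iteration. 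Hence if the load constraint for $i$ is never dropped then $L_i^{\mathrm{final}} = L_i^0$ and the bound is immediate; otherwise, let $h^*{+}1$ be the iteration in which one of (\textbf{D1}), (\textbf{D2}), (\textbf{D3}) drops the load for $i$, so at that moment $L_i^{h^*} = L_i^0$, and $i$ has $k \in \{1,2,3\}$ fractional jobs $j_1,\dots,j_k$ with $\alpha_\ell := X_{i,j_\ell}^{h^*} \in (0,1)$.

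I then bound $L_i^{\mathrm{final}} - L_i^{h^*}$ case by case, writing the change as $\sum_\ell c_\ell\, p_{i,j_\ell}$ with each $c_\ell \in \{1-\alpha_\ell,\, -\alpha_\ell\}$. For (\textbf{D1}), $k=1$ and the change is at most $(1-\alpha_1)\,p_{i,j_1} < p_{i,j_1}$. For (\textbf{D2}), the rewritten capacity forces $X_{i,j_1}+X_{i,j_2} \leq s = \lceil \alpha_1+\alpha_2 \rceil \in \{1,2\}$; when $s=1$ at most one of the two jobs ends on $i$, and when $s=2$ both do, with $\alpha_1+\alpha_2 > 1$ yielding $(1-\alpha_1)+(1-\alpha_2) < 1$ and hence change strictly less than $\max(p_{i,j_1}, p_{i,j_2})$. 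For (\textbf{D3}), capacity is preserved as equality with integer right-hand side $b_i'$, so exactly $b_i'$ of the three jobs end up on $i$ and $\sum_\ell c_\ell = 0$; the elementary identity $\sum_\ell c_\ell\, p_{i,j_\ell} = \sum_{\ell:\, c_\ell<0} |c_\ell|\,(p_{\max} - p_{i,j_\ell})$, with $p_{\max} = \max_\ell p_{i,j_\ell}$, combined with $\sum_{\ell:\, c_\ell<0} |c_\ell| < 1$, yields change $< p_{\max}$.

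The main obstacle I anticipate is the (\textbf{D3}) case, which hinges on the zero-sum identity above and on tracing the strict inequality back to the fact that each $\alpha_\ell$ lies in the open interval $(0,1)$. A secondary step to justify cleanly is the preservation claim $L_i^h = L_i^0$: the cleanest formulation is to include the load constraint in $\mathcal{S}$ as the equality $\sum_j p_{i,j} x_{i,j} = L_i^0$ (feasible for $x^*$ since $L_i^0 \leq T$) rather than as the inequality $\leq T$, so that the equality is preserved by every \textbf{RandMove} step until it is explicitly dropped by one of (\textbf{D1}), (\textbf{D2}), (\textbf{D3}).
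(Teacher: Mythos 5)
Your proposal is correct and follows essentially the same route as the paper: part (i) by observing that a capacity constraint dropped for an $M_1$ machine cannot have been tight, and part (ii) by a case analysis on whether the load constraint is dropped while the machine has one, two, or three floating jobs, using the rewritten (for $M_2$) or tight (for $M_3$) capacity constraint to restrict the possible integral completions — you merely spell out the ``simple algebra'' the paper elides and make explicit the (correct, and needed) point that the load is held at its initial value as an equality until dropped. One cosmetic note: the displayed ``identity'' in your (\textbf{D3}) case is really only an upper bound, since the terms with $c_\ell>0$ contribute an additional nonpositive amount, but that is all the argument requires.
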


\begin{proof}
For part (i), the only care to be taken is for machines $i$ that end up
in $M_1$ and hence have their capacity-constraint dropped. However,
as argued soon after the description of the three steps of an iteration,
note that such a machine cannot have a tight capacity-constraint when
this constraint was dropped; hence, even if the remaining job $j$ got
assigned finally to $i$, its capacity constraint cannot be violated.

Let us now prove (ii). Fix a machine $i$. If at all its load-constraint
was dropped, it must be when $i$ ended up in $M_1, M_2$ or $M_3$. The
case of $M_1$ is argued as in the previous paragraph. So suppose
$i \in M_{\lambda}$ for some $\lambda\in \{2,3\}$ when its load constraint got
dropped.  Let us first consider
the case $\lambda= 2$. Let the two jobs fractionally assigned on $i$ at that point
have processing times $(p_1,p_2)$ and fractional assignments $(y_1, y_2)$
on $i$, where $0 \leq p_1, p_2 \leq T$, $\max\{p_1, p_2\} > 0$, and $0 < y_1, y_2 < 1$.
If $y_1+y_2 \leq 1$, we know that at the end, the
assignment vector $X$ will have at most one of $X_1$ and $X_2$ being one.
Then, $p_1 X_1 + p_2 X_2 \leq \max\{p_1, p_2\} < p_1 y_1 + p_2 y_2 + \max\{p_1, p_2\}$ as required.
If $1< y_1+y_2 \leq 2$, then both $X_1$ and $X_2$ can be assigned and again,
$p_1 X_1 + p_2 X_2 < p_1 y_1 + p_2 y_2 + \max\{p_1, p_2\}$. For the case $\lambda=3$,
we know from ({\bf I3}) and ({\bf D3}) that its capacity-constraint must
be \emph{tight} at some integral value $u$ at that point, and that this
capacity-constraint was preserved until the end.  We must have $u = 1$
or $2$ here. Let us just
consider the case $u = 2$; the case of $u = 1$ is similar to the case of $\lambda = 2$ with $y_1 + y_2 \leq 1$.
Here again, simple
algebra yields
that if $0 \leq p_1, p_2, p_3 \leq T$ and $0 < y_1, y_2, y_3 < 1$
with $y_1 + y_2 + y_3 = u = 2$, then for any binary vector $(X_1, X_2, X_3)$ of
Hamming weight $u = 2$,
$p_1 X_1 + p_2 X_2 + p_3 X_3 < p_1 y_1 + p_2 y_2 + p_3 y_3 +
\max\{p_1, p_2, p_3\}$.
\end{proof}

Finally we have the following lemma.

\begin{lemma}
\label{lem:sched-cap-cost}
Algorithm \textbf{Sched-Cap} can be derandomized to create a schedule of cost
at most $C$.
\end{lemma}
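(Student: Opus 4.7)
The plan is to derandomize the rounding process by the method of conditional expectations applied to the linear cost function $c(X) = \sum_{i,j} c_{i,j} X_{i,j}$. The key fact to exploit is that a single call to \textbf{RandMove}$(Y, \mathcal{P})$ preserves every linear function in expectation: it replaces the current point by either $Y + f(r)\, r$ with probability $f(-r)/(f(r)+f(-r))$ or by $Y - f(-r)\, r$ with the complementary probability, and a direct calculation shows $\E[c(Y')] = c(Y)$ for every linear $c$. In particular, this is the reason the marginal-preservation property $\E[X_{i,j}^h] = x^*_{i,j}$ holds throughout the algorithm.

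First I would observe that the initial (extended) vector $\hat{X}^0 := x^*$ is LP-feasible, so $c(\hat{X}^0) \leq C$. At every iteration the two possible outcomes of \textbf{RandMove} have costs $c(\hat{X}^h) + f(r)\, c(r)$ and $c(\hat{X}^h) - f(-r)\, c(r)$, whose weighted average is exactly $c(\hat{X}^h)$; consequently at least one of the two outcomes has cost no larger than $c(\hat{X}^h)$. Derandomize by replacing the coin flip inside each invocation of \textbf{RandMove} by the deterministic rule ``pick the outcome of smaller (or equal) cost'', breaking ties arbitrarily. A one-line induction on $h$ then yields $c(\hat{X}^h) \leq c(\hat{X}^{h-1}) \leq \cdots \leq c(x^*) \leq C$, so the final integral vector $X$ satisfies $c(X) \leq C$.

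The only thing left to verify is that the derandomized algorithm still satisfies the capacity and makespan guarantees proved earlier. I expect this to be immediate rather than an obstacle: all of the bookkeeping in \textbf{Sched-Cap} (identifying the sets $M_k$, dropping constraints via (D1)--(D3), updating which variables are floating, and maintaining invariants (I1)--(I4)) depends only on the current deterministic vector and not on how the internal coin of \textbf{RandMove} was resolved. Lemma~\ref{lemma:vertex} certifies that $Y$ is never a vertex of $\mathcal{P}$ regardless of past choices, so \textbf{RandMove} is always applicable, and Lemma~\ref{lem:sched-cap-span} gives the capacity and $2T$-makespan bounds with probability one, so they hold along every possible trajectory, including the deterministic one chosen here. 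Hence the derandomization is efficient (two linear-cost evaluations per iteration) and produces a schedule of makespan at most $2T$, satisfying all capacity constraints, and of cost at most $C$, as required.
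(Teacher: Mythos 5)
Your proposal is correct and follows essentially the same route as the paper: the paper also notes that $\expect{X^h_{i,j}} = x^*_{i,j}$ throughout, concludes the expected final cost is at most $C$, and invokes the method of conditional expectations. You merely spell out the details the paper leaves implicit (that each \textbf{RandMove} is a fair two-point lottery for the linear cost, so the cheaper branch can be taken deterministically, and that the probability-one guarantees of Lemmas~\ref{lemma:vertex} and~\ref{lem:sched-cap-span} hold along every trajectory), which is a faithful elaboration rather than a different argument.
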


\begin{proof}
(Sketch)
Let $X_{i,j}^{h}$ denote the value of $x_{i,j}$ at iteration $h$. We know for all $i,j,h$,  $E[X^{h}_{i,j}]=x^{*}_{i,j}$, where $x^{*}_{i,j}$ is solution of \textbf{LP-Cap}. Therefore, at the end, we have that the total expected cost
incurred is $C$. The procedure can be derandomized directly by the method of
conditional expectations, giving an $1$-approximation to the cost.
\end{proof}

Lemmas \ref{lem:sched-cap-span} and \ref{lem:sched-cap-cost} yield
Theorem~\ref{thm:sched-cap}.

\smallskip We next turn to the proof of Theorem \ref{thm:matching}. The key difference from Theorem~\ref{thm:sched-cap}
is that constraint (Load') must now be approximated well \emph{both} from above and below by our rounding, as opposed to just bounding the deviation above. A less-critical difference is in (Assign'), where the $r_j$'s can be an arbitrary positive integers intead of $1$. 

\paragraph{{\bf Proof of Theorem \ref{thm:matching}}}

We now consider the full proof of Theorem \ref{thm:matching}. The following
integer program gives an optimal matching:

\begin{eqnarray*}
&&\sum_{i,j} c_{i,j}x_{i,j} \leq C ~~~~~~~~~~~~~~~~~~~~~ (\text{Cost'})\\
&&\sum_{i,j} x_{i,j} \geq r_j ~\forall j ~~~~~~~~~~~~~~~~~~~ (\text{Assign'}) \\
&& \sum_{j} p_{i,j}x_{i,j} = f_i ~\forall i  ~~~~~~~~~~~~~~~~~ (\text{Load'})\\
&& \sum_{j} x_{i,j} \leq b_{i} ~\forall i   ~~~~~~~~~~~~~~~~~ (\text{Capacity'})\\
&& x_{i,j} \in \{0,1\} ~\forall i,j\\
&& x_{i,j}=0~~\text{ if } p_{i,j} >\ell_i
\end{eqnarray*}

The proof of Theorem \ref{thm:matching} is quite similar to Theorem \ref{thm:sched-cap}.
We elaborate upon the necessary modifications. First, while hard-wiring those $X_{i,j}^{h} \in \{0,1\}$ and viewing
the current linear system as having only those $X_{i,j}^{h} \in (0,1)$ as variables, 
we update the assignment requirements of the jobs as well as the capacity constraints of the machines
accordingly. (That is, we subtract the contributions of the variables $X_{i,j}^{h} \in \{0,1\}$ to obtain the residual
demands $r_j$ and the residual capacities $b_i$.)  The dropping rules ({\bf D1}) and ({\bf D3})
remain the same. However, ({\bf D2}) is modified as follows:

({\bf Modified D2})  {\it For each $i \in M_2$, we drop its load constraint and rewrite its capacity constraint.
Let $j_1,j_2$ be the two jobs assigned to machine $i$ with fractional assignment $x_{i,j_1}$
and $x_{i,j_2}$. Then if $x_{i,j_1}+x_{i,j_2} \leq 1$, set the capacity constraint to
$x_{i,j_1}+x_{i,j_2} \leq 1$. Else if $1 < x_{i,j_1}+x_{i,j_2} <  2$, set the capacity constraint to
$x_{i,j_1}+x_{i,j_2} \geq 1$}.

Lemma \ref{lemma:vertex},  Lemma \ref{lem:sched-cap-cost} remain unchanged. We have a new Lemma \ref{lem:match-cap-span} corresponding to Lemma \ref{lem:sched-cap-span}, which we prove next.

\begin{lemma}
\label{lem:match-cap-span}
Let $X$ denote the final rounded vector.
Then $X$ satisfies with probability one: (i) all capacity-constraints
on the machines are satisfied, and (ii) for all $i$,
$\sum_{j}x_{i,j}^{*}p_{i,j} -
\text{max}_{j \in J:~x^{*}_{i,j}\in (0,1)}p_{i,j}< \sum_{j \in J} X_{i,j}p_{i,j} < \sum_{j}x_{i,j}^{*}p_{i,j} +
\text{max}_{j \in J:~x^{*}_{i,j}\in (0,1)}p_{i,j}$.
\end{lemma}

\begin{proof}
Part (i) is similar to Part (i) of Lemma \ref{lem:sched-cap-cost} and follows
from the facts that the capacity constraints are never violated and machines in $M_1$
cannot have tight capacity constraints.

Let us now prove (ii). Note that in ({\bf Modified D2})
the upper bound on capacity constraint is maintained as in ({\bf D2}).
Hence from Lemma \ref{lem:sched-cap-span}, we get $\sum_{j \in J} X_{i,j}p_{i,j} < \sum_{j}x_{i,j}^{*}p_{i,j} +
\text{max}_{j \in J:~x^{*}_{i,j}\in (0,1)}p_{i,j}$. So we only need to show the lower bound
on the load.
Fix a machine $i$. If at all its load-constraint was dropped, it must be when
$i$ ended up in $M_1\cup M_2 \cup M_3$. In the case of $M_1$, at most one job fractionally assigned to it
may not be assigned in the final rounded vector. So suppose $i \in M_{\lambda}$ for some $\lambda \in \{2,3\}$
when $i$ has its load constraint dropped. Let us first consider the case of $\lambda =2$. Let the two jobs
fractionally assigned to $i$ at that point
have processing times $(p_1,p_2)$ and fractional assignments $(y_1, y_2)$
on $i$, where $0 \leq p_1, p_2 \leq T$, and $0 < y_1, y_2 < 1$.
If $y_1+y_2 \leq 1$, then at the end, none of the jobs may get assigned.
Simple algebra now shows that
$0 > p_1 y_1 + p_2 y_2 - \max\{p_1, p_2\}$ as required.
If $1< y_1+y_2 \leq 2$, then at least one of the two jobs $X_1$ and $X_2$ get assigned to $i$ and again,
$p_1 X_1 + p_2 X_2 > p_1 y_1 + p_2 y_2 - \max\{p_1, p_2\}$. For the case $\lambda =3$,
we know from ({\bf I3}) and ({\bf D3}) that $i$'s capacity-constraint must
be \emph{tight} at some integral value $u$ at that point, and that this
capacity-constraint was preserved until the end.  We must have $u = 1$
or $2$ in this case. Let us just
consider the case $u = 2$; the case of $u = 1$ is similar to the case of $\lambda= 2$ with $y_1 + y_2 \leq 1$.
Here again, simple
algebra yields
that if $0 \leq p_1, p_2, p_3 \leq T$ and $0 < y_1, y_2, y_3 < 1$
with $y_1 + y_2 + y_3 = u = 2$, then for any binary vector $(X_1, X_2, X_3)$ of
Hamming weight $u = 2$,
$p_1 X_1 + p_2 X_2 + p_3 X_3 > p_1 y_1 + p_2 y_2 + p_3 y_3 -\max\{p_1, p_2, p_3\}$.
\end{proof}

Lemmas \ref{lem:match-cap-span} and \ref{lem:sched-cap-cost} yield
Theorem~\ref{thm:matching}.

This completes the present section. We have shown
how a random subgraph of a bipartite graph with hard degree-constraints
can be obtained while near-optimally satisfying a collection of linear
constraints as well as a given cost-budget. As a special case of this,
we obtained a $2$-approximation algorithm for
the generalized assignment problem
with hard capacity-constraints on the machines.

\section{Scheduling with Outliers}
\label{sec:sched-outlier}

In this section, we consider GAP with outliers
and with a hard profit constraint \cite{Gupta}. Formally, the problem is as follows. 

Suppose we are given $m$ machines and $n$ jobs, where job $j$ requires processing time of $p_{i,j}$ in machine $i$, incurs a cost of $c_{i,j}$ if assigned to $i$, and provides a profit of $\pi_{j}$ if scheduled.
Let $x_{i,j}$ be the indicator variable for job $j$ to be scheduled on machine $i$. 
The goal is to minimize the makespan $T=\text{max}_{i}\sum_{j}x_{i,j}p_{i,j}$, subject to the constraints that the total cost $\sum_{i,j}x_{i,j}c_{i,j}$ is at most $C$ and total profit $\sum_{j}\pi_{j}\sum_{i}x_{i,j}$ is at least $\Pi$. Dropping a few outliers with high processing requirement can often improve the scheduling performance substantially; however, we would like to drop as few outliers as possible. The problem formulation captures this by assigning a profit to each scheduled job, in addition to maintaining the total cost of assignment and the makespan constraints.

Our main contribution here is the following:
\begin{theorem}
\label{thm:sched-outlier}
For any given constant $\epsilon >0$, there is an efficient algorithm \textbf{Sched-Outlier} that returns a schedule of profit at least $\Pi$, cost at most $C(1+\epsilon)$ and makespan at most $(2+\epsilon)T$, where $T$ is the optimal makespan among all schedules that simultaneously have cost $C$ and profit $\Pi$.
\end{theorem}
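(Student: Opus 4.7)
The plan is to imitate \textbf{Sched-Cap} with the hard profit constraint playing the role of the hard capacity constraint, while using a standard knapsack-style enumeration to absorb the $(1+\epsilon)$ slack on cost and the $(2+\epsilon)$ slack on makespan. First, binary-search for $T$ to multiplicative accuracy $(1+\epsilon)$, and zero out $x_{i,j}$ whenever $p_{i,j}>T$. Next, call a job \emph{big-profit} if $\pi_j \geq \epsilon\Pi$ and an edge \emph{big-cost} if $c_{i,j} \geq \epsilon C$; any feasible schedule uses at most $1/\epsilon$ of each kind. Enumerate over all $n^{O(1/\epsilon)}m^{O(1/\epsilon)}$ placements of these big items on machines, committing those to the schedule.

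For each guess, solve the residual LP on small-profit, small-cost jobs with residual targets $\Pi' = \Pi - \Pi_{\mathrm{big}}$, $C' = C - C_{\mathrm{big}}$, and machine-load budgets $T - L_i^{\mathrm{big}}$:
\begin{align*}
&\sum_i x_{i,j} \leq 1 \;\;\forall j, \qquad \sum_j p_{i,j} x_{i,j} \leq T - L_i^{\mathrm{big}} \;\;\forall i, \\
&\sum_{i,j} c_{i,j} x_{i,j} \leq C', \qquad \sum_{i,j} \pi_j x_{i,j} \geq \Pi', \qquad x_{i,j} \in [0,1],
\end{align*}
and, without loss of generality, take the profit constraint to be tight at the optimum $x^*$. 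Then run the \textbf{Sched-Cap} rounding essentially verbatim: project to floating coordinates, drop load/capacity constraints on sparse machines ($M_1$, $M_2$, $M_3$) via rules (D1)--(D3), and invoke \textbf{RandMove}. The profit constraint is included in the system and kept tight throughout each move; the cost constraint is omitted from the \textbf{RandMove} system and controlled separately by derandomization via conditional expectations as in Lemma~\ref{lem:sched-cap-cost}, giving integer cost $\leq C'$, so total cost $\leq C + \epsilon C$. Because all residual small-profit jobs satisfy $\pi_j < \epsilon\Pi$, any integrality slack of one such job in profit is absorbed by pre-guessing one extra ``buffer'' big-profit job in the enumeration, ensuring profit $\geq \Pi$. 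The makespan analysis follows Lemma~\ref{lem:sched-cap-span}: each machine load is at most $L_i^{\mathrm{big}} + 2(T - L_i^{\mathrm{big}}) \leq 2T$, and the binary-search inaccuracy contributes the additional $\epsilon T$.

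The main technical obstacle is upgrading Lemma~\ref{lemma:vertex} to accommodate the extra global profit constraint. A naive count shows that in the worst configuration---$m_1 = m_3 = 0$, every machine in $M_2 \cup M_4$ has a tight capacity-like constraint, \emph{and} the profit constraint is tight---the number of tight constraints exceeds the number of floating variables by exactly one, so $Y$ could genuinely be a vertex. I expect to resolve this by either (a) extending the linear-dependence argument of the original Lemma~\ref{lemma:vertex} to show that in this configuration the profit, (Assign), and tight capacity-like constraints remain linearly dependent (using the fact that the capacity sums equal the assignment sums and the profit is a further weighted sum), or (b) dropping one additional load constraint on a suitably chosen $M_4$ machine. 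Under option (b) the machine's load grows by at most one extra processing time $\leq T$; since this ``extra drop'' can occur at most once per guess, the charge can be absorbed by enumerating, at the outset, a single additional big-load assignment on that machine (treating jobs with $p_{i,j} \geq \epsilon T$ as big), preserving the $(2+\epsilon)T$ bound and yielding Theorem~\ref{thm:sched-outlier}.
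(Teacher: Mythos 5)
Your high-level architecture (guess big-cost assignments, solve an LP, iterate \textbf{RandMove} with judiciously dropped constraints, derandomize the cost via conditional expectations) matches the paper, but two of your load-bearing steps fail. First, the claim that ``any feasible schedule uses at most $1/\epsilon$'' big-profit jobs is false: profit is a \emph{covering} constraint ($\geq \Pi$), so the optimum may schedule arbitrarily many jobs with $\pi_j \geq \epsilon\Pi$, and your enumeration over their placements is not polynomial. The related ``buffer big-profit job'' device is also not a valid reduction -- demanding extra profit from the residual LP changes the feasible region and the optimal $T$. The paper instead keeps the profit constraint \emph{inside} the rounding system: whenever a job's assignment constraint is dropped (which must happen, since jobs are droppable and singleton floating jobs arise), it is folded into a single aggregated profit equality that is preserved by every \textbf{RandMove}, and the only place profit can be lost (two singleton jobs with combined fractional assignment below $1$) is analyzed to show the retained job's profit covers both.

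Second, your treatment of the vertex obstruction is where the real work lives, and neither proposed fix is sound. Fix (a): the profit row is a $\pi$-weighted combination of the assignment rows, but linear dependence of the \emph{tight} system requires all assignment constraints to be tight, which fails exactly because jobs may be fractionally dropped; also note the outlier problem has no capacity constraints, so the ``tight capacity-like constraints on $M_2 \cup M_4$'' configuration you carry over from \textbf{Sched-Cap} does not exist here. Fix (b): dropping a load constraint on an arbitrary degree-$d$ machine can increase its load by up to $d$ jobs' worth of processing time (each up to $T$) unless its residual fractional assignment is already at least $d-1-\epsilon$; moreover such drops can recur in many iterations, not ``once per guess,'' and enumerating big-load assignments is not polynomial (a makespan-$T$ schedule can contain $\Theta(m/\epsilon)$ of them). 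The paper's Lemma~\ref{lem:key} is precisely the replacement: a configuration analysis showing that whenever at least $1/\epsilon$ machines remain floating, some machine has fractional assignment at least $d-1-\epsilon$, so its load constraint can be dropped at a cost of only $(1+\epsilon)\max_j p_{i,j}$; when fewer than $1/\epsilon$ machines remain, a combinatorial endgame assigns the leftovers, and the $\epsilon^2 C$ (not $\epsilon C$) threshold on guessed costs is what keeps that endgame's cost overhead to $\epsilon C$. These pieces are absent from your argument, so the $(2+\epsilon)T$ and profit-$\Pi$ guarantees are not established.
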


This is an improvement over the work of Gupta, Krishnaswamy, Kumar and Segev \cite{Gupta}, where they constructed a schedule
with makespan $3T$, profit $\Pi$ and cost $C(1+\epsilon)$. In addition, our approach also accommodates \emph{fairness} -- a basic requirement in
dealing with outliers -- especially when problems have to be run repeatedly. It ensures that each job gets a fair chance of being scheduled.
We formulate fairness via a stochastic program that specifies for each job $j$,
a lower-bound $r_j$ on the probability that it gets scheduled: 

\begin{definition}[Fairness]
Given a set of jobs $J$ and a real $r_j \in [0,1]$ for all $j \in J$, a schedule is said to be fair if for every job $j$, it is assigned to a machine with probability at least $r_j$.
\end{definition}

We adapt our approach to honor such requirements:

\begin{theorem}
\label{thm:sched-fairness}
Suppose we are given a vector $r = (r_j:~j \in J) \in [0,1]^n$ of fairness requirements.  
There is an efficient randomized algorithm  that returns a schedule of expected profit at least $\Pi$, expected cost at most $2C$, makespan at most $3T$ with probabiity $1$, and guarantees that for each job $j$, it is scheduled with the specified probability
$r_j$, where $T$ is the optimal expected makespan with expected cost $C$, expected profit $\Pi$, and under the
fairness requirements $r$. 
\end{theorem}

We start with Theorem \ref{thm:sched-outlier} and describe the algorithm \textbf{Sched-Outlier} first. Next,
we prove Theorem \ref{thm:sched-fairness}. While the main ideas behind \textbf{Sched-Outlier} are similar to those of
Section~\ref{sec:sched-cap}, the choice of constraints to drop becomes more complex. It is now possible that there exist singleton jobs each with only one fractional assignment to a machine, which was not possible if all jobs need to be assigned as in Section~\ref{sec:sched-cap}. Hence, we may not be able to maintain the (Assign) constraints always. For jobs whose (Assign) constraints are dropped, we carefully maintain the total profit obtained from these jobs. This leads to a few possible configurations at a vertex of the polytope. We provide a rounding scheme for each of these configurations, leading to the desired approximation factors. We now describe the algorithm in more detail. 

\paragraph{{\bf Algorithm  Sched-Outlier}}
 The algorithm starts by guessing the optimal makespan $T$ by binary search as in \cite{LST}.
 If $p_{i,j} > T$, then $x_{i,j}$ is set to $0$. Next let $\epsilon > 0$ be the given constant. The
 running time of the algorithm depends on $\epsilon$ and is $O(n^{(1/\epsilon)^{O(1)}})$. 
 We ``guess" all assignments $(i,j)$ where $c_{i,j} > \epsilon' C$, with $\epsilon'=\epsilon^2$. Any valid schedule
can have at most $1/\epsilon'$ pairs with assignment costs higher than
$\epsilon' C$; hence, this guessing (i.e., enumeration)
can be done in time $O((mn)^{\frac{1}{\epsilon'}})=O((mn)^{1/\epsilon^2})$. For all $(i,j)$ with
$c_{i,j} > \epsilon' C$, let $\mathcal{G}_{i,j} \in \{0,1\}$ be a correct
guessed assignment: by our polynomial-time enumeration, we may thus assume we know the optimal $\mathcal{G}_{i,j}$. For
all $(i,j)$ with $c_{i,j} > \epsilon' C$ we hard-wire $x_{i,j}=\mathcal{G}_{i,j}$.

The problem is naturally formulated as the following integer linear program:

\begin{alignat*}{3}
&\sum_{i,j} c_{i,j}x_{i,j} \leq C & ~~~~&(\text{Cost})\\
&\sum_{i} x_{i,j} =y_j  ~~\forall j& ~~~~ &(\text{Assign}) \\
&\sum_{j} p_{i,j}x_{i,j} \leq T ~~\forall i& ~~~~ &(\text{Load})\\
&\sum_{j} \pi_{j}y_{j} \geq \Pi & &(\text{Profit})\\
&x_{i,j} \in \{0,1\}, y_{j} \in \{0,1\} ~~\forall i,j && \\
&x_{i,j}=0 & ~\text{if } p_{i,j} > T & \\
&x_{i,j} = \mathcal{G}_{i,j}&~ \text{ if }
c_{i,j} > \epsilon' C &
\end{alignat*}

We relax the constraint ``$x_{i,j} \in \{0,1\}$ and $y_{j}\in \{0,1\}$'' to
``$x_{i,j} \in [0,1]$ and $y_{j}\in [0,1]$'' to obtain the LP relaxation
\textbf{LP-Out}. We solve the LP to obtain an optimal LP solution $x^{*}, y^{*}$;
we next show how \textbf{Sched-Outlier} rounds $x^{*}, y^{*}$ to obtain
the claimed approximation. 

The rounding proceeds in stages as in Section \ref{sec:sched-cap}.
Each variable maintains its initial assignment in $x^{*}, y^{*}$ in expectation
over the course of rounding. Thus, as we did in Section~\ref{sec:sched-cap}, we drop the cost constraint and finally
derandomize the algorithm to restore this constraint to within a $(1 + \epsilon)$ multiplicative factor as 
claimed by Theorem~\ref{thm:sched-outlier}. 
Also note that if we maintain all the assignment constraints, then the profit constraint can be dropped and
is not violated. Therefore, we consider the profit constraint if and only if 
one or more assignment constraints have been dropped. In addition, we only need to maintain
the total profit obtained from the jobs for which the assignment constraints have been dropped.
 We now proceed  to describe the rounding in each
stage formally.

\medskip

\noindent{\bf Rounding Algorithm.}
Note that $x^{*}_{i,j} \in [0,1]$ denotes the fraction of job $j$ assigned to machine $i$ in $x^{*}$. Initially, $\sum_{i} x^{*}_{i,j}=y^{*}_{j}$.
Initialize $X=x^{*}$.
The algorithm is composed of several iterations;
the random values at the end of iteration $h$ of the overall algorithm are
denoted by $X^{h}$. (Since $y_j=\sum_i x_{i,j}$,
$X^{h}$ is effectively the set of variables.)
Each iteration $h$ (except perhaps the last one)
conducts a randomized update using \textbf{RandMove} on a suitable
polytope constructed from a \textit{subset} of the constraints of \textbf{LP-Out}. Therefore, for all $h$ except perhaps the last, we have
$\expect{X_{i,j}^{h}}=x^{*}_{i,j}$. A variable $X_{i,j}^{h}$ is said to be
\emph{floating} if it lies in $(0,1)$, and a job is \emph{floating} if
it is not yet finally assigned.

\smallskip \noindent \textbf{Key Notation: the current graph $(J, M, E)$.} 
We will throughout take $G = (J, M, E)$ to be the subgraph of the original bipartite graph that is composed of
\textbf{only} the currently-floating edges $(i,j)$. 
We \emph{always} remove degree-$0$ nodes from $G$. 
The following notation always holds: the machines of ``degree'' $k$ in an
iteration are those with exactly $k$ floating jobs assigned fractionally (i.e., those that have degree exactly $k$ in the
current $G$) and similarly, jobs of ``degree'' $k$ are those assigned fractionally to exactly
$k$ machines currently. Note that since we allow $y_j < 1$, there
can exist singleton (i.e., degree-$1$) jobs that are floating. 

Suppose we are at the beginning of some iteration $(h+1)$ of the
overall algorithm; so we are currently looking at the values $X_{i,j}^{h}$.
We will maintain the following invariants:

\smallskip
\noindent 
\textbf{Invariants across iterations:}
\begin{description}
\item[(I1')] Once a variable $x_{i,j}$ gets assigned to $0$ or $1$, it is
never changed;
\item[(I2')] If $j$ is not a singleton, then $\sum_i x_{i,j}$ remains at its
initial value;
\item[(I3')] The constraint (Profit) always holds;
\item[(I4')] Once a constraint is dropped, it is never reinstated. (Recall that even if a constraint is dropped, the variables
associated with it remain.) 
\end{description}

\smallskip

Algorithm \textbf{Sched-Outlier} starts by initializing with $X_{i,j}^{(0)} = x_{i,j}^*$. Iteration $(h+1)$ for $h \geq 0$ consists of four major steps: 

\SN 1. We remove (hard-wire) all $X_{i,j}^{h} \in \{0,1\}$ as in Section \ref{sec:sched-cap}
, i.e., we project $X^{h}$ to those coordinates $(i,j)$ for which
$X_{i,j}^{h} \in (0,1)$, to obtain the current vector $Z$ of ``floating''
variables; let $\mathcal{S} \equiv (A_h Z = u_h)$ denote
the current linear system that represents \textbf{LP-Out}. ($A_h$ is some
matrix and $u_h$ is a vector.)

\SN 2. Let $Z \in \Re^v$ for some $v$; note that $Z \in (0,1)^v$.
Let $M_k$ and $N_k$ denote the set of degree-$k$ machines and degree-$k$
jobs respectively, with $m_k = |M_k|$ and $n_k = |N_k|$.
We will now drop/replace some of the constraints in $\mathcal{S}$:
\begin{description}
\item[(D1')] for each $i \in M_1$, we drop its load constraint from
$\mathcal{S}$;
\item[(D2')] for each $j \in N_1$, we drop its assignment constraint from $\mathcal{S}$. Define a job $j$ to be \emph{tight} if $\sum_{i \in M}Z_{i,j}=1$. By definition $j \in N_1$ are non-tight. Let $J_N$ denote all the non-tight jobs. Hence $N_1 \subseteq J_N$. Maintain a single profit constraint: $$\sum_{j \in {J_N} }Z_{i,j}\pi_{j}=\sum_{j \in J_N} X^{h}_{i,j} \pi_{j}.$$ (Note that at this point, the $X^{h}_{i,j}$
are some known values.)

\end{description}

Thus while the assignment constraints of the singleton jobs are not maintained, their contribution to profit is maintained by
having \emph{one} profit constraint for non-tight jobs. As we noted earlier, it is not required to maintain
the contribution to profit by the non-singleton jobs for
which the assignment constraints are maintained explicitly.

\SN 3. If $Z$ is \textbf{not} a vertex of $\mathcal{S}$, we skip this step and go to Step 4; else if
$Z$ is a vertex of $\mathcal{S}$, we do the following. 
Define the fractional assignment of a machine $i$ by $h_i=\sum_{j \in J} Z_{i,j}$. Drop all the assignment constraints of the non-tight jobs (that is jobs in $J_N$) and maintain a single profit constraint: $$\sum_{j \in N_1 \cup {J_N} }Z_{i,j}\pi_{j}=\sum_{j \in N_1 \cup J_N} X^{h}_{i,j} \pi_{j}.$$ 
While there exists a machine $i'$ whose degree $d$ satisfies $h_{i'} \geq (d-1-\epsilon)$, drop the load constraint on machine $i'$.

\SN 4. Let $\mathcal{P}$ denote the polytope defined by the current system of
constraints. If $Z$ is not a vertex of $\mathcal{P}$, invoke \textbf{RandMove}($Z,\mathcal{P}$).
(\textbf{Comment:} As usual, invoking \textbf{RandMove} leads to progress for us, since it reduces the number of floating variables by at least $1$, or increases the number of tight constraints by at least $1$.) Else 
(in this case the current iteration will be the \textbf{last} iteration) we proceed as follows
 depending on the configuration of machines and jobs in the system and \textbf{halt}. If none of the following configurations is achieved (which we will
 show never happens at a vertex), then we report error and exit. There are five possible configurations, which we
describe next along with the steps we take for each.

\SN {\bf Config-1:} \emph{The Machine-job bipartite graph $(J,M,E)$ consists only of vertex-disjoint cycles.} In this
configuration, we orient the edges in the bipartite graph to assign the jobs in $(J,M,E)$ in such a way that each machine gets at most one job.  Note that such an orientation is easy with disjoint cycles since they have even lengths.

\SN {\bf Config-2:} \emph{The Machine-job bipartite graph $(J,M,E)$ consists of vertex-disjoint cycles and exactly one path -- that is vertex-disjoint from the cycles -- that has both end-points being job nodes. Thus there are two singleton jobs.} In this
case, we discard one among the two singleton jobs that has less profit; we again orient the edges in the remaining bipartite graph to assign the remaining jobs such that each machine gets at most one job. 

\SN {\bf Config-3:} \emph{There is exactly one job of degree $3$ and one singleton job; the rest of the jobs have degree $2$ and all of the machines have degree $2$.} 
 Here we assign the singleton job to the degree-$2$ machine it is fractionally attached to and remove the other edge (but not the job) associated with that machine.
  We are left with disjoint cycles. Orient the edges in the cycles of the bipartite graph to assign the remaining jobs in such a way that each machine gets at most one job.

\SN {\bf Config-4:} \emph{There is only one degree-$3$ machine with {\it one} singleton job attached to it; the rest of the machines have exactly
 two non-singleton jobs attached to each of them fractionally. Each non-singleton job is attached fractionally to exactly two machines.}
In this configuration, we assign the singleton job and the cheaper (less processing time) of the two non-singleton jobs to the degree-$3$ machine. The rest of the
  jobs and the machines form disjoint cycles in the machine-job bipartite graph or form disjoint paths, each with the number of machines in it and the number of jobs in it being the same. 
  Orient the edges in this remaining bipartite graph in such a way that each machine gets one among the two jobs fractionally attached to it.

\SN {\bf Config-5}: \emph{The machine-job bipartite graph $(J,M,E)$ consists of vertex-disjoint cycles and exactly one extra edge with one singleton job and one singleton machine.} Here, we 
 assign the singleton job to the singleton machine. Orient the edges in the cycles of the bipartite graph to assign the remaining jobs in such a way that each machine gets at most one job.

\begin{figure}
\begin{centering}
    \includegraphics[scale=0.4]{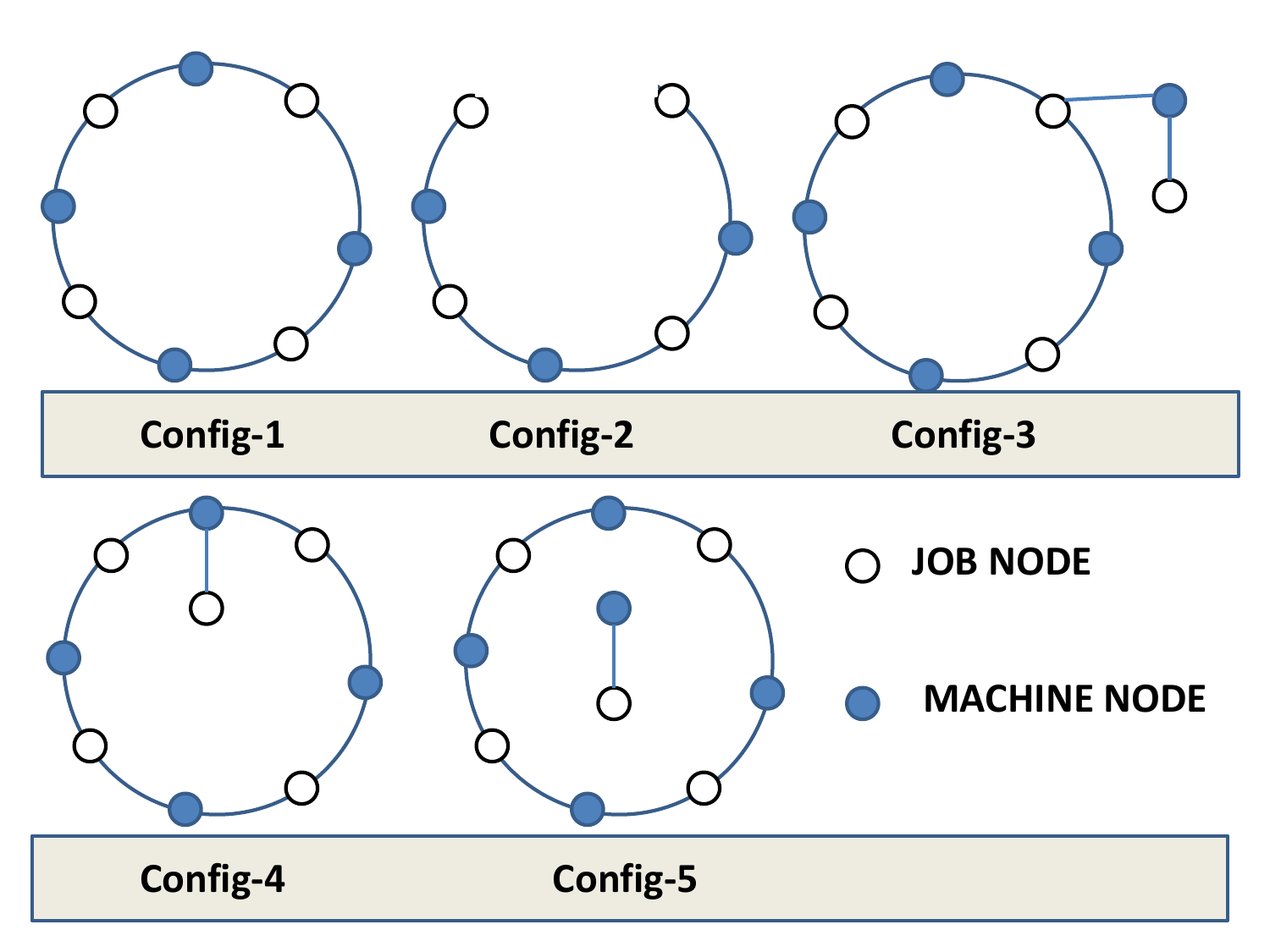}
    \caption{Different configurations of the machine-job bipartite graph in step 4 of Sched-Outlier}
     \end{centering}
 \label{fig:config}
  \end{figure}

The different configurations are shown pictorially in the figure. This ends the description of the algorithm.

 \medskip

\paragraph{{\bf Analysis}}
Our analysis follows the following structure. First, we prove two key lemmas -- Lemma \ref{lem:key1} and Lemma \ref{lem:key2} -- which show that if $Z$ is a vertex and the algorithm reaches step $4$, then
one of the five configurations as described above happens and also that the number of machines in $G$ is 
lesser than
$\frac{1}{\epsilon}$. Lemmas \ref{lem:key1}  and \ref{lem:key2} are followed by Lemma \ref{lem:profit}. Lemma \ref{lem:profit}
establishes that the dropping and the modification of constraints
in step 2 and 3, along with the assignment of jobs in step 4 do not violate the load constraint by more than a factor of $(2+\epsilon)$
and maintain the profit constraint.  Lemma \ref{lem:cost} bounds the cost.

Recall that in the bipartite
graph $G=(J,M,E)$, we have in iteration $(h+1)$ that $(i,j) \in E$
iff $X_{i,j}^{h} \in (0,1)$; also, any job or machine having degree $0$ is \textbf{not} part of $G$.

\begin{lemma}
\label{lem:key1}
If $Z$ is a vertex of the polytope at the beginning of step $3$, then the following must be true at the beginning of step $3$: (i) one of the five configurations described in step $4$ must occur then, and (ii) the number of floating variables must equal the number of constraints in our system then. 
\end{lemma}


\begin{proof}
Let us consider the different possible configurations of $G$ when $Z$ becomes a vertex of the polytope $\mathcal{P}$ at the beginning of step $3$.  
There are several cases to consider depending on the number of singleton floating jobs in $G$ in that iteration. For each
case, we will prove (i) and (ii). 

\SN \textbf{Case 1:} There is no singleton job. 
We have $n_1=0$. Then, the number of constraints in $\mathcal{S}$ is $$EQ=\sum_{k \geq 2} m_{k} + \sum_{k \geq 2} n_{k}.$$ Recall that since there is no
singleton job, we do not consider the profit constraint explicitly. The number of floating variables is $v=\sum_{k \geq 2} k n_{k}$; alternatively, $v=\sum_{k \geq 1} k m_{k}$. Therefore, $$v=\sum_{k \geq 2} \frac{k}{2} (m_{k}+n_{k}) +\frac{m_1}{2}.$$ $Z$ being a vertex of $\mathcal{P}$, $v \leq EQ$. Thus, we must have $n_k = m_k=0$ for all $k \geq 3$ and $m_1=0$. Hence, every floating machine has exactly two floating jobs assigned to it and every floating job is assigned exactly to two floating machines. This is handled by Config-1, which also satisfies $v = EQ$ as required by part (ii) of the
lemma.  

\SN \textbf{Case 2:} There are at least three singleton jobs.
We have $n_1 \geq 3$. Then the number of linear constraints is $EQ=\sum_{k \geq 2} m_{k} + \sum_{k \geq 2} n_{k} + 1$, where the last ``$1$'' comes from the single profit constraint. The number of floating variables $v$ again by the averaging argument as above is
 $$v=\frac{n_1}{2}+\sum_{k \geq 2} \frac{k}{2} (m_k+n_{k}) +\frac{m_1}{2} \geq \frac{3}{2} + \sum_{k \geq 2} \frac{k}{2}( m_{k}+n_{k}) +\frac{m_1}{2}.$$ Hence, the system is always underdetermined and $Z$ cannot be a vertex of $\mathcal{P}$.

\SN \textbf{Case 3:} There are exactly two singleton jobs.
We have $n_1=2$. Then the number of linear constraints is $$EQ=\sum_{k \geq 2} m_{k} + \sum_{k \geq 2} n_{k} + 1;$$ again the last ``$1$'' comes from the single profit constraint. The number of floating variables $v$ by the averaging argument is
 $$v=\frac{n_1}{2}+\sum_{k \geq 2} \frac{k}{2} (m_k+n_{k}) +\frac{m_1}{2} \geq 1 + \sum_{k \geq 2} \frac{k}{2}( m_{k}+n_{k}) +\frac{m_1}{2}.$$ Thus we must have $n_k = m_k = 0$ for all $k \geq 3$, and $m_1=0$ -- and
thus also that $v = EQ$. Also, every floating machine has exactly two floating jobs assigned to it and each job -- except for the two singleton jobs -- is assigned to exactly two machines fractionally: this is handled by Config-2.

\SN \textbf{Case 4:} There is exactly one singleton job.
We have $n_1=1$ here. Then the number of linear constraints is 
\begin{equation}
\label{case4:EQ}
EQ=\sum_{k \geq 2} m_k + \sum_{k \geq 2} n_k + 1.
\end{equation}
The number of floating variables is
\begin{equation}
\label{case4:v}
v \geq \frac{1}{2}+n_{2}+\frac{3}{2}n_{3}+\frac{m_1}{2}+m_2+\frac{3}{2}m_{3}+\sum_{k \geq 4} \frac{k}{2}(m_k+n_{k}).
\end{equation}
 If $Z$ is a vertex of $\mathcal{P}$, then $v \leq EQ$. There are only three possible sub-cases that might arise in this case: 

\SN (i) $n_3 = 3$ (and $n_1 = 1$). It is easy to check here that for the r.h.s.\ of (\ref{case4:v}) to be upper-bounded by
the r.h.s.\ of (\ref{case4:EQ}), all the other jobs must have degree $2$ and all the machines must have degree $2$. This is handled by Config-3, and we have $v = EQ$. 

\SN (ii) $m_3 =1$ (and $n_1 = 1$). Just as in sub-case (i), the rest of the jobs  and machines must have degree $2$. This is handled by Config-4; we again have $v = EQ$ here.

\SN (iii) $m_1 = n_1 = 1$. The rest of the jobs and machines have degree $2$. This is handled by Config-5 and again 
satisfies $v = EQ$. 
\end{proof}

 \begin{lemma}
\label{lem:key2}
(a) Let $m$ denote the number of machine-nodes in $G=(J,M,E)$ at the beginning of step $4$. 
If $m \geq \frac{1}{\epsilon}$, then $Z$ is not a vertex of the polytope at the beginning of step $4$.
(b) If $Z$ was a vertex of the polytope at the beginning of step $4$ and if Config-2 held during this step, 
then the total fractional assignment of the two singleton jobs is less than $1$. 
\end{lemma}
\begin{proof}
Most of this proof is centered on (a); we handle (b) when we address Config-2 below. 

Suppose $Z$ is a vertex of the polytope at the beginning of step $4$. Then by Lemma~\ref{lem:key1}(i), one of the five configurations described in step $4$ must occur. Our strategy now is to show that if
$m \geq \frac{1}{\epsilon}$, then it \emph{cannot be} that one of the following two happened in step $3$:
(a) {\it we were not able to drop any constraint in step $3$}, or (b) {\it we dropped exactly one constraint -- which was 
an assignment constraint for a non-tight job -- in step $3$ but also added one profit constraint in step $3$}. 
Given  Lemma~\ref{lem:key1}(ii), the impossibility of (a) and (b) would then show that our system is underdetermined at
the beginning of step $4$ if $m \geq \frac{1}{\epsilon}$ as required. 

In any configuration, if there is a cycle with all tight jobs, then there always exists a machine with total fractional assignment at least $1$ and hence its load constraint is dropped in step $3$ -- as its degree is $2$. So we assume there is \emph{no such cycle} in any configuration, since the proof is complete otherwise. 

Now suppose the algorithm reaches Config-1. If there are two non-tight jobs, then we drop two assignment constraints and only add one profit constraint. Thus the system becomes underdetermined. Therefore, there can be at most one non-tight job and only one cycle overall (say $C$), since we have assumed above that there is no cycle with all jobs tight. Let $C$ have $m$ machines and thus $m$ jobs. Therefore, $\sum_{i,j \in C} x_{i,j} \geq m-1$. Thus there exists a machine such that the total fractional assignment of jobs on that machine is at least $\frac{m-1}{m}=1-1/m$. If $m \geq \frac{1}{\epsilon}$, then there exists a machine  with degree $2$ and with total fractional assignment at least $(1-\epsilon)$: thus the load-constraint on that machine gets dropped, making the system underdetermined.

Suppose the algorithm reaches Config-2: we also handle part (b) of the Lemma here.
In this case, all the non-singleton jobs must be tight for $Z$ to be a vertex. If there are $m$ machines, then the number of non-singleton jobs is $m-1$. Let the two singleton jobs be $j_1$ and $j_2$, and the two machines to which jobs $j_1$ and $j_2$ are fractionally attached with be $i_1$ and $i_2$ respectively. If $x_{i_1,j_1}+x_{i_2,j_2} \geq 1$, then the total fractional assignment from all the jobs in the system is at least $m$; thus the machine with maximum
fractional assignment must have an assignment at least 1. Since this machine has degree $2$, its load constraint would have been dropped -- a contradiction, thus also proving (b). Thus, the only case to consider (for part (a)) is that $x_{i_1,j_1}+x_{i_2,j_2} < 1$, where the total fractional assignment of all the jobs in the system is at least $m-1$. Thus there exists a machine such that the total fractional assignment of jobs on that machine is $\geq \frac{m-1}{m}=1-1/m$. If $m \geq \frac{1}{\epsilon}$, then there exists a machine  with degree $2$ and with total fractional assignment 
at least $(1-\epsilon)$. Hence the load constraint on that machine gets dropped, making the system underdetermined.

 For Config-3 and Config-5, if $Z$ is a vertex of $\mathcal{P}$, then all the non-singleton jobs must be tight and using essentially the same argument as above, there exists a machine with fractional assignment at least $(1-\epsilon)$ if the algorithm reaches Config-3 and there exists a machine with fractional assignment $1 > 2 - 1 - \epsilon$ if the algorithm reaches Config-5.

If the algorithm reaches Config-4, then again all the non-singleton jobs must be tight. If the degree-$3$ machine has fractional assignment at least $2-\epsilon$, then its load constraint can be dropped to make the system underdetermined. Otherwise, the total assignment to the degree-$2$ machines from all the jobs in the cycle is at least $m-2+\epsilon$. Therefore, there exists at least one degree-$2$ machine with fractional assignment at least $\frac{m-2+\epsilon}{m-1}=1-\frac{1-\epsilon}{m-1}\geq 1-\epsilon$, if $m \geq \frac{1}{\epsilon}$. The load-constraint on that machine will be dropped in step $3$. 

Hence, it is not possible that $Z$ is a vertex of the polytope in step $4$ if the number of machines $m$ is at least $\frac{1}{\epsilon}$. 
This completes the proof of Lemma \ref{lem:key2}.
\end{proof}

\medskip 
We next show that with probability $1$, the final profit is at least $\Pi$ and the final makespan is at most $(2+\epsilon)T$:

 \begin{lemma}
 \label{lem:profit}
 Let $X$ denote the final rounded vector. Algorithm \textbf{Sched-Outlier} returns a schedule, where with probability one, (i) the profit is at least $\Pi$, (ii) for all $i$, $\sum_{j \in J}X_{i,j}p_{i,j} < \sum_{j}x^{*}_{i,j}p_{i,j}+(1+\epsilon)\text{max}_{j \in J:~x^{*}_{i,j}\in (0,1] }p_{i,j}$.
 \end{lemma}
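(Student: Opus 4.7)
My plan is to handle (i) and (ii) separately, since the profit guarantee is essentially bookkeeping while the makespan guarantee requires a careful case analysis driven by Lemma~\ref{lem:key}.

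For part (i), I would argue inductively that the total profit is preserved throughout the execution. In each iteration, either the original (Profit) constraint remains in the linear system $\mathcal{S}$, or the singleton‐job profit substitute from step~2 (D2$'$) and later the global non‐tight profit substitute from step~3 are present. The key observation is that for any job $j$ not in $N_1$ (and later not in $J_N$), the assignment constraint $\sum_i Z_{i,j} = y^*_j$ continues to hold, so its contribution $\pi_j \sum_i Z_{i,j}$ to the total profit is frozen; for the remaining jobs, the single profit constraint we retained pins down the sum of their $\pi_j$‐weighted $Z_{i,j}$ at its current value. Since \textbf{RandMove} preserves every tight constraint in its polytope, invariant \textbf{(I3$'$)} persists. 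The only other profit changes could come from step~4, where we manually assign jobs at a vertex; here we will check directly (case by case) that we never lose profit-weight, and in fact when two singleton jobs compete for a machine in a degenerate configuration, discarding the lower-profit one preserves the bound because the profit constraint is a $\geq$ inequality and $y^*$ was feasible.

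For part (ii), fix a machine $i$ and track the iteration in which its load constraint leaves $\mathcal{S}$ (if ever). I would then show that at that moment, the fractional load $L_i = \sum_j Z_{i,j} p_{i,j}$ and degree $d$ of $i$ satisfy an inequality that bounds the eventual integral overshoot by $(1+\epsilon)\,p^{\max}_i$, where $p^{\max}_i = \max_{j:\, x^*_{i,j}\in(0,1)} p_{i,j}$. There are three drop sites to handle. The easy cases are step 2 (D1$'$), where $i\in M_1$ has a single floating job and the overshoot is at most $p_{i,j} < (1+\epsilon)p^{\max}_i$; and step 3, where the rule $h_i \ge d - 1 - \epsilon$ implies that at most $d - h_i \le 1 + \epsilon$ units of fractional mass can get rounded up, giving an overshoot of at most $(1+\epsilon)p^{\max}_i$ since each of the $d$ floating jobs can only grow by at most $p^{\max}_i$ and shrinking contributions are non-positive in the upper-bound calculation.

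The main obstacle is step~4, where we hit one of the five degenerate configurations from Lemma~\ref{lem:key}. My plan is to leverage Lemma~\ref{lem:key} itself: since step~4 is reached only when $Z$ is a vertex of $\mathcal P$, the bound $m < 1/\epsilon$ forces us into Configs~1--5 on a small constant‐size piece, which I analyze explicitly. In Configs~1, 2, 3, 5 the floating subgraph consists of disjoint cycles (together with pendants of degree one) with total fractional load per machine already close to an integer: for degree‐$2$ machines the integral assignment is at most one job over the fractional load, for degree‐$3$ machines the explicit rule in step~4 (assign the singleton and the cheaper non-singleton) leaves each machine with one extra job beyond its fractional assignment, so again the load rises by at most $p^{\max}_i$. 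In Config~4 (the single degree-3 machine) I would verify, using $h_{i'} < 2 - \epsilon$ from the proof of Lemma~\ref{lem:key} and the near-integrality on the other degree-2 machines in the cycle, that no machine overshoots by more than $(1+\epsilon)p^{\max}_i$. Aggregating, $\sum_j X_{i,j}p_{i,j}$ is bounded by the preserved fractional load plus $(1+\epsilon)p^{\max}_i$, which is exactly (ii). Together with part~(i), this completes the lemma.
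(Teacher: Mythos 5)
Your proposal is correct and follows essentially the same route as the paper's proof: for (i), absorbing dropped assignment constraints into a running profit constraint that \textbf{RandMove} preserves, plus a direct check of the step-4 configurations; for (ii), the same three-way split into the $M_1$ drop, the ``$h_i \geq d-1-\epsilon$'' drop (bounding the overshoot by $(d-h_i)\,p^{\max}_i \leq (1+\epsilon)p^{\max}_i$), and the terminal configurations with Config-4 handled via the cheaper-job rule and the fact that the two non-singleton jobs already carry fractional mass exceeding $1$. The only spot where your justification is looser than the paper's is the two-competing-singletons case of (i): the correct reason the discard is safe is the arithmetic fact that $x_{i_1,j_1}+x_{i_2,j_2}<1$ implies $\pi_{j_1}x_{i_1,j_1}+\pi_{j_2}x_{i_2,j_2}\leq\max\{\pi_{j_1},\pi_{j_2}\}$, so the fully assigned higher-profit job alone covers the fractional profit of both, rather than an appeal to feasibility of $y^*$.
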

\begin{proof}
(i) This essentially follows from the fact that whenever the assignment constraint for any job is dropped, its profit constraint is included in the global profit constraint of the system. In step $4$, with the exception of one configuration (Config-2), all the jobs are always assigned; thus the profit cannot decrease in these other configurations. In Config-2, since we are at a vertex
in step $4$, Lemma \ref{lem:key2}(b) shows that the total fractional assignment of the two singleton jobs is less than $1$.  Thus a singleton job (say $j_1$) is dropped only when $G$ has two singleton jobs $j_1,j_2$ fractionally assigned to $i_1$ and $i_2$ respectively, with total assignment $x_{i_1,j_1}+x_{i_2,j_2} < 1$.  Since the job with the higher profit is retained, $\pi_{j_1}x_{i_1,j_1}+\pi_{j_2}x_{i_2,j_2} \leq max\{\pi_{j_1},\pi_{j_2}\}$.

(ii) A machine's fractional load is preserved until its load constraint is dropped (if at all). When can such a load constraint be
dropped in an iteration? Note from \textbf{(D1')} that load constraints are dropped from machines $i \in M_1$; Lemma \ref{lem:key1} implies that the load constraint also might be dropped from some machine(s) $i \in M_2 \cup M_3$ in
step $3$. For $i \in M_1$, only the remaining job $j$ with $X^{h}_{i,j}>0$ can get fully assigned to it any further. Hence for $i \in M_1$, its total load is less than $\sum_{j}x^{*}_{i,j}p_{i,j}+\text{max}_{j \in J:x^{*}_{i,j}\in (0,1]}p_{i,j}$. For any machine $i \in M_2 \cup M_3$, if its degree $d$ ($2$ or $3$) is such that its fractional assignment is at least $d-1-\epsilon$, then by simple algebra, it can be shown that for any such machine $i$,  its total load is at most $\sum_{j}x^{*}_{i,j}p_{i,j}+(1+\epsilon)\text{max}_{j \in J:x^{*}_{i,j}\in (0,1]}p_{i,j}$ at the end of the algorithm. For the remaining machines consider what happens in step 4. Since this is the last iteration, it suffices to show that the load does not increase by too much in this last iteration. Except when Config-4 is reached, any remaining machine $i$ gets at most one extra job, and thus its total load is less than $\sum_{j}x^{*}_{i,j}p_{i,j}+\text{max}_{j \in J:~x^{*}_{i,j}\in (0,1]}p_{i,j}$. When Config-4 is reached in step 4, if the degree-$3$ machine (say $i$) has a fractional assignment some $f \leq 1$ from the two jobs in the cycle, then 
the total fractional load on the remaining $m - 1$ machines is $m -f \geq m -1$, which means that one of these 
(degree-$2$) machines had a load of at least $1$: this in turn means that such a machine would have had its load
constraint dropped in step $3$, which is a contradiction. Hence, let $j_1,j_2,j_3$ be the three jobs assigned fractionally to machine $i$ and let $j_3$ be the singleton job; as argued in the previous sentence, $x_{i,j_1}+x_{i,j_2} > 1$. If  $p_{i,j_1} \leq p_{i,j_2}$, then machine $i$ gets jobs $j_1$ and $j_3$ assigned to it; else $i$ gets $j_2, j_3$. Since the fractional assignment on $i$ from $j_1$ and $j_2$ is more than $1$ and since the job with less processing time among $j_1$ and $j_2$ is assigned to $i$, $i$'s final load is less than $\sum_{j}x^{*}_{i,j}p_{i,j}+\text{max}_{j \in J:~x^{*}_{i,j}\in (0,1]}p_{i,j}$.
This completes the proof of Lemma \ref{lem:profit}. 
\end{proof}

Finally we have the following lemma.

 \begin{lemma}
 \label{lem:cost}
Algorithm \textbf{Sched-Outlier} can be derandomized to output a schedule of cost at most $C(1+\epsilon)$.
 \end{lemma}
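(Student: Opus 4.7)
The plan is to combine the method of conditional expectations on the randomized phase of \textbf{Sched-Outlier} with a direct accounting of the residual cost incurred in the deterministic step 4. The key leverage is that the initial guessing phase sets $x_{i,j} = \mathcal{G}_{i,j}$ for all $(i,j)$ with $c_{i,j} > \epsilon' C = \epsilon^2 C$, so every remaining floating assignment has cost at most $\epsilon^2 C$; simultaneously, Lemma \ref{lem:key} caps the residual bipartite graph at fewer than $1/\epsilon$ machine-nodes whenever step 4 is reached.

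First I would derandomize the RandMove iterations using the linear potential $\Phi(X) = \sum_{i,j} c_{i,j} X_{i,j}$. Since each RandMove step selects between $x + \alpha r$ and $x - \beta r$ with probabilities that yield $\expect{\Phi(Y)} = \Phi(x)$, at least one of the two outcomes satisfies $\Phi \leq \Phi(x)$; I would deterministically pick that branch in every iteration. Consequently $\Phi(X^{h})$ is non-increasing across the randomized phase, so when that phase ends we have $\Phi \leq \Phi(x^{*}) = \sum_{i,j} c_{i,j} x^{*}_{i,j} \leq C$ by the (Cost) constraint of \textbf{LP-Out}.

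Next I would bound the additional cost shift contributed by step 4. By Lemma \ref{lem:key} the residual graph at the start of step 4 contains fewer than $1/\epsilon$ machine-nodes, and from the configurations enumerated in its proof each such machine is incident to at most a constant number of floating edges; hence there are $O(1/\epsilon)$ floating variables left, each of cost at most $\epsilon^2 C$. Step 4 only rounds these floating variables to $\{0,1\}$ (and the ``discard the lower-profit singleton'' case of Config-2 can only \emph{decrease} cost), so the total cost shift during step 4 is bounded by $\sum_{(i,j) \text{ floating}} c_{i,j} = O(1/\epsilon) \cdot \epsilon^2 C = O(\epsilon C)$. Adding this to the bound $\Phi \leq C$ from the randomized phase, and absorbing the hidden constant by rescaling $\epsilon$, yields a final schedule of cost at most $C(1+\epsilon)$.

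The main obstacle is confirming that deterministically replacing the random RandMove choice by the cost-minimizing branch preserves the other invariants, in particular the profit guarantee (I3') and the feasibility of invoking \textbf{RandMove} again at the next iteration. The first concern resolves because both candidate vectors $x + \alpha r$ and $x - \beta r$ lie in the current polytope $\mathcal{P}$ and therefore both satisfy the merged profit equality enforced by (D2'); the second concern is handled by Lemma \ref{lem:key}, which establishes non-vertex status by a counting argument between constraints and floating variables that is independent of the random choices made in earlier iterations.
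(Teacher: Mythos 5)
Your proposal is correct and follows essentially the same route as the paper: derandomize the mean-preserving \textbf{RandMove} steps by conditional expectation on the linear cost (so the cost never exceeds $C$ entering the final step), then charge the last deterministic assignment step at most $O(1/\epsilon)$ edges each of cost at most $\epsilon' C = \epsilon^2 C$, for an additive $O(\epsilon)C$. The only cosmetic difference is that the paper counts the at most $1/\epsilon$ jobs assigned in the last step (yielding exactly $\epsilon C$ extra) rather than floating edges, so it avoids your final rescaling of $\epsilon$.
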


\begin{proof}
In all iterations $h$ except possibly the last one, we have for all $i, j$ that $E[X^{h}_{i,j}]=x^{*}_{i,j}$, where $x^{*}_{i,j}$ is solution of \textbf{LP-Out}. Therefore, before the last iteration, we have that the total expected cost
incurred is $C$. As in Section~\ref{sec:sched-cap}, the procedure can be derandomized directly by the method of
conditional expectations, giving an $1$-approximation to cost, just before the last iteration. Now in the last iteration, since at most $\frac{1}{\epsilon}$ jobs are assigned and each assignment requires at most $\epsilon'C=\epsilon^2 C$ in cost, the total increase in cost is at most $\epsilon C$, giving the required approximation.
\end{proof}

Lemmas \ref{lem:profit} and \ref{lem:cost} yield Theorem \ref{thm:sched-outlier}.

\smallskip
We next consider Theorem \ref{thm:sched-fairness} that maintains fairness in the allocation of jobs while handling outliers.

{\bf Proof of Theorem \ref{thm:sched-fairness}:}
\begin{proof}
We consider the LP-relaxation \textbf{LP-Out} except that: (i) we add the constraints $y_j = r_j$, and (ii) drop the constraint $x_{i,j} = \mathcal{G}_{i,j}~ \text{ if }
c_{i,j} > \epsilon' C$ (i.e., in order to maintain the scheduling probabilities of the jobs, we
do not guess the assignment of jobs with high cost). 

For part (i), we consider the first two steps of Algorithm {\bf Sched-Outlier}. If
$\mathcal{P}$ denotes the polytope defined by the reduced system of constraints and
the current vector $Z$ is not a vertex of $\mathcal{P}$, then we invoke {\bf RandMove}$(Z,P)$
and proceed. Else from Lemma \ref{lem:key1}, $Z$ is a vertex of $\mathcal{P}$ only if one of the
configurations, Config-1 to Config-5, as described in step 4 of Algorithm {\bf Sched-Outlier} is achieved
and $m < \frac{1}{\epsilon}$.
For any singleton job, we assign the
singleton job to the corresponding machine with probability equal to its fractional assignment. Thus
Theorem \ref{thm:sched-fairness} remains valid for these singleton jobs. For each non-singleton job,
we consider the machines to which it is fractionally assigned and allocate it to the machine
which has cheaper assignment cost for it. If the algorithm reached Config-1, 2, 3 or 5, each machine
 can get at most two extra jobs and the expected cost is maintained. However if
 the algorithm reached Config-4 and the three jobs associated with the degree-3
 machine were all assigned to it, then we remove one non-singleton job from the
 degree-3 machine. This job is assigned to the degree-2 machine in the cycle on
 which it had non-zero fractional assignment. This may increase the expected cost
 by a factor of $2$ but ensures that each machine gets at most $2$ additional jobs.

\end{proof}

\section{Max-Min Fair Allocation}
\label{sec:max-min}
In this section we consider another application that has received significant attention in the recent past: the max-min fair allocation problem
\cite{dani:05,julia:focs09,asadpour:stoc07,asadpour-feige-saberi,bansal:stoc06}. 
We provide a new algorithm for max-min fair allocation based on bipartite dependent rounding \cite{gkps:dep-round} and its generalization to weighted graphs. Bipartite dependent rounding has found many applications in combinatorial optimization \cite{srin:level-sets,gkps:dep-round,kmps:unified-sched}, and can be seen as a special case of {\bf RandMove} on bipartite graphs. We also consider an ``equitable allocations'' version of
such problems, in Theorem~\ref{thm:santa2}: this theorem follows from \cite{shmoys-tardos:gap} as pointed out by the referee. 

In the max-min fair allocation problem,
there are $m$ goods that need to be distributed indivisibly among $k$ persons. Each person $i$ has a non-negative integer valuation $u_{i,j}$ for good $j$. The valuation functions are linear, i.e., $u_{i,C}=\sum_{j \in C} u_{i,j}$ for any set of $C$ goods. The goal is to allocate each good to a person such that the ``least happy person is as happy as possible'': i.e., $\min_{i} u_{i,C}$ is maximized. Our main contribution  in this regard
is to near-optimally pin-point the integrality gap of a {\em configuration LP} previously proposed and analyzed in \cite{bansal:stoc06,asadpour:stoc07}.

\noindent
\paragraph{\textbf{The Configuration LP for Max-Min Fair Allocation}}

The configuration LP formulation for the max-min fair allocation problem was first considered in
\cite{bansal:stoc06}. A {\it configuration} is a subset of items,
and the LP has a
variable for each valid configuration. Using binary search, first the optimum solution value $T$ is guessed and then
we define valid configurations based on the approximation factor $\lambda$ sought; we will set
\begin{equation}
\label{eqn:lambda}
\lambda = 26 \sqrt{k \ln k}.
\end{equation}
We call a configuration $C$  {\it valid for person $i$} if either of the following two conditions hold:
\begin{itemize}
\item  $u_{i,C} \geq T$ and each item  in $C$ has value less than $\frac{T}{\lambda} $. These are called {\em small} items.
\item $C$ contains only one item $j$ and $u_{i,j} \geq \frac{T}{\lambda}$. We call such an item $j$ to be a {\em big} item for person $i$.
\end{itemize}

We define a variable $x_{i,C}$ for assigning a valid configuration $C$ to person $i$.
Let $C(i,T)$ denote the set of all valid configurations corresponding to person $i$ with respect to $T$. The configuration LP relaxation
 of the problem is as follows:

\begin{eqnarray}
\label{eqn:lp}
\forall j: \sum_{C \ni j}\sum_{i} x_{i,C} \leq 1 \\
\forall i: \sum_{C \in C(i,T)} x_{i,C}=1 \nonumber\\
\forall i, C: x_{i,C} \geq 0 \nonumber
\end{eqnarray}

The above LP formulation may have an exponential number of variables, However, if the LP is feasible, then a fractional allocation where each person receives either a big item or at least a utility of $T(1-\epsilon)$ can be computed in
polynomial time for any constant $\epsilon > 0$ \cite{bansal:stoc06}.
In the subsequent discussion and analysis, we ignore the multiplicative
$(1 - \epsilon)^{-1}$ factor; it is hidden in the $\Theta$ notation of
the ultimate approximation ratio.

The worst-case integrality gap of the above configuration LP is
lower-bounded by $\Omega(\frac{1}{\sqrt{k}})$ \cite{bansal:stoc06}.
In \cite{asadpour:stoc07}, Asadpour and Saberi
gave a rounding procedure for the configuration LP that achieved an approximation factor of $O\left(\frac{1}{\sqrt{k}(\ln{k})^3}\right)$. Here we further lower
the gap and prove the following theorem;  our proof is also
significantly simpler than that of \cite{asadpour:stoc07}.

\begin{theorem}
\label{thm:santa1}
Given any feasible solution to the configuration LP, it can be rounded to a feasible integer solution such that every person gets  at least $\Theta\left(\frac{1}{\sqrt{k\ln{k}}}\right)$ fraction of the optimal utility with probability at least $1-\Theta(\frac{1}{k})$, in polynomial time.
\end{theorem}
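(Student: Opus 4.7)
\noindent
The plan is to solve the configuration LP via binary search on the target utility $T$, convert the resulting fractional solution into a bipartite edge-assignment between persons and goods, and round it using the random-walk framework underlying Theorem~\ref{thm:matching} combined with a Chernoff-type concentration argument. Set $\alpha = \Theta(\sqrt{k\log k / \log\log k})$; the objective is that every person obtains integer utility at least $T/\alpha$ with high probability, where $T$ is the LP optimum.

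First I would pass from the configuration LP solution $x^*_{i,C}$ to the edge marginals $y^*_{i,j} = \sum_{C\ni j} x^*_{i,C}$, which satisfy $\sum_i y^*_{i,j} \leq 1$ for every good $j$ and $\sum_j u_{i,j} y^*_{i,j} \geq T$ for every person $i$. One may assume WLOG that $u_{i,j} \leq T$ (else just give $j$ to $i$). For each person $i$, split the goods into $B_i = \{j : u_{i,j} \geq T/\alpha\}$ and $S_i = \{j : u_{i,j} < T/\alpha\}$; then at least one of $\sum_{j\in B_i} u_{i,j}y^*_{i,j}$ or $\sum_{j\in S_i} u_{i,j}y^*_{i,j}$ is $\geq T/2$. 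Attach to each person $i$ the per-person linear functional $f_i(X) = \sum_{j\in S_i} u_{i,j} X_{i,j}$, whose coefficients are uniformly bounded by $\ell_i = T/\alpha$ -- precisely the hypothesis of Theorem~\ref{thm:matching}.

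Next I would apply the random-walk rounder on the polytope defined by (i) the good-side capacities $\sum_i X_{i,j} \leq 1$, (ii) the person-side marginals $\sum_j X_{i,j} = \sum_j y^*_{i,j}$, and (iii) the $k$ per-person functionals $f_i$ tracked in the sense of Theorem~\ref{thm:matching}. Property (b) preserves $\E[X_{i,j}] = y^*_{i,j}$ together with the negative-correlation guarantees of dependent rounding; property (a) gives $|f_i(X) - f_i(y^*)| < T/\alpha$ with probability one, so any person whose LP utility is dominated by $S_i$ collects integrally at least $T/2 - T/\alpha = \Omega(T)$. A person $i$ whose LP utility is dominated by $B_i$ satisfies $\sum_{j\in B_i} y^*_{i,j} = \Omega(1)$ (using $u_{i,j} \leq T$), so a matching step on the big-good subgraph -- preservable through the random walk -- assigns $i$ at least one good from $B_i$, contributing utility $\geq T/\alpha$.

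The main technical obstacle is that a naive union bound over the $k$ persons gives only a $\sqrt{\log k}$ deviation from Chernoff, whereas the target ratio requires the sharper $\sqrt{\log k / \log\log k}$. Overcoming this requires a Bennett- or Bernstein-type inequality that exploits both the negative correlation and the variance bound $\sum_{j\in S_i} u_{i,j}^2 y^*_{i,j} \leq (T/\alpha)\cdot T$, combined with a truncation-and-bucketing argument that pays only for the \emph{effective} number of large coefficients after the threshold $T/\alpha$, and a Hall-type pigeonhole on the big-good matching to integrally route the $\Omega(1)$ fractional big-good mass of each $B_i$-dominated person. Balancing these ingredients produces the claimed ratio $O(\sqrt{k\log k / \log\log k})$. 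The configuration LP itself is solved in polynomial time by ellipsoid with a standard separation oracle.
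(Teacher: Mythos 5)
Your proposal abandons the configuration structure too early, and this is fatal. Once you pass to the edge marginals $y^*_{i,j}=\sum_{C\ni j}x^*_{i,C}$ you are rounding a point of the plain assignment polytope, and the information that each person's fractional solution decomposes into bundles \emph{each individually worth $T$} is lost. The paper's algorithm (\S~\ref{subsec:algo}) leans on exactly that information: it runs a bipartite dependent-rounding matching on the big-item edges so that person $v$ is saturated with probability exactly $m_v$, and in the complementary event (probability $f_v=1-m_v$) the person samples one of their \emph{small} configurations, which is worth $T$ on its own; the $\sqrt{k}$ loss then arises from contention for the claimed items, controlled by negative correlation plus Chernoff--Hoeffding. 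In your scheme a person with, say, $m_v=1-1/k$ has fractional small-good utility only $T/k$, so your Theorem~\ref{thm:matching}-based guarantee $f_i(X)\geq f_i(y^*)-T/\alpha$ for the small part is vacuous for them, and everything hinges on your claim that every big-dominated person is integrally matched to a big good. That claim is false: $\sum_{j\in B_i}y^*_{i,j}\geq 1/2$ for all $i$ does not imply a matching saturating all such persons (two persons each placing mass $1/2$ on the same single big good already defeats it; defect-Hall leaves up to half of them unmatched). The persons left unmatched and with negligible small-good mass receive essentially nothing, and this is precisely the structure of the $\Omega(\sqrt{k})$ integrality-gap instances. Indeed, a sanity check: if your two deterministic claims both held, every person would receive $\Omega(T/\alpha)$ with certainty with no concentration argument at all, which is far stronger than what the paper (or any known algorithm) achieves and is incompatible with how the lower bound is realized.

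Two further mismatches with the paper's proof. First, Theorem~\ref{thm:matching} places the degree \emph{upper} bounds and the linear functionals on the same side of the bipartition; to enforce ``each good assigned at most once'' while attaching the utility functionals to the persons you are already outside the regime the paper proves (its own application of Theorem~\ref{thm:matching} to allocation, Theorem~\ref{thm:santa2}, consequently yields only the additive loss $\max_j u_{i,j}$ over \emph{all} adjacent goods, which is $\Theta(T)$ in the worst case), and in any event no deterministic per-person guarantee can substitute for the probabilistic matching step. Second, your diagnosis of the ``main technical obstacle'' is misplaced: the paper does not need Bennett/Bernstein refinements. It uses the standard Chernoff--Hoeffding bound for negatively correlated variables (Lemma~\ref{lemma:neg}); the $\sqrt{\log k/\log\log k}$ factor comes from balancing the threshold $\epsilon_1$ for discarding items with $m_j\geq 1-\epsilon_1$ against the big/small cutoff $\lambda$ so that $\expect{G_v}$ is large enough for a union bound over the $k$ persons, followed by a separate utility-preserving contention-resolution walk (\S~\ref{subsec:content}). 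None of these ingredients appears in your outline, so the claimed ratio is not established.
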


Note that the work of Chakrabarty, Chuzhoy and Khanna \cite{Chuzhoy09}
yields an improved approximation factor of $m^{\epsilon}$
for any positive constant
$\epsilon$, but it does not use the configuration LP
(also note that $m \geq k$).

In the context of fair allocation, an additional important criterion can be
an \emph{equitable partitioning} of goods: we may impose an upper
bound on the number of items a person might receive. For example,
we may want each person to receive at most $\lceil \frac{m}{k} \rceil$ goods. Theorem \ref{thm:matching} then directly leads to the following.
\begin{theorem}
\label{thm:santa2}
Suppose, in max-min allocation, we are given upper bounds $c_{i}$ on the
number of items that each person $i$ can receive, in addition to the
 utility values $u_{i,j}$. Let $T$ be the optimal max-min allocation value
that satisfies $c_i$ for all $i$. Then, we can efficiently construct an allocation
in which for each person $i$ the bound $c_i$ holds and she receives a total
utility of at least $T - \max_{j} u_{i,j}$.
\end{theorem}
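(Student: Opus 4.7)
The plan is to derive Theorem~\ref{thm:santa2} as a direct application of Theorem~\ref{thm:matching}. First I would write down the natural LP relaxation: introduce variables $x_{i,j} \in [0,1]$ (indicating the fractional assignment of good $j$ to person $i$) and maximize $T$ subject to $\sum_i x_{i,j} = 1$ for every good $j$, $\sum_j x_{i,j} \leq c_i$ for every person $i$, and $\sum_j u_{i,j} x_{i,j} \geq T$ for every person $i$. Let $(x^*, T^*)$ denote an optimal fractional solution; since the integral optimum $T$ is feasible for this relaxation, $T^* \geq T$.

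Next I would invoke Theorem~\ref{thm:matching} on the bipartite graph $G=(J,M,E)$ in which $J$ is the set of goods, $M$ is the set of persons, and $(i,j) \in E$ iff $x^*_{i,j} > 0$, with the instantiation: requirement $r_j = 1$ for every good $j$; capacity $b_i = c_i$ for every person $i$; linear function $f_i(y) = \sum_{j} u_{i,j} y_{i,j}$ with coefficients $p_{i,j} = u_{i,j}$ and bound $\ell_i = \max_j u_{i,j}$; and fractional vector $x^*$. Theorem~\ref{thm:matching} then efficiently produces a $\{0,1\}$-vector $X$ such that with probability one each good has degree at least $1$, each person has degree at most $c_i$, and $\card{f_i(X) - f_i(x^*)} < \max_j u_{i,j}$ for every $i$, while also $\expect{X_{i,j}} = x^*_{i,j}$. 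Combining the conservation-in-expectation property $\sum_i \expect{X_{i,j}} = \sum_i x^*_{i,j} = 1$ with the sure lower bound $\sum_i X_{i,j} \geq 1$ and the integrality of $X$ forces $\sum_i X_{i,j} = 1$ almost surely, so $X$ is in fact a valid allocation of each good to exactly one person. Finally, for every person $i$ we obtain
\[
f_i(X) \;>\; f_i(x^*) - \max_j u_{i,j} \;\geq\; T^* - \max_j u_{i,j} \;\geq\; T - \max_j u_{i,j},
\]
which is the claimed guarantee.

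The main obstacle is really not in this reduction but in Theorem~\ref{thm:matching} itself, whose full proof is deferred. The only subtlety I anticipate here is confirming that the rounding machinery underlying Theorem~\ref{thm:matching} preserves the tight assignment equalities $\sum_i x_{i,j} = 1$ throughout the random walk; this is exactly analogous to invariant \textbf{(I2)} in algorithm \textbf{Sched-Cap}, where the (Assign) constraints are held tight in every iteration. Given that tightness, setting $r_j = 1$ is enough to force degree exactly $1$ integrally, and the promised additive-$\ell_i$ approximation on each linear function $f_i$ translates directly into the max-min utility bound.
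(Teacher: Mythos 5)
Your proposal is correct and follows exactly the route the paper intends: the paper gives no explicit proof of Theorem~\ref{thm:santa2} beyond the remark that it follows from Theorem~\ref{thm:matching}, and your instantiation (goods as $J$ with $r_j=1$, persons as $M$ with $b_i=c_i$, $p_{i,j}=u_{i,j}$, $\ell_i=\max_j u_{i,j}$) is the intended reduction. Your observation that $\expect{X_e}=x^*_e$ together with the sure bound $\sum_i X_{i,j}\geq 1$ and $\sum_i x^*_{i,j}=1$ forces exact degree one is a nice self-contained way to get a genuine allocation directly from the statement of Theorem~\ref{thm:matching}, without needing to peek inside its (deferred) proof.
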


This generalizes the result of \cite{dani:05}, which yields the
``$T - \max_{j} u_{i,j}$'' value when no bounds such as the $c_i$ are given.
To our knowledge, the results of
\cite{julia:focs09,asadpour:stoc07,asadpour-feige-saberi,bansal:stoc06}
do not carry over to the setting of such ``fairness bounds'' $c_i$.

\subsection{Algorithm for Max-Min Fair Allocation}
\label{subsec:algo}

We now describe the algorithm and proof for Theorem \ref{thm:santa1}.

\subsubsection{Algorithm}

We define a weighted bipartite graph $G$ with the vertex set $A \bigcup B$ corresponding to the persons and the items respectively. There is an edge between a vertex corresponding to person $i \in A$ and item $j \in B$, if a configuration $C$ containing $j$ is fractionally assigned to $i$. Define
$$w_{i,j}=\sum_{C \ni j} x_{i,C},$$
i.e., $w_{i,j}$ is the
fraction of item $j$ that is allocated to person $i$ by the fractional solution of the LP. An edge $(i,j)$ is called a {\em matching edge}, if the item $j$ is big for person $i$. Otherwise it is called a {\em flow edge}.

Let $M$ and $F$ represent the set of matching and flow edges respectively. For each vertex $v \in A \bigcup B$, let $m_{v}$ denote the total fractional weight of the matching edges incident to it. That is if $v$ is a person then $$m_v=\sum_{j \text{ is a big item for } v} w_{v,j}=\sum_{j \in C, C \text{ contains the big item $j$ for } v} x(v,C).$$ And if $v$ is a job then $$m_v=\sum_{i: \text{ $v$ is a big item for } i} w_{i,v}=\sum_{i,  C \text{ contains $v$ as a big item for } i} x(i,C).$$  Also define $f_{v}=1-m_v$. The main steps of the algorithm are as follows.

\bigskip

\begin{enumerate}
\item[1.] Guess the value of the optimal solution $T$ by doing a binary search. Solve LP (\ref{eqn:lp}).
  Obtain the set $M$ and $m_v, f_v$ for each vertex $v$ in $G$ constructed from the LP solution.

  \bigskip

\item[2] {\bf Allocating Big Items}: Select a random matching from edges in $M$ using { \em bipartite dependent rounding}
(see Section \ref{subsec:match}) such that for every $v \in A \bigcup B$, the probability that $v$ is matched by the matching is $m_v=1-f_v$.

\bigskip

\item[3] {\bf Allocating Small Items}: Let $\epsilon_1=\sqrt{\frac{\ln{k}}{k}}$.
\begin{enumerate}
\item Discard any item $j$ with $m_{j} \geq (1-\epsilon_1)$, and also discard all the persons and the items matched by the matching.
\item (Scaling) In the remaining graph containing only flow edges for unmatched persons and items, set for each
person $i$, $w'_{i,j}=\frac{w_{i,j}}{f_{i}}, ~ \forall j$.
\item Further discard any item  $j$ with $\sum_{i} w'_{i,j} \geq \psi(k)$, where $\psi(k)$ is defined below.
\item Scale down the weights on all the remaining edges by a factor of $\psi(k)$ and run the algorithm of \cite{dani:05} to assign the small items.
\end{enumerate}

matched
\end{enumerate}

 \noindent{\bf Choice of $\psi(k)$.}

Let us consider the functions $\Phi(k) = 100 \ln\ln\ln k / \ln\ln k$ (note that
$\Phi$ is asymptotically zero.) and
\begin{equation}
\label{eqn:psi-bound}
\psi = \psi(k) = \frac{3 \ln k}{\ln \ln k} \cdot (1 + \Phi(k));
\end{equation}
For large enough $k$, say $k \geq 10$, the following holds.
\begin{equation}
\label{eqn:Phi-bound}
(1 + \psi) \ln(1 + \psi) - \psi \geq 3 \ln k
\end{equation}
 This is easily verified by plugging the fact $\ln(1 + \psi) \geq \ln\ln k - \ln\ln\ln k$ into (\ref{eqn:Phi-bound}).

\bigskip

We now analyze each step. The main proof idea is in showing that there remains enough left-over utility in the flow graph for each person
not matched by the matching. This is obtained through proving a negative correlation property among the random variables
defined on a collection of vertices. Previously, the negative correlation property due to bipartite dependent
rounding was known for variables defined on edges incident on any particular vertex. We adapt the proof according to our need.

\subsubsection{Allocating Big Items}
\label{subsec:match}
Consider the edges in $M$ in the person-item bipartite graph. Remove all the edges $(i,j)$ that have already been rounded to $0$ or $1$. Additionally, if an edge is rounded to $1$, remove both its endpoints $i$ and $j$.  We initialize for each $(i,j) \in M$, $y_{i,j}=w_{i,j}$, and modify the $y_{i,j}$ values probabilistically in rounds using bipartite dependent rounding.

\paragraph{{\bf Bipartite Dependent Rounding}\cite{gkps:dep-round}}
\label{subsec:bipartite}
 We give a brief sketch of bipartite dependent rounding introduced in \cite{gkps:dep-round} for the sake of completeness.

  The bipartite dependent rounding selects an even cycle $\mathcal{C}$ or a maximal path $\mathcal{P}$ in $G$, and partitions the edges in $\mathcal{C}$ or $\mathcal{P}$ into two matchings $\mathcal{M}_{1}$ and $\mathcal{M}_{2}$. Then, two positive scalars $\alpha$ and $\beta$ are chosen as follows:
 $$\alpha =\min\{\eta > 0: ((\exists(i,j) \in \mathcal{M}_{1}: y_{i,j}+\eta=1) \bigcup(\exists(i,j) \in \mathcal{M}_{2}: y_{i,j}-\eta=0))\};$$
 $$\beta =\min\{\eta > 0: ((\exists(i,j) \in \mathcal{M}_{1}: y_{i,j}-\eta=0)\bigcup(\exists(i,j) \in \mathcal{M}_{2}: y_{i,j}+\eta=1))\};$$

Now with probability $\frac{\beta}{\alpha+\beta}$, set
\begin{eqnarray*}
&& y'_{i,j}=y_{i,j}+\alpha \text{ for all } (i,j) \in \mathcal{M}_{1} \\
\text{ and } && y'_{i,j}=y_{i,j}-\alpha \text{ for all } (i,j) \in \mathcal{M}_{2};
\end{eqnarray*}
with complementary probability of $\frac{\alpha}{\alpha+\beta}$, set
\begin{eqnarray*}
&& y'_{i,j}=y_{i,j}-\beta \text{ for all } (i,j) \in \mathcal{M}_{1} \\
\text{ and }&& y'_{i,j}=y_{i,j}+\beta \text{ for all } (i,j) \in \mathcal{M}_{2};
\end{eqnarray*}

The above rounding scheme satisfies the following two properties, which are easy to verify:
\begin{equation}
\label{dep:prop1}
\forall \,i,j,\, \expect{y'_{i,j}}=y_{i,j}
\end{equation}
\begin{equation}
\label{dep:prop2}
\exists\,i,j,\, y'_{i,j}\in \{0,1\}
\end{equation}

Thus, if $Y_{i,j}$ denotes the final rounded values then Property (\ref{dep:prop1}) guarantees for every edge $(i,j)$,  $\expect{Y_{i,j}}=w_{i,j}$. This gives the following corollary.

\begin{corollary}
\label{cor:match1}
The probability that a vertex $v \in A \bigcup B$ is matched in the matching generated by the algorithm is $m_v$.
\end{corollary}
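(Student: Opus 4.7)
The plan is to derive this corollary directly from properties (\ref{dep:prop1}) and (\ref{dep:prop3}) of bipartite dependent rounding together with the observation that $m_v \in [0,1]$ at every vertex. First I would fix $v \in A \cup B$ and let $D_v := \sum_{u:(v,u) \in M} Y_{v,u}$ denote its final (random) degree in the rounded matching, where the rounding is initialized with $y_{i,j} = w_{i,j}$ on the matching edges. By construction, $v$ is saturated if and only if $D_v = 1$, so the claim reduces to showing that $\Pr[D_v = 1] = m_v$.

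The next step is to verify $m_v \leq 1$, which is what makes property (\ref{dep:prop3}) force $D_v$ into $\zo$. For a person $i \in A$, every matching edge $(i,j) \in M$ has $u_{i,j} \geq T/\lambda$, i.e.\ $j$ is big for $i$, so the only valid configuration for $i$ containing $j$ is the singleton $\{j\}$; hence $w_{i,j} = x_{i,\{j\}}$ and $m_i = \sum_j w_{i,j} \leq \sum_C x_{i,C} = 1$. For an item $j \in B$, $m_j = \sum_{i:(i,j) \in M} w_{i,j} \leq \sum_i \sum_{C \ni j} x_{i,C} \leq 1$ by the LP item constraint. Applying property (\ref{dep:prop3}) to the matching-edge degree at $v$ therefore yields $D_v \in \{\lfloor m_v \rfloor, \lceil m_v \rceil\} \subseteq \zo$ with probability one.

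Finally, by linearity of expectation together with iterated application of the marginal-preservation property (\ref{dep:prop1}), $\E[D_v] = \sum_u w_{v,u} = m_v$. Since $D_v$ is $\zo$-valued, its expectation equals $\Pr[D_v = 1]$, giving the desired identity. I do not anticipate a substantive obstacle here: once the bound $m_v \leq 1$ is established, the corollary is an immediate consequence of the marginal- and degree-preservation guarantees already proven for bipartite dependent rounding; the only minor care needed is the slightly asymmetric handling of person-side and item-side vertices when checking $m_v \leq 1$.
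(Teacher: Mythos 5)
Your proof is correct and follows essentially the same route as the paper's: both arguments reduce the claim to the marginal-preservation property $\expect{Y_{i,j}}=w_{i,j}$ combined with the fact that the rounded output has degree at most one at $v$ (the paper phrases this as a disjoint union of the events ``$v$ is matched along $e_i$,'' you phrase it as $\expect{D_v}=\prob{D_v=1}$ for a $\zo$-valued degree). The only difference is that you explicitly verify $m_v\le 1$ on both sides of the bipartition before invoking the degree-preservation property, a step the paper takes for granted; this is a sound and slightly more careful rendering of the same argument.
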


\begin{proof}
Let there be $l \geq 0$ edges $e_1,e_2,..e_l \in M$ that are incident on $v$. Then,
\begin{eqnarray*}
\prob{v~ \text{is matched}}&=&\prob{\exists\, e_i, i \in [1,l]~~ s.t~~ v~\text{is matched with }~e_i}\\
&=&\sum_{i=1}^{l} \prob{ v \text{ is matched with } e_{i}}=\sum_{i=1}^{l} w_{i}=m_{v}
\end{eqnarray*}

Here the second equality follows by replacing the union bound by sum since the events are mutually exclusive.
\end{proof}

\paragraph*{Negative Correlation over Multiple Vertices} 
Now we show additional properties of this rounding to be used crucially for the analysis of the next step.
Recall the notion of negative correlation from Definition~\ref{defn:neg-correl}.
We show a useful negative-correlation property for dependent rounding on bipartite graphs over multiple vertices. The proof is syntactically similar to Lemma~2.2 of \cite{gkps:dep-round}. However, \cite{gkps:dep-round} only shows negative correlation property for random variables defined on edges incident to a single vertex; here a stronger negative correlation property is proven for random variables defined on multiple vertices. We state the theorem here, and prove it in the appendix. 


\begin{theorem}
\label{theorem:neg1}
Define an indicator random variable $z_{j}$ for each item $j \in B$ with $m_{j} < 1$, such that $z_{j}=1$
 if item $j$ is matched by the matching. Then, the indicator random variables $\{z_{j}\}$ are negatively correlated.
 \end{theorem}

 As a corollary of Theorem \ref{theorem:neg1}, we get the following:
 \begin{corollary}
\label{cor:neg2}
Define an indicator random variable $z_{i}$ for each person $i \in A$, such that $z_{i}=1$
 if person $i$ is matched by the matching. Then, the indicator random variables $\{z_{i}\}$ are negatively correlated.
 \end{corollary}

\begin{proof}
Do the same analysis as in Theorem \ref{theorem:neg1} with items replaced by persons.
\end{proof}

\subsubsection{Allocating small items}
\label{subsec:alloc}

We start by proving in Lemma \ref{lemma:small} that after the matching phase, we have with high probability that
each unmatched person has available items with utility at least $\sqrt{\frac{\ln{k}}{k}}\frac{T}{5}$ in the flow graph. Additionally we prove in Lemma  \ref{lemma:item} that given any \emph{particular} item $j$, we have with probability at least $1 - O(1/k)$ that $j$ is claimed at most
$\psi(k)$ times. Note that this probability is not large enough to afford a union bound over all the $m$ possible values of $j$, since $m$ is not
bounded as a function of $k$; Lemma~\ref{lemma:stepc} shows how to get around this issue. Both of these probabilistic results use
Theorem~\ref{thm:chbound-negcorrel}.

%

\begin{lemma}
\label{lemma:small}
After Step 2 of  allocation of big items by bipartite dependent rounding, we have the probability for all unmatched person to have a total utility of at least
$\sqrt{\frac{\ln{k}}{k}}\frac{T}{5}$ from the unmatched items is at least $1-\frac{1}{k}$.
\end{lemma}

\begin{proof}
Consider a person $v$ who is unsatisfied by the matching. Define $w'_{v,j}=\frac{w_{v,j}}{f_{v}}$. Then according to LP (\ref{eqn:lp}) solution
\begin{equation}
\sum_{j}w'_{v,j} u_{v,j} =T
\end{equation}

In step (a) of {\bf Allocation of Small Items}, all items $j$ with $m_{j}$
 at least $(1-\epsilon_1)$ are discarded; recall that $\epsilon_1=\sqrt{\frac{\ln{k}}{k}}$. Since the total sum of $m_j$ can be at most $k$ (the number of persons), there can be at most $\frac{k}{1-\epsilon_1}$ items with $m_j$ at least $1-\epsilon_1$. Therefore, for the remaining items, we have $f_j \geq \epsilon_1$. Each person is connected only to small items in the flow graph. After removing the items with $m_j$ at least $1-\epsilon_1$, the remaining utility in the flow graph for person $v$ is  at least
 \begin{equation}
 \label{eq:1}
 \sum_{j: f_j \geq \epsilon_1}w'_{v,j} u_{v,j} =
 \left(T-\sum_{j: f_{j}\leq \epsilon_1} u_{v,j} f_{j}\right)\geq \left( T-\frac{\epsilon_1 k}{1-\epsilon_1}\frac{T}{\lambda}\right).
\end{equation}
 Now consider random variables $Y_{v,j}$ for each of these unmatched items:
\begin{equation}
\label{eqn:def}
Y_{v,j} = \begin{cases}  \frac{w'_{v,j} u_{v,j}}{T/\lambda} &: \text{if item $j$ is not matched} \\
0 &: \text{ otherwise} \end{cases}
\end{equation}

Since $u_{v,j} \leq T/\lambda$ and $w_{v,j} \leq f_{v}$, the
$Y_{v,j}$ are random variables bounded in $[0,1]$. Person $v$ is unmatched by the matching
with probability $1-m_v=f_v$. Each such person $v$  gets
a fractional utility of $w'_{v,j} u_{v,j}$ from the small (with respect to the person) item $j$ in the flow graph, if item $j$ is
not matched by the matching. The latter happens with probability $f_{j}$.

Define $G_{v}=\sum_{j}Y_{v,j}$. Then $\frac{T}{\lambda} G_v$ is the total fractional utility after step (b). It
follows from (\ref{eq:1}) that
\begin{eqnarray*}
\expect{G_{v}}&=&\sum_{j}\frac{w'_{v,j}u_{v,j}f_j}{T/\lambda}\geq \epsilon_1 \lambda \left(1-\frac{\epsilon_1 k}{(1-\epsilon_1)\lambda}\right)
\end{eqnarray*}

Thus, since $\lambda = 26 \sqrt{k \ln k}$, we have for sufficiently large $k$ that
\[ \expect{G_{v}} \geq \epsilon_1\lambda \left(1-\frac{\epsilon_1 k}{(1-\epsilon_1)\lambda}\right) \geq 24\ln{k}. \]

That the $Y_{v,j}$'s are negatively correlated follows from Theorem \ref{theorem:neg1}. Therefore, applying Theorem~\ref{thm:chbound-negcorrel}(i)
with $\delta = 1/2$,
\[ \prob{G_v \leq \frac{1}{2} \expect{G_v}} \leq e^{-24\ln{k}/12} = \frac{1}{k^2}; \]
i.e.,
 \begin{equation*}
\prob{\frac{T}{\lambda}G_v \leq \frac{1}{2} \frac{T}{\lambda}\expect{G_v}} \leq \frac{1}{k^2}
\end{equation*}

Hence,
 \begin{equation*}
\prob{\exists v: ~\frac{T}{\lambda}G_v \leq \frac{1}{2} \frac{T}{\lambda}\expect{G_v}} \leq \frac{1}{k}.
\end{equation*}

Therefore the net fractional utility that remains for each person  in the flow graph after scaling is at least $ \frac{1}{2} \frac{T}{\lambda}\expect{G_v}= \frac{1}{2} \frac{T}{26\sqrt{k\ln{k}}}12\ln{k} \geq \frac{T}{5}\sqrt{\frac{\ln{k}}{k}}$, with probability at least $1-\frac{1}{k}$.
\end{proof}

\begin{lemma}
\label{lemma:item}
Fix any item $j$ that is unmatched after Step 2. After the matching and the scaling (step (b)), $j$ has a total fractional incident edge-weight
from the unmatched persons to be at most $\psi(k)$,  with probability at least $1-\frac{1}{k^3}$.
\end{lemma}

\begin{proof}
Note that for any person $v$ for which $j$ is small for $v$, $w_{v,j} \leq f_{v}$; hence,
$w'_{v,j}=\frac{w_{v,j}}{f_{v}}\leq 1$. Define a random variable $Z_{v,j}$ for each person $v$ as:
\begin{equation}
\label{eqn:itemdef}
Z_{v,j} = \begin{cases} w'_{v,j} &: \text{if person $v$ is not matched} \\
0 &: \text{ otherwise} \end{cases}
\end{equation}

Let $X_{j}=\sum_{v} Z_{v,j}$. Then $X_{j}$ is the total weight of all the edges incident on item $j$ in the flow graph
after scaling and removal of all matched persons. We have $\expect{X_{j}}= \sum_{v}w'_{v,j} f_{v}=\sum_{v} w_{v,j} \leq 1$.
The fact that the variables $Z_{v,j}$ are negatively correlated follows from Corollary \ref{cor:neg2}. Thus, applying
Theorem~\ref{thm:chbound-negcorrel}(ii) with $\mu = 1$ and $\delta = \psi(k)$ along with (\ref{eqn:Phi-bound}), we obtain
\[
\prob{X_{j} \geq \psi(k)} \leq \frac{1}{k^3}.
\]
This completes the proof.
\end{proof}

Recall the third step, step (c), of {\bf Allocating Small Items}. Any job in the remaining flow graph with total weight of
incident edges more than $\psi(k)$ is discarded in this step. We now calculate the utility that remains for
each person in the flow graph after step (c).

\begin{lemma}
\label{lemma:stepc}
After removing all the items that have total degree more than $\psi(k)$ in the flow graph, that is after step (c)
of {\bf Allocating Small Items}, the probability that all unmatched persons have remaining utility in the flow graph at least $\sqrt{\frac{\ln{k}}{k}}\frac{T}{2*(3+o(1))}$ is at least $1-\frac{2}{k}$.
\end{lemma}
\begin{proof}
Fix a person $v$ and consider the utility that $v$ obtains from the fractional assignments in the flow graph before step (c). It is at least $\sqrt{\frac{\ln{k}}{k}}\frac{T}{5}$ from Lemma \ref{lemma:small}.
Define a random variable for each item that $v$ claims with nonzero value in the flow graph at step (b):

\begin{equation}
\label{eqn:newdef}
Z'_{v,j} = \begin{cases} u_{v,j} &: \text{if item $j$ has total weighted degree at least $\psi(k)$ } \\
0 &: \text{ otherwise} \end{cases}
\end{equation}

We have $\prob{Z'_{v,j}=u_{v,j}} \leq \frac{1}{k^3}$ from Lemma \ref{lemma:item}.
Therefore, the expected utility for $v$ from all the items in the flow graph that have total incident weight
more than  $\psi(k)$  is at most $\frac{T}{k^3}$. By Markov's
inequality, the probability that the utility for $v$ from the discarded items is more than $\frac{T}{k}$, is at most $\frac{1}{k^2}$. Applying
the union bound, the probability of the utility from the discarded items being more than $\frac{T}{k}$ for some person,
is at most $\frac{1}{k}$. The initial utility before step (c) was at least $\sqrt{\frac{\ln{k}}{k}}\frac{T}{5}$ with probability $1-\frac{1}{k}$.
Thus after step (c), the remaining utility is at least $\sqrt{\frac{\ln{k}}{k}}\frac{T}{5}-\frac{T}{k}$ with probability at least $1-\frac{2}{k}$.
\end{proof}

The next and the final step (d) of allocations is to run \cite{dani:05} on a {\em scaled-down} flow graph. 
The weight on the remaining edges is scaled down by a factor of $\psi(k)$ and hence
for every item node that has not been matched after step (c), the total edge-weight incident on it is at most $one$. Hence after scaling down the utility of any person $v$ in the flow graph is
$\sum_{j} u_{v,j} W_{v,j} \geq \frac{\ln{\ln{k}}}{\ln{k}}\sqrt{\frac{\ln{k}}{k}}\frac{T}{18(1+o(1))}=
\frac{\ln{\ln{k}}}{\sqrt{k\ln{k}}}\frac{T}{18(1+o(1))}$, where $W_{v,j}$ denote the scaled down weight on the edge $(v,j)$. Also, note that the maximum utility of any item in the flow graph is at most $\frac{T}{\lambda}=\frac{T}{26\sqrt{k\ln{k}}}$. Hence, by running the algorithm of \cite{dani:05}, which is a simpler version of Theorem~\ref{thm:matching}, we get the following lemma.

\begin{lemma}
\label{lemma:unsat}
For all persons unmatched by the matching, the total utility received is at least $\Omega(\frac{\ln{\ln{k}}}{\sqrt{k\ln{k}}}T)$ after step (d) with probability at least $1-\frac{2}{k}$.
\end{lemma}
 \begin{proof}
 Let $W_{v,j}$ denote the fractional weight on the scaled down flow graph. Then for every item $j$ in the flow graph, $\sum_{v} W_{v,j} \leq 1$. And for every person $v$ considering the items in the flow graph, $\sum_{j} W_{v,j} u_{v,j} \geq \frac{\ln{\ln{k}}}{\sqrt{k\ln{k}}}\frac{T}{18(1+o(1))}$ with probability at least $1-\frac{2}{k}$. We can now employ the rounding algorithm of \cite{dani:05} which is a simplification of Theorem \ref{thm:matching} without any capacity constraint. We get an integer solution where each person receives a utility of at least $\frac{\ln{\ln{k}}}{\sqrt{k\ln{k}}}\frac{T}{18(1+o(1))}-\frac{T}{\lambda}$, and every item is assigned to at most one person. Since $\lambda = 26\sqrt{k\ln{k}}$, we get the desired result.
\end{proof}
~\\
{\it Theorem~\ref{thm:santa1}
Given any feasible solution to the configuration LP, it can be rounded to a feasible integer solution such that every person gets  at least $\Theta\left(\frac{1}{\sqrt{k\ln{k}}}\right)$ fraction of the optimal utility with probability at least $1-\Theta(\frac{1}{k})$, in polynomial time.
}
\begin{proof}
Any person that is matched by step $2$ of the algorithm {\bf Allocating Big Items} receives a utility of $\frac{T}{\lambda}$. From Lemma~\ref{lemma:unsat}, each person unmatched by the matching receives a utility of $\Omega(\frac{\ln{\ln{k}}}{\sqrt{k\ln{k}}}T)$ with probability at least $1-\frac{2}{k}$. Noting that $\lambda=26\sqrt{k\ln{k}}$, we therefore, get the theorem.
\end{proof}

Thus, our approximation ratio is $\Theta(\frac{1}{\sqrt{k\ln{k}}})$.
This provides an upper bound of $O(\sqrt{k\ln{k}})$ on the integrality gap of the configuration LP for max-min fair allocation, nearly
matching the  lower bound of $\Omega(\sqrt{k})$ due to \cite{bansal:stoc06}. 

\section{Designing Overlay Multicast Networks For Streaming}
\label{section:spaa}
The work of \cite{DBLP:conf/spaa/AndreevMMS03} studies approximation algorithms
for designing a multicast overlay network. We first describe the problem and state the results in \cite{DBLP:conf/spaa/AndreevMMS03} (Lemma \ref{lem:spaa1} and Lemma \ref{lem:spaa2}). Next, we show our main improvement in Lemma \ref{lem:color}.

\subsection{Background} The background text here is largely borrowed from \cite{DBLP:conf/spaa/AndreevMMS03}. An overlay network can be represented as a tripartite digraph $N=(V,E)$. The nodes $V$ are partitioned into sets of entry points called sources ($S$), reflectors ($R$), and edge-servers or sinks ($D$). There are multiple commodities or streams, that must be routed from sources, via reflectors, to the sinks that are designated to serve that stream to end-users. Without loss of generality, we can assume that each source holds a single stream. 
 There is a cost associated with usage of every link and reflector. 
 There are capacity constraints, especially on the reflectors,
that dictate the maximum total bandwidth (in bits/sec) that the reflector is allowed to send. 
To ensure reliability, multiple copies of each stream may be sent to the designated edge-servers.
\begin{table*}
\begin{eqnarray}
\text{min} && \sum_{i \in R} r_{i}z_{i} +\sum_{i \in R}\sum_{k \in S}c_{k,i,k}y_{i,k}+\sum_{i \in R}\sum_{k \in S}\sum_{j \in D}c_{i,j,k}x_{i,j,k} \nonumber\\
s.t. &&\\
&& y_{i,k} \leq z_{i}~~\forall i \in R,~~\forall k \in S \label{eqn:cons1} \\
&& x_{i,j,k} \leq y_{i,k}~~\forall i \in R, ~~\forall j \in D, ~~\forall k \in S \label{eqn:cons2} \\
&& \sum_{k \in S} \sum_{j \in D} x_{i,j,k} \leq F_{i}z_{i}~~\forall i \in R
\label{eqn:fanout} \\
&& \sum_{i \in R}x_{i,j,k}w_{i,j,k} \geq W_{j,k}~~\forall j \in D, \forall k \in S \label{eqn:weight} \\
&& x_{i,j,k} \in \{0,1\}, y_{i,k} \in \{0,1\}, z_{i} \in \{0,1\}
\label{eqn:integrality}
\end{eqnarray}
\caption{Integer Program for Overlay Multicast Network Design}
\label{table:spaa}
\end{table*}

All these requirements can be captured by an integer program. Let us use indicator variable $z_{i}$ for building reflector $i$, $y_{i,k}$ for delivery of $k$-th stream to the $i$-th reflector and $x_{i,j,k}$  for delivering $k$-th stream to the $j$-th sink through the $i$-th reflector. $F_{i}$ denotes the fanout constraint for each reflector $i \in R$. Let $p_{x,y}$ denote the failure probability on any edge (source-reflector or reflector-sink). We transform the probabilities into weights: $w_{i,j,k}=-\log{(p_{k,i}+p_{i,j}-p_{k,i}p_{i,j})}$. Therefore, $w_{i,j,k}$ is the negative log of the probability of a commodity $k$  failing to reach sink $j$ via reflector $i$. On the other hand, if $\phi_{j,k}$ is the minimum required success probability for commodity $k$ to reach sink $j$, we instead use $W_{j,k}=-\log{(1-\phi_{j,k})}$. Thus $W_{j,k}$ denotes the negative log of maximum allowed failure. $r_i$ is the cost for opening the reflector $i$ and $c_{x,y,k}$ is the cost for using the link $(x,y)$ to send commodity $k$. Thus we have the IP (see Table \ref{table:spaa}).

Constraints (\ref{eqn:cons1}) and (\ref{eqn:cons2}) are natural consistency
requirements; constraint (\ref{eqn:fanout}) encodes the fanout restriction.
Constraint (\ref{eqn:weight}), the \emph{weight} constraint, ensures
quality and reliability. Constraint (\ref{eqn:integrality}) is the standard
integrality-constraint that will be relaxed to construct the LP relaxation.

There is an important  stability requirement that is referred as \emph{color constraint} in \cite{DBLP:conf/spaa/AndreevMMS03}. Reflectors are grouped into $m$ color classes, $R=R_{1} \cup R_{2} \cup \ldots \cup R_{m}$. We want each group of reflectors to deliver not more than one copy of a stream into a sink. This constraint translates to
\begin{equation}
\label{eqn:color}
\sum_{i \in R_{l}} x_{i,j,k} \leq 1~\forall j \in D, ~\forall k \in S, ~\forall l \in [m]
\end{equation}

Each group of reflectors can be thought to belong to the same ISP. Thus we want to make sure that a client is served only with one -- the best -- stream possible from a certain ISP. This diversifies the stream distribution over different ISPs and provides stability. If an ISP goes down, still most of the sinks will be served. We refer the LP-relaxation of  integer program
(Table \ref{table:spaa}) with the color constraint (\ref{eqn:color})
as \textbf{LP-Color}.

All of the above is from \cite{DBLP:conf/spaa/AndreevMMS03}.
The work of \cite{DBLP:conf/spaa/AndreevMMS03} uses a two-step rounding procedure and obtains the following guarantee.

First stage rounding: Rounds $z_{i}$ and $y_{i,k}$ for all $i$ and $k$ to decide which reflector should be open and which streams should be sent to a reflector. The results here can be summarized in the following
lemma:

\begin{lemma}
\textbf{(\cite{DBLP:conf/spaa/AndreevMMS03})}
\label{lem:spaa1}
The first-stage rounding algorithm incurs a cost at most a factor of $64\log{|D|}$ higher than the optimum cost, and with high probability violates the weight constraints by at most a factor of $\frac{1}{4}$ and the fanout constraints by at most a factor of $2$. Color constraints are all satisfied.
\end{lemma}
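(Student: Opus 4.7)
The plan is to solve \textbf{LP-Color} for an optimum $(z^{*},y^{*},x^{*})$, scale the reflector- and stream-assignment variables by a factor $\alpha = 64\log|D|$, and perform randomized rounding on $\{z_i\}$ and $\{y_{i,k}\}$ only; the $x_{i,j,k}$ are left fractional and passed to the second-stage rounding as a residual LP (restricted to the pairs with $y_{i,k}=1$, and inheriting all of constraints~(\ref{eqn:weight})--(\ref{eqn:color})). Concretely I would set $z_i = 1$ independently with probability $\min(1,\alpha z^{*}_i)$, and for each open reflector round the $y_{i,k}$'s so that $\Pr[y_{i,k}=1] = \min(1,\alpha y^{*}_{i,k})$ while preserving the consistency constraint $y_{i,k} \le z_i$ pointwise. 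Linearity of expectation then gives expected cost at most $\alpha$ times the LP cost; derandomizing against a pessimistic estimator that dominates the tail bounds below converts this into the deterministic $64\log|D|$ cost factor.

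The weight and fanout bounds I would verify via Chernoff inequalities for sums of negatively correlated $0/1$ random variables. Fix a sink--stream pair $(j,k)$: the residual fractional weight $\sum_i w_{i,j,k}\,(x^{*}_{i,j,k}/y^{*}_{i,k})\,y_{i,k}$ has expectation at least $\alpha W_{j,k}$, and (after the standard preprocessing step in which any single coefficient $w_{i,j,k}\,x^{*}_{i,j,k}$ exceeding $W_{j,k}$ is handled separately by forcing the corresponding $y_{i,k}$ to $1$) each summand, normalized by $W_{j,k}$, lies in $[0,1]$. A lower-tail Chernoff bound with $\alpha = 64\log|D|$ then gives realized weight at least $(1-\tfrac{1}{4})\alpha W_{j,k}$ with probability $1 - |D|^{-\Omega(1)}$, and union-bounding over the $|D|\cdot|S|$ pairs yields the $\tfrac{1}{4}$-violation claim for~(\ref{eqn:weight}). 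Symmetrically, for each reflector $i$ the sum $\sum_{j,k}(x^{*}_{i,j,k}/y^{*}_{i,k})\,y_{i,k}$ has conditional mean $\alpha F_i$ given $z_i = 1$, and an upper-tail Chernoff bound yields at most a factor-$2$ violation of~(\ref{eqn:fanout}) after a union bound over the at-most-$|R|$ opened reflectors.

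The color constraints~(\ref{eqn:color}) are satisfied ``for free'' in the sense relevant to the lemma: stage one does not touch the $x$-variables, so~(\ref{eqn:color}) is simply passed through as a linear constraint in the residual LP that the second stage will later round, and any fractional certificate (e.g.\ a suitably scaled restriction of $x^{*}$) witnesses its feasibility w.h.p.\ once the weight inequalities are relaxed by the $\tfrac{1}{4}$ factor above. The main subtlety---and the step I would handle most carefully---is engineering the joint distribution of the $\{z_i, y_{i,k}\}$ so that the negative-correlation property needed for the above Chernoff arguments holds simultaneously for the weight sums (indexed by $(j,k)$) and the fanout sums (indexed by $i$), while respecting the pointwise coupling $y_{i,k} \le z_i$. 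A clean construction---first round the $z_i$ independently, then given $z$ round the $y_{i,k}$'s independently with conditional marginals $\min(1,\alpha y^{*}_{i,k})/\min(1,\alpha z^{*}_i)$ for $z_i=1$---produces full independence across $i$, which supplies the required negative correlation inside each weight sum and makes the three tail bounds above combine cleanly by union bound.
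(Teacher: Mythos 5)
First, a point of reference: the paper does not prove Lemma~\ref{lem:spaa1} at all --- it is quoted from \cite{DBLP:conf/spaa/AndreevMMS03} as background for the authors' own improvement of the \emph{second} stage (Lemma~\ref{lem:color}). So your argument can only be judged on its own merits, not against an in-paper proof.

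The substantive problem with your sketch is an internal inconsistency in the scaling. You define the residual certificate for the weight constraint as $\sum_i w_{i,j,k}\,(x^{*}_{i,j,k}/y^{*}_{i,k})\,y_{i,k}$, whose expectation is $\approx\alpha\sum_i w_{i,j,k}x^{*}_{i,j,k}\ge \alpha W_{j,k}$ with $\alpha=64\log|D|$, and you run a lower-tail Chernoff bound at mean $\alpha W_{j,k}$. But the fanout sum for reflector $i$ is built from the very same residual values $x^{*}_{i,j,k}/y^{*}_{i,k}$, so its conditional mean given $z_i=1$ is also inflated to $\approx\alpha F_i$ (as you yourself write); a nonnegative sum with mean $64\log|D|\cdot F_i$ cannot be at most $2F_i$ with high probability --- the upper-tail Chernoff bound goes the wrong way there. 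The same $\alpha$-fold inflation would also destroy the color constraints $\sum_{i\in R_l}x_{i,j,k}\le 1$, which you claim are passed through for free. You must choose: either the residual $x$ is inflated (weight is trivially met, but fanout and color blow up by $\Theta(\log|D|)$, not $2$ and $1$), or it is not (fanout and color are fine, but the weight sum then has mean only $W_{j,k}$ and your lower-tail bound gives no high-probability guarantee). The standard resolution is to oversample only the $z_i$ and $y_{i,k}$ by $\alpha$ (this is what costs the $64\log|D|$ factor) while defining the residual certificate as $x'_{i,j,k}=x^{*}_{i,j,k}\cdot[y_{i,k}=1]/\min(1,\alpha y^{*}_{i,k})$, so that $\expect{x'_{i,j,k}}=x^{*}_{i,j,k}$ and the fanout and color constraints are not inflated in expectation. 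The delicate step --- the real content of the lemma --- is then the weight lower tail: with mean only $W_{j,k}$, concentration requires each summand $w_{i,j,k}x'_{i,j,k}$ to be a small fraction of $W_{j,k}$, and this is precisely where the preprocessing of large coefficients interacts with the $\alpha$-fold oversampling of the $y_{i,k}$ (a reflector carrying a constant fraction of $W_{j,k}$ survives with probability $1-|D|^{-\Omega(1)}$ because its $y$-variable is oversampled). Your sketch mentions the bounded-coefficient preprocessing but attaches it to the wrong, inflated mean, so as written the three claimed bounds do not hold simultaneously.
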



Second stage rounding: Rounds $x_{i,j,k}$'s using the open reflectors and streams that are sent to different reflectors in the first stage. The results in this stage can be summarized as follows:

\begin{lemma}
\textbf{(\cite{DBLP:conf/spaa/AndreevMMS03})}
\label{lem:spaa2}
The second-stage rounding incurs a cost at most a factor of $14$ higher than the optimum cost and violates each of fanout, color and weight constraint by at most a factor of $7$.
\end{lemma}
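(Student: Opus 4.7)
The plan is to apply independent randomized rounding to the $x_{i,j,k}$ variables, conditioning on the first-stage integer choices of $z_i$ and $y_{i,k}$. First, I would substitute the first-stage integral values back into \textbf{LP-Color} and re-solve (over the $x$-variables only) to obtain a fractional vector $\tilde x_{i,j,k}\in[0,1]$ that respects (\ref{eqn:cons2}) with the now-integral $y_{i,k}$, satisfies (\ref{eqn:fanout}) with right-hand side $F_i$, (\ref{eqn:weight}) with right-hand side $W_{j,k}$, and (\ref{eqn:color}) with right-hand side $1$. By Lemma~\ref{lem:spaa1}, the first stage has already opened enough reflectors and distributed enough copies of each stream that such an $\tilde x$ exists after only a mild rescaling, and the cost of $\tilde x$ is at most the original LP cost. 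Then, for a scaling constant $\alpha$ (to be tuned in the interval roughly $(1,2]$), independently set $X_{i,j,k}=1$ with probability $\min(1,\alpha\tilde x_{i,j,k})$.

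For the cost, $\E[\text{cost}]=\alpha\cdot\text{OPT}_{\mathrm{LP}}$, so Markov's inequality bounds $\Pr[\text{cost}>14\,\text{OPT}_{\mathrm{LP}}]$ below $\alpha/14<1$. The three combinatorial constraint families are then handled by Chernoff--Hoeffding tail bounds. The \emph{weight} constraint $\sum_i w_{i,j,k}X_{i,j,k}\geq W_{j,k}$ has expectation $\alpha W_{j,k}$, with individual coefficients $w_{i,j,k}$ bounded (they arise from $-\log$ of near-one success probabilities), so the lower-tail Chernoff bound forces the realized sum to be at least $W_{j,k}/7$ except with probability exponentially small in $W_{j,k}$. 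The \emph{fanout} constraint $\sum_{j,k}X_{i,j,k}\leq F_i$ has expectation at most $\alpha F_i$, and the standard upper-tail Chernoff bound controls its violation by a factor of $7$. The \emph{color} constraint $\sum_{i\in R_\ell}X_{i,j,k}\leq1$ has expectation at most $\alpha$, and the upper-tail Chernoff bound bounds the realized value by $7$.

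The main obstacle is the joint union bound across $O(|D||S|)$ weight constraints, $|R|$ fanout constraints, and $O(m|D||S|)$ color constraints. Two complementary techniques would push this through. If $W_{j,k}$ and $F_i$ are $\Omega(\log n)$, a direct union bound suffices because the exponential decay beats the polynomial number of constraints; otherwise, one preprocesses the small-budget constraints by either handling them deterministically (rounding them one-by-one without losing much) or by scaling the Chernoff parameters per-constraint. Alternatively, since each rounded variable $X_{i,j,k}$ appears in only $O(1)$ fanout and weight constraints and $O(m)$ color constraints, the Lovász Local Lemma yields positive probability for all bad events to be avoided simultaneously with bounded dependency degree. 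Once a good outcome exists with positive probability, the method of conditional expectations derandomizes the procedure to give the claimed deterministic guarantee.
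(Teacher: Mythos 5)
This lemma is stated in the paper as a quoted result from \cite{DBLP:conf/spaa/AndreevMMS03}; the paper gives no proof of it (its own contribution is the improvement in Lemma~\ref{lem:color}, whose proof is itself deferred to the full version), so there is no in-paper argument to compare yours against. Judged on its own terms, your proposal has a genuine gap rather than being a plausible reconstruction of the cited proof.

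The gap is exactly the one you flag and then do not close. The lemma promises a \emph{simultaneous} guarantee: cost within a factor of $14$ and \emph{every} fanout, color and weight constraint within a factor of $7$. Independent randomized rounding cannot deliver this. The color constraints (\ref{eqn:color}) have right-hand side $1$, so after scaling by $\alpha\leq 2$ the relevant sums have expectation at most $2$; the upper-tail Chernoff bound then gives only a \emph{constant} failure probability (on the order of a few percent) for exceeding $7$, and there are $\Theta(m|D||S|)$ such constraints, so the union bound fails outright. Your first fix (assuming budgets are $\Omega(\log n)$) is unavailable for the color constraints, whose budget is exactly $1$ by definition. Your second fix (the Lov\'asz Local Lemma) also does not apply: a fanout event for reflector $i$ shares variables with the color and weight events of every (sink, stream) pair served through $i$, so the dependency degree is $\Theta(|D||S|)$, far too large for bad events of constant probability. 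Moreover, the cost bound comes from Markov and so fails with constant probability $\alpha/14$, which must additionally be intersected with all the constraint events. The second-stage rounding in the cited work is instead a deterministic, structured argument in the spirit of filtering and the Shmoys--Tardos GAP rounding: the fractional reflectors serving each (sink, stream) pair are sorted and grouped into buckets of bounded fractional mass and a representative is chosen per bucket, and the constants $7$ and $14$ arise from the bucketing parameters, not from concentration. That grouping idea is the missing ingredient; without it the independent-rounding route does not reach the stated bounds.
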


\subsection{Main Contribution}

Our main contribution is an improvement of the second-stage rounding
through the use of repeated \textbf{RandMove} and by judicious choices of
constraints to drop. Let us call the linear program that remains just at the end of first stage \textbf{LP-Color2}:

\begin{eqnarray*}
&&\min{\sum_{i \in R}\sum_{k \in S}\sum_{j \in D} c_{i,j,k}x_{i,j,k}}\\
&& \text{s.t.}\\
&& \sum_{k \in S}\sum_{j \in D}x_{i,j,k} \leq F_{i}~\forall i \in R ~(\text{Fanout})\\
&& \sum_{i \in R}x_{i,j,k}w_{i,j,k} \geq W_{j,k}~\forall j \in D, \forall k \in S ~(\text{Weight})\\
&& \sum_{i \in R_{l}} x_{i,j,k} \leq 1~\forall j \in D, ~\forall k \in S, ~\forall l \in [m]~(\text{Color})\\
&& x_{i,j,k} \in \{0,1\} ~\forall i \in R,\forall j \in D, \forall k \in S\\
\end{eqnarray*}

We show:

\begin{lemma}
\label{lem:color}
\textbf{LP-Color2} can be efficiently rounded such that the cost and weight constraints are satisfied exactly, fanout constraints are violated at most by additive $1$, and the color constraints are violated at most by additive $3$.
\end{lemma}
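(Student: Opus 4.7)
The plan is to iteratively apply \textbf{RandMove} on the polytope cut out by \textbf{LP-Color2}, maintaining a carefully chosen set of tight constraints and dropping fanout/color constraints at the right moments, very much in the spirit of Lemmas~\ref{lemma:vertex} and \ref{lem:key}. First I would solve \textbf{LP-Color2} to obtain $x^{*}$, initialize $X=x^{*}$, and call a variable $X_{i,j,k}$ \emph{floating} if it lies in $(0,1)$. The invariants I would maintain throughout are: once a variable is rounded, it is fixed; once a weight constraint becomes tight at $W_{j,k}$, it stays tight; once a fanout or color constraint is dropped, it is never reinstated; and $\E[X_{i,j,k}]=x^{*}_{i,j,k}$ at every iteration, which will let me derandomize the cost at the end via conditional expectations, as in Lemma~\ref{lem:sched-cap-cost}.

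At the start of each iteration I project onto the floating coordinates and form the tight-constraint system $\mathcal{S}$ consisting of (a) every currently-tight weight constraint, (b) every currently-tight fanout constraint for a reflector with at least two floating variables, and (c) every currently-tight color constraint whose color class has at least four floating variables. The remaining inequality constraints (non-tight weights $\geq W_{j,k}$, non-tight fanouts, non-tight colors, and the $[0,1]$ box) are carried along implicitly and simply limit the step size in \textbf{RandMove}. The cost constraint is \emph{not} included in $\mathcal{S}$: it is maintained only in expectation.

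The key technical step, analogous to Lemma~\ref{lemma:vertex}, is a vertex-avoidance claim: the current floating $X$ is not a vertex of the polytope defined by $\mathcal{S}$, so that \textbf{RandMove} can make progress. I would prove this by exploiting the three-index structure: each floating $x_{i,j,k}$ contributes to exactly one weight, one fanout and one color constraint, yielding $v \geq 2 n_w$, $v \geq 2 n_f$ and $v \geq 4 n_c$ after the dropping rule, where $v$ is the number of floating variables and $n_w, n_f, n_c$ are the numbers of retained tight constraints. When the sum of retained constraints meets $v$, all these inequalities must be simultaneously tight, which pins the floating subgraph into very restricted extremal configurations (every retained weight has exactly two floating variables, every retained fanout has exactly two, every retained color class has exactly four); in each such configuration one can exhibit a linear dependence among the retained rows by aggregating the weight equalities against the fanout/color equalities through the shared structure, just as the ``total assignment = total capacity'' identity rescued the argument in Lemma~\ref{lemma:vertex}. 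The main obstacle, and the place where most care is needed, is handling weight constraints that would otherwise reduce to a single floating variable: such a constraint pins that variable to a fixed (possibly non-integer) value, so the algorithm must immediately resolve such a variable before the counting argument is applied, either by forcing it to $\{0,1\}$ while respecting the weight inequality or by recognizing that the pinned weight row is already linearly dependent on the other retained rows.

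The approximation guarantees then follow from the invariants. \emph{Cost}: expectation-preservation plus derandomization by conditional expectations yields cost $\leq C$. \emph{Weight}: no weight constraint is ever dropped, and a tight weight stays tight, so the final weight is $\geq W_{j,k}$ for every $(j,k)$. \emph{Fanout}: a fanout constraint is dropped only once its reflector has at most one floating variable remaining, so the final fanout exceeds the bound $F_{i}$ by at most $1$. \emph{Color}: a color constraint is dropped only once at most three floating variables remain in that color class, so the final color sum exceeds $1$ by at most $3$. This matches the stated guarantees exactly.
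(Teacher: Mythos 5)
The paper defers the proof of Lemma~\ref{lem:color} to its full version, so there is no in-paper argument to compare against; your proposal has to stand on its own. Its overall shape --- iterate \textbf{RandMove}, keep the cost only in expectation and derandomize as in Lemma~\ref{lem:sched-cap-cost}, and drop fanout/color constraints once their floating support is small enough that integral completion overshoots by at most the claimed additive amounts --- is exactly in the spirit of Lemmas~\ref{lemma:vertex} and~\ref{lem:key}, and your drop thresholds (fanout at $2$ floating variables, color at $4$) do correctly correspond to the claimed violations of $+1$ and $+3$.

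The gap is in the vertex-avoidance count, which is the heart of any such argument and which, as you state it, does not close. Each floating variable lies in exactly one weight, one fanout and one color constraint, so the incidence count gives $n_w \le v/2$, $n_f \le v/2$, $n_c \le v/4$ under your thresholds (granting your fix for weight rows with a single floating variable), hence only $n_w+n_f+n_c \le 5v/4$ --- which does not contradict the vertex condition $n_w+n_f+n_c \ge v$. In Lemma~\ref{lemma:vertex} the analogous count works because each variable lies in constraints of only two families, so splitting its unit budget $1/2$--$1/2$ covers both thresholds; with three families you need thresholds $a,b,c$ with $1/a+1/b+1/c \le 1$, and $(2,2,4)$ gives $5/4$. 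A configuration in which every retained weight and fanout row has exactly two floating variables and every retained color row exactly four is overdetermined and can perfectly well have the current point as a vertex, so the algorithm can stall before reaching the claimed guarantees. The fallback you invoke --- a linear dependence among the retained rows, as in the ``total assignment equals total capacity'' identity --- does not transfer: the weight rows carry arbitrary real coefficients $w_{i,j,k}$, so no fixed aggregation of weight rows against fanout or color rows is available in general. To rescue the claimed bounds one must exploit structure you have not used (for instance, that for fixed $(j,k)$ the color classes partition the support of that weight row, so a tight weight row and its tight color sub-rows are far from independent), or stage the rounding differently. Separately, ``weight satisfied exactly'' needs more than your endgame remark: a tight covering row with real coefficients and one remaining floating variable forces that variable up to $1$, which breaks the expectation-preservation you rely on for the cost bound and adds to the very fanout and color counts you have already budgeted to their limits.
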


The proof is very similar to Theorem~\ref{thm:matching}. Note that, here instead of having capacity constraints, we have fanout constraints. Weight constraints correspond to load constraints in Theorem~\ref{thm:matching}, but now they provide lower bounds. Moreover, the color constraints can be thought of as additional capacity constraints imposed on a set of reflectors. This constitutes the main change from Theorem~\ref{thm:matching}, and we need new conditions to drop color constraints $(D2'')$. The color constraints being all disjoint help us in the rounding.

\begin{proof}
Let $x^{*}_{i,j,k} \in [0,1]$ denote the fraction of stream generated from source $k \in S$ reaching destination
$j \in D$ routed through reflector $i \in R$ after the first stage of rounding. Initialize $X=x^*$. The algorithm
consists of several iterations. the random value at the end of iteration $h$ is denoted by $X^{h}$. Each iteration $h$ conducts a randomized update using {\bf RandMove} on the polytope of a linear system constructed from a subset of constraints of {\bf LP-Color2}. Therefore by induction on $h$, we will have for all $(i,j,h)$ that $\expect{X_{i,j}^{h}}=x^{*}_{i,j}$. Thus the cost constraint is maintained exactly on expectation. The entire procedure can be derandomized by the method of conditional probabilities, yielding the required bounds on the cost.

Let $R$ and $SD$ denote the set of reflectors and (source, destination) pairs respectively.
Suppose we are at the beginning of some iteration $(h+1)$ of the overall algorithm and currently looking
at the values $X_{i,j,k}^{h}$. We will maintain two invariants:
\begin{description}
\item[(I1'')] Once a variable $x_{i,j,k}$ gets assigned to $0$ or $1$, it is never changed;
\item[(I2'')] Once a constraint is dropped in some iteration, it is never reinstated.
\end{description}
Iteration $(h+1)$ of rounding consists of three main steps:

\begin{enumerate}
\item Since we aim to maintain ({\bf I1''}), let us remove all $X_{i,j,k}^{h} \in \{0,1\}$; i.e.,
we project $X^{h}$ to those coordinates $(i,j,k)$ for which $X_{i,j,k}^{h} \in (0,1)$, to obtain the
current vector $Y$ of floating (yet to be rounded) variables; let $\mathcal{S}\equiv(A_{h}Y=u_{h})$ denote
the current linear system that represents {\bf LP-Color2}. In particular, the fanout constraint
for a reflector in $\mathcal{S}$ is its residual fanout $F'_{i}$; i.e., $F_{i}$ minus the
number of streams that are routed through it.
\item Let $v$ denote the number of floating variables, i.e., $Y\in (0,1)^{v}$. We now drop the following constraint:
    \begin{description}
    \item[(D1'')] Drop fanout constraint for degree $1$ reflector denoted $R_1$, i.e, reflectors with only one floating variable associated with it. For any degree $2$ reflectors denoted $R_2$, if
        it has a tight fanout of $1$ drop its fanout constraint.
    \item[(D2'')] Drop color constraint for a group of reflectors $R_{l}$, if they have at most four floating variables associated with them.
    \end{description}
    \end{enumerate}
Let $\mathcal{P}$ denote the polytope defined by this reduced system of constraints.
A key claim is that $Y$ is not a vertex of $\mathcal{P}$ and thus we can apply
{\bf RandMove} and make progress either by rounding a new variable or by dropping
a new constraint. We count the number of variables $v$ and the number of tight constraints
 $t$ separately. We have $$t=\sum_{i \in R \setminus R_1} 1+\sum_{k \in S}\sum_{j \in D}(l_{k,j}+1),$$ where
 $l_{j,k}$ is the number of tight color constraints for the stream generated at source $k$
 and to be delivered to the  destination $j$.  We further have $v \geq \sum_{i \in R} (F_i +1)$, and that
$v \geq \sum_{k \in S, \j \in D, l_{k,j} > 0} 4l_{k,j} + \sum_{k \in S, \j \in D, l_{k,j}= 0} 2$. Thus by averaging, $$v \geq \frac{\sum_{i \in R} (F_i +1)}{2}+  \sum_{k \in S, \j \in D, l_{k,j} > 0} 2l_{k,j}+\sum_{k \in S, \j \in D, l_{k,j}= 0} 1.$$ A moment's reflection shows that the
 system can become underdetermined only if there is no color constraint associated with a stream $(j,k)$, each reflector $i$ has two floating variables associated with it with total contribution $1$ towards fanout and each stream $(j,k)$ is routed fractionally through two reflectors. But in this situation all the fanout constraints are dropped violating fanout at most by an additive one and making the system underdetermined once again.
The color constraints are dropped only when there are less than four
floating variables associated with that group  of reflectors; hence, the color constraints can get violated at most by an additive $3$. The fanout constraint is dropped only for singleton reflectors or degree-2 reflectors with fanout equaling $1$. Hence the fanout is violated only by an additive
excess of $1$. The weight constraint is never dropped, and
is hence maintained exactly.
\end{proof}

\section{Appendix}
Here we give a proof of Theorem~\ref{theorem:neg1}. This establishes the negative correlation property of bipartite dependent rounding on random variables defined on multiple vertices and used in Section~\ref{sec:max-min}.

{\it Theorem \ref{theorem:neg1}.
Define an indicator random variable $z_{j}$ for each item $j \in B$ with $m_{j} < 1$, such that $z_{j}=1$
 if item $j$ is matched by the matching. Then, the indicator random variables $\{z_{j}\}$ are negatively correlated.
}
 \begin{proof}
 Consider any collection of items $j_1,j_2,\ldots,j_{t}$. Let $b=1$ (the proof for the case $b=0$ is identical).
 Let $y_{i,j,k}$ denote the value of $y_{i,j}$ at the beginning of the $k$-th iteration of bipartite dependent rounding.
 Define, $z_{j,k}=\sum_{i, (i,j) \in M} y_{i,j,k}$. Clearly, $z_{j}=\sum_{i, (i,j) \in M} y_{i,j,|M|+1}$. We will show that
 \begin{equation}
 \label{equation:relation}
 \forall k, \expect{\prod_{i=1}^{t} z_{j_i,k}} \leq \expect{\prod_{i=1}^{t} z_{j_i,k-1}}
 \end{equation}

 Thus, we will have
 \begin{eqnarray*}
 \prob{\bigwedge_{i=1}^{t} z_{j_i}=1} &=&\expect{\prod_{i=1}^{t} z_{j_i,|M|+1}} \\
 && \leq \expect{\prod_{i=1}^{t} z_{j_i,1}}\\
 &=&\prod_{i=1}^{t} \sum_{v} y_{v,j_{i},1}= \prod_{i=1}^{t} m_{j_{i}}= \prod_{i=1}^{t} \prob{z_{j_{i}}=1}
 \end{eqnarray*}

 We now prove (\ref{equation:relation}) for a fixed $k$. Note that any vertex that is not the end point of
 the maximal path or the cycle on which dependent rounding is applied in the $k-1$-st round, retains its previous
 $z$ value. There are three cases to consider:

{\bf Case 1}: {\it Two vertices among $j_{1},j_{2},\ldots,j_{t}$ have their values modified}. Let these vertices be
say $j_1$ and $j_2$. Therefore, these two vertices must be the end points of the maximal path on which dependent rounding
is applied on the $k-1$-st round. The path length must be even. Let $B(j_1,j_2, \alpha,\beta)$ denote the event that the
jobs $\{j_1,j_2\}$ have their values modified in the following probabilistic way:
\begin{eqnarray*}
(z_{j_1,k}, z_{j_2,k}) = \begin{cases} (z_{j_1,k-1}+\alpha, z_{j_2, k-1}-\alpha) & \text{ with probability } \frac{\beta}{\alpha+\beta}\\
  (z_{j_1,k-1}- \beta,z_{j_2,k-1}+\beta) & \text{ with probability } \frac{\alpha}{\alpha+\beta} \end{cases}
\end{eqnarray*}

Thus
\begin{eqnarray*}
&& \expect{\prod_{i=1}^{t} z_{j_i,k}|\forall i \in [1,t],~ z_{j_{i},k-1}=a_{j_{i}} \wedge B(j_1,j_2,\alpha,\beta)}\\
&=& \expect{z_{j_1,k}z_{j_2,k}|\forall i \in [1,t],~ z_{j_i,k-1}=a_{j_{i}} \wedge B(j_1,j_2,\alpha,\beta)}\prod_{i=3}^{t}a_{j_{i}}
\end{eqnarray*}

The above expectation can be written as $(\eta+\gamma)\Pi_{i=3}^{t}a_{j_{i}}$, where
$$\eta= (\beta/(\alpha+\beta))(a_{j_1}+\alpha)(a_{j_2}-\alpha),\, \text{and}$$
$$ \gamma=(\alpha/(\alpha+\beta))(a_{j_1}-\beta)(a_{j_2}+\beta).$$

Now, it can be easily seen that $\eta+\gamma \leq a_{j_1}a_{j_2}$. Thus for any fixed $j_1,j_2$ and for any fixed
$(\alpha,\beta)$, and for fixed values $a_{f}$ the following holds:

$$\expect{\prod_{i=1}^{t}z_{j_i,k}|\forall i \in [1,t],~ z_{j_i,k-1}=a_{j} \wedge B(j_1,j_2,\alpha,\beta)} \leq \prod_{i=1}^{t}a_{j}.$$

Hence, $\expect{\prod_{i=1}^{t} z_{j_i,k}} \leq \expect{\prod_{i=1}^{t} z_{j_i,k-1}}$ here.

{\bf Case 2}: {\it One vertex among $j_{1},j_{2},\ldots,j_{t}$ has its value modified}. Let the vertex be
$j_1$ say. Therefore, this vertex must be the end point of the maximal path on which dependent rounding
is applied on the $(k-1)$-st round. The path length must be odd. Let $B(j_1,\alpha,\beta)$ denote the event that the job $j_1$ has its value modified in the following probabilistic way:
\begin{eqnarray*}
z_{j_1,k} = \begin{cases} z_{j_1,k-1}+\alpha & \text{ with probability } \frac{\beta}{\alpha+\beta}\\
  z_{j_1,k-1}- \beta & \text{ with probability } \frac{\alpha}{\alpha+\beta} \end{cases}
\end{eqnarray*}

Thus,
$$\expect{z_{j_1,k}|\forall i \in [1,t],~ z_{j_i,k-1}=a_{j} \wedge B(j_1,\alpha,\beta)}=a_{j_1}.$$ Since the values of $z_{j_i}$, $i \in [2,t]$ remains unchanged and the above equation holds for any $j_1,\alpha, \beta$, we have $\expect{\prod_{i=1}^{t} z_{j_i,k}} \leq
\expect{\prod_{i=1}^{t} z_{j_i,k-1}}$.

\textbf{Case 3}: {\it None among $j_{1},j_{2},\ldots,j_{t}$ has its value modified}.

In this case, the value of $z_{j_i,k}$'s, $i \in [1,t]$, do not change. Hence, $\expect{\prod_{i=1}^{t} z_{j_i,k}} \leq \expect{\prod_{i=1}^{t} z_{j_i,k-1)}}$.
 \end{proof}

\medskip \noindent \textbf{Acknowledgment.} We thank Chandra Chekuri,
R.\ Ravi, Mohit Singh, and Jan Vondr\'{a}k for helpful discussions on 
dependent and iterative rounding. We also thank the anonymous reviewers for detailed comments that significantly helped us to improve the presentation.

\bibliographystyle{abbrv}
\bibliography{af2009,af2}

\begin{thebibliography}{10}

\bibitem{singh:stoc07}


\bibitem{AS}
A.~Ageev and M.~Sviridenko.
\newblock Pipage rounding: a new method of constructing algorithms with proven
  performance guarantee.
\newblock {\em Journal of Combinatorial Optimization}, 8(3):307--328, 2004.

\bibitem{DBLP:conf/spaa/AndreevMMS03}
K.~Andreev, B.~Maggs, A.~Meyerson, and R.~Sitaraman.
\newblock Designing overlay multicast networks for streaming.
\newblock In {\em SPAA}, pages 149--158, 2003.

\bibitem{arora:newrounding}
S.~Arora, A.~Frieze, and H.~Kaplan.
\newblock A new rounding procedure for the assignment problem with applications
  to dense graph arrangement problems.
\newblock {\em Mathematical Programming}, pages 1--36, 2002.

\bibitem{asadpour-feige-saberi}
A.~Asadpour, U.~Feige, and A.~Saberi.
\newblock Santa claus meets hypergraph matchings.
\newblock {\em ACM Trans. Algorithms}, 8(3):24:1--24:9, July 2012.

\bibitem{asadpour:stoc07}
A.~Asadpour and A.~Saberi.
\newblock An approximation algorithm for max-min fair allocation of indivisible
  goods.
\newblock {\em SIAM Journal on Computing}, 39(7):2970--2989, 2010.

\bibitem{azar-epstein:lp}
Y.~Azar and A.~Epstein.
\newblock Convex programming for scheduling unrelated parallel machines.
\newblock In {\em Proc. of the ACM Symposium on Theory of Computing}, pages
  331--337. ACM, 2005.

\bibitem{DBLP:journals/mp/BansalKKNP13}
N.~Bansal, R.~Khandekar, J.~K{\"o}nemann, V.~Nagarajan, and B.~Peis.
\newblock On generalizations of network design problems with degree bounds.
\newblock {\em Math. Program.}, 141(1-2):479--506, 2013.

\bibitem{bansal:stoc06}
N.~Bansal and M.~Sviridenko.
\newblock {The Santa Claus problem}.
\newblock In {\em {STOC '06: Proceedings of the Thirty-eighth Annual ACM
  Symposium on Theory of Computing}}, pages 31--40, 2006.

\bibitem{bateni:stoc09}
M.~Bateni, M.~Charikar, and V.~Guruswami.
\newblock Maxmin allocation via degree lower-bounded arborescences.
\newblock In {\em STOC '09: Proceedings of the 41st annual ACM Symposium on
  Theory of computing}, volume~39, pages 543--552, 2009.

\bibitem{dani:05}
I.~Bez\'{a}kov\'{a} and V.~Dani.
\newblock Allocating indivisible goods.
\newblock {\em SIGecom Exch.}, 5(3):11--18, 2005.

\bibitem{julia:focs09}
D.~Chakrabarty, J.~Chuzhoy, and S.~Khanna.
\newblock On allocating goods to maximize fairness.
\newblock In {\em FOCS '09: 50th Annual IEEE Symposium on Foundations of
  Computer Science}, 2009.

\bibitem{DBLP:journals/siamcomp/ChekuriK05}
C.~Chekuri and S.~Khanna.
\newblock A polynomial time approximation scheme for the multiple knapsack
  problem.
\newblock {\em SIAM J. Comput.}, 35(3):713--728, 2005.

\bibitem{DBLP:conf/focs/ChekuriVZ10}
C.~Chekuri, J.~Vondr{\'a}k, and R.~Zenklusen.
\newblock Dependent randomized rounding via exchange properties of
  combinatorial structures.
\newblock In {\em FOCS}, pages 575--584, 2010.

\bibitem{chekuri:soda11}
C.~Chekuri, J.~Vondr\'{a}k, and R.~Zenklusen.
\newblock Multi-budgeted matchings and matroid intersection via dependent
  rounding.
\newblock In {\em {Proc.\ ACM-SIAM Symposium on Discrete Algorithms}}, pages
  1080--1097, 2011.

\bibitem{zhang:faw09}
Z.~Chi, G.~Wang, X.~Liu, and J.~Liu.
\newblock Approximating scheduling machines with capacity constraints.
\newblock In {\em FAW '09: Proceedings of the Third International Frontiers of
  Algorithmics Workshop}, pages 283--292, 2009.
\newblock {Corrected version available as arXiv:0906.3056}.

\bibitem{Chuzhoy09}
J.~Chuzhoy and P.~Codenotti.
\newblock Resource minimization job scheduling.
\newblock In {\em APPROX}, 2009.

\bibitem{chuzhoy-naor:hard-cap}
J.~Chuzhoy and J.~S. Naor.
\newblock Covering problems with hard constraints.
\newblock {\em {SIAM Journal on Computing}}, 36:498--515, 2006.

\bibitem{ebenlendr:soda08}
T.~Ebenlendr, M.~K\v{r}\'{c}al, and J.~Sgall.
\newblock Graph balancing: a special case of scheduling unrelated parallel
  machines.
\newblock In {\em {SODA '08: Proceedings of the Nineteenth annual ACM-SIAM
  Symposium on Discrete Algorithms}}, pages 483--490, 2008.

\bibitem{maximilian:trans85}
M.~M. Etschmaier and D.~F.~X. Mathaisel.
\newblock Airline scheduling: An overview.
\newblock {\em Transportation Science}, 19(2):127--138, 1985.

\bibitem{ekmsw:soda04}
S.~Eubank, V.~S.~A. Kumar, M.~V. Marathe, A.~Srinivasan, and N.~Wang.
\newblock Structural and algorithmic aspects of massive social networks.
\newblock In {\em ACM-SIAM Symposium on Discrete Algorithms}, pages 711--720,
  2004.

\bibitem{feige:soda08}
U.~Feige.
\newblock On allocations that maximize fairness.
\newblock In {\em SODA '08: Proceedings of the Nineteenth annual ACM-SIAM
  Symposium on Discrete Algorithms}, pages 287--293, 2008.

\bibitem{fleischer:01}
L.~Fleischer, K.~Jain, and D.~P. Williamson.
\newblock An iterative rounding 2-approximation algorithm for the element
  connectivity problem.
\newblock In {\em FOCS' 01: Proceedings of the 42nd Annual IEEE Symposium on
  Foundations of Computer Science}, pages 339--347, 2001.

\bibitem{ghkks:hard-cap}
R.~Gandhi, E.~Halperin, S.~Khuller, G.~Kortsarz, and A.~Srinivasan.
\newblock An improved approximation algorithm for vertex cover with hard
  capacities.
\newblock {\em {Journal of Computer and System Sciences}}, 72:16--33, 2006.

\bibitem{gkps:dep-round}
R.~Gandhi, S.~Khuller, S.~Parthasarathy, and A.~Srinivasan.
\newblock Dependent rounding and its applications to approximation algorithms.
\newblock {\em Journal of the ACM}, 53:324--360, 2006.

\bibitem{esa09:ravi}
F.~Grandoni, R.~Ravi, and M.~Singh.
\newblock Iterative rounding for multi-objective optimization problems.
\newblock In {\em {ESA '09: Proceedings of the 17th Annual European Symposium
  on Algorithms}}, 2009.

\bibitem{Gupta}
A.~Gupta, R.~Krishnaswamy, A.~Kumar, and D.~Segev.
\newblock Scheduling with outliers.
\newblock In {\em {Proc.\ APPROX}}, 2009.
\newblock {Full version available as arXiv:0906.2020}.

\bibitem{jain:comb}
K.~Jain.
\newblock A factor 2 approximation algorithm for the generalized steiner
  network problem.
\newblock {\em Combinatorica}, 21:39--60, 2001.

\bibitem{klrtvv}
R.~M. Karp, F.~T. Leighton, R.~L. Rivest, C.~D. Thompson, U.~V. Vazirani, and
  V.~V. Vazirani.
\newblock Global wire routing in two-dimensional arrays.
\newblock {\em Algorithmica}, 2:113--129, 1987.

\bibitem{kmps:unified-sched}
V.~S.~A. Kumar, M.~V. Marathe, S.~Parthasarathy, and A.~Srinivasan.
\newblock A unified approach to scheduling on unrelated parallel machines.
\newblock {\em {Journal of the ACM}}, 56(5), 2009.

\bibitem{conf/stoc/LauNSS07}
L.~C. Lau, J.~Naor, M.~R. Salavatipour, and M.~Singh.
\newblock Survivable network design with degree or order constraints.
\newblock In {\em STOC '07: Proceedings of the thirty-ninth annual ACM
  Symposium on Theory of computing}, pages 651--660, 2007.

\bibitem{lau-ravi-singh:book}
L.~C. Lau, R.~Ravi, and M.~Singh.
\newblock {\em {Iterative Methods in Combinatorial Optimization}}.
\newblock Cambridge University Press, 2011.

\bibitem{LST}
J.~K. Lenstra, D.~B. Shmoys, and E.~Tardos.
\newblock Approximation algorithms for scheduling unrelated parallel machines.
\newblock {\em Mathematical Programming}, 46:259--271, 1990.

\bibitem{PTW01}
M.~P\'{a}l, E.~Tardos, and T.~Wexler.
\newblock Facility location with nonuniform hard capacities.
\newblock In {\em {Proc.\ Forty-Second Annual Symposium on Foundations of
  Computer Science}}, pages 329--338, 2001.

\bibitem{ps:edge-col}
A.~Panconesi and A.~Srinivasan.
\newblock Randomized distributed edge coloring via an extension of the
  {Chernoff}-{Hoeffding} bounds.
\newblock {\em SIAM J. Comput.}, 26(2):350--368, 1997.

\bibitem{papad:focs00}
C.~H. Papadimitriou and M.~Yannakakis.
\newblock On the approximability of trade-offs and optimal access of web
  sources.
\newblock In {\em FOCS '00: Proceedings of the 41st Annual Symposium on
  Foundations of Computer Science}, pages 86--92, 2000.

\bibitem{raghavan-thompson}
P.~Raghavan and C.~D. Thompson.
\newblock Randomized rounding: a technique for provably good algorithms and
  algorithmic proofs.
\newblock {\em Combinatorica}, 7:365--374, 1987.

\bibitem{Rushmeier:95}
R.~Rushmeier, K.~Hoffman, and M.~Padberg.
\newblock Recent advances in exact optimization of airline scheduling problems.
\newblock Technical report, {George Mason University}, 1995.

\bibitem{shmoys-tardos:gap}
D.~B. Shmoys and E.~Tardos.
\newblock An approximation algorithm for the generalized assignment problem.
\newblock {\em Mathematical Programming}, 62:461--474, 1993.

\bibitem{skutella:wjCj}
M.~Skutella.
\newblock Convex quadratic and semidefinite relaxations in scheduling.
\newblock {\em Journal of the ACM}, 46(2):206--242, 2001.

\bibitem{srin:level-sets}
A.~Srinivasan.
\newblock Distributions on level-sets with applications to approximation
  algorithms.
\newblock In {\em IEEE Symposium on Foundations of Computer Science}, pages
  588--597, 2001.

\bibitem{svensson:santa}
O.~Svensson.
\newblock Santa claus schedules jobs on unrelated machines.
\newblock {\em SIAM Journal on Computing}, 41(5):1318--1341, 2012.

\bibitem{tsai:siam92}
L.~Tsai.
\newblock Asymptotic analysis of an algorithm for balanced parallel processor
  scheduling.
\newblock {\em {SIAM J.\ Comput.}}, 21(1):59--64, 1992.

\bibitem{gerhard:dis05}
G.~Woeginger.
\newblock A comment on scheduling two parallel machines with capacity
  constraints.
\newblock {\em Discrete Optimization}, 2(3):269--272, 2005.

\bibitem{yang:dis03}
H.~Yang, Y.~Ye, and J.~Zhang.
\newblock An approximation algorithm for scheduling two parallel machines with
  capacity constraints.
\newblock {\em Discrete Appl. Math.}, 130(3):449--467, 2003.

\bibitem{Zenklusen:2012}
R.~Zenklusen.
\newblock Matroidal degree-bounded minimum spanning trees.
\newblock In {\em Proceedings of the Twenty-third Annual ACM-SIAM Symposium on
  Discrete Algorithms}, SODA '12, pages 1512--1521. SIAM, 2012.

\bibitem{Zhang:01}
J.~Zhang and Y.~Ye.
\newblock {On the Budgeted MAX-CUT problem and its Application to the
  Capacitated Two-Parallel Machine Scheduling}.
\newblock Technical report, {Working Paper, The University of Iowa}, 2001.

\end{thebibliography}



\end{document}